\def\spacingset#1{\renewcommand{\baselinestretch}%
{#1}\small\normalsize} 
\newcommand{\One}[1]{\mathbbm{1}\left\{#1\right\}}
\def\independenT#1#2{\mathrel{\rlap{$#1#2$}\mkern2mu{#1#2}}}
\newcommand\independent{\protect\mathpalette{\protect\independenT}{\perp}}
\newcommand{\eqd}{\overset{d.}{=}}
\newcommand{\mc}{\mathcal}
\newcommand{\mb}{\mathbf}
\newcommand{\bs}{\boldsymbol}
\newcommand{\bmu}{\bs{\mu}}
\newcommand{\bOmega}{\bs{\Omega}}
\renewcommand{\P}{\mathbf{P}}
\newcommand{\tp}{^{\top}}
\newcommand{\ds}{\displaystyle}
\newcommand{\bbR}{\mathbb{R}}
\newcommand{\bbE}{\mathbb{E}}
\newcommand{\bbP}{\mathbb{P}}
\newcommand{\al}{\alpha}
\newcommand{\rmd}{\,{\rm d}\,}
\newcommand\restr[2]{{
  \left.\kern-\nulldelimiterspace 
  #1 
  \littletaller 
  \right|_{#2} 
  }}
\newcommand{\littletaller}{\mathchoice{\vphantom{\big|}}{}{}{}}
\newcommand{\indp}{\independent}
\newcommand{\nameVV}{$M^1 P_1$\xspace}
\newcommand{\nameDP}{Bonf\xspace}
\newtheorem{lemma}{Lemma}
\newtheorem{condition}{Condition}
\newtheorem{theorem}{Theorem}
\newtheorem{proposition}{Proposition}
\newtheorem{remark}{Remark}
\numberwithin{equation}{section}
 \title{
 Testing Multivariate Conditional Independence Using Exchangeable Sampling and Sufficient Statistics
}
\author[1]{Xiaotong Lin}
\author[1]{Jie Xie}
\author[2]{Fangqiao Tian}
\author[1]{Dongming Huang}
\affil[1]{Department of Statistics and Data Science, National University of Singapore}
\affil[2]{Department of Mathematics, National University of Singapore}
\date{}
\begin{document}
\maketitle


\begin{abstract}
We consider testing multivariate conditional independence between a response $Y$ and a covariate vector $X$ given additional variables $Z$. We introduce the Multivariate Sufficient Statistic Conditional Randomization Test (MS-CRT), which generates exchangeable copies of $X$ by conditioning on sufficient statistics of $\mathbb{P}(X\mid Z)$. MS-CRT requires no modeling assumption on $Y$ and accommodates any test statistics, including those derived from complex predictive models. It relaxes the assumptions of standard conditional randomization tests by allowing more unknown parameters in $\mathbb{P}(X\mid Z)$ than the sample size. MS-CRT avoids multiplicity corrections and effectively detects joint signals, even when individual components of $X$ have only weak effects on $Y$. Our method extends to group selection with false discovery rate control. We develop efficient implementations for two important cases where $\mathbb{P}(X, Z)$ is either multivariate normal or belongs to a graphical model. For normal models, we establish the minimax rate optimality.  For graphical models, we demonstrate the superior performance of our method compared to existing methods through comprehensive simulations and real-data examples.

\end{abstract}

\noindent{\bf Keywords}: 
conditional independence,
conditional randomization test,  
model-X framework, 
high-dimensional inference,
minimax hypothesis testing,
graphical model.



\section{Introduction}\label{sec: intro}
\label{sec:Introduction}
The question of whether a specific set of variables influences a response of interest, after accounting for other observed variables, is important in many modern research areas, such as economics, finance, and biology. 
For example, in genomic studies, researchers aim to identify genes associated with a particular disease, while in financial econometrics, analysts seek to determine which market indicators drive investment returns.
In high-dimensional settings where the number of covariates is comparable to or exceeds the sample size, this question becomes particularly challenging as classical inference methods often become unreliable. This motivates the development of new statistical methods.

To be concrete, we consider the problem of testing \textit{multivariate conditional independence} (MCI). 
Let $Y$ denote the response variable and $X$ be a $p$-dimensional covariate vector. Suppose $X$ is divided into two sub-vectors, $X_{\mathcal{T}}$ and $X_{\mathcal{S}}$, where $\mathcal{T}$ is a group of covariates of interest and $\mathcal{S}=\mathcal{T}^c$ is the remaining variables. 
The MCI null hypothesis is stated as
\begin{equation}\label{eq: CI-T}
H_0: {Y}\indp {X}_{\mc{T}}\mid {X}_{\mc{S}}
\end{equation}
which asserts that the dependence between $Y$ and $X$ is fully through $X_{\mc{S}}$, thereby $X_{\mc{T}}$ is unimportant for $Y$. 
We focus on the case where $|\mc{T}|>1$, i.e., multiple covariates are jointly tested for conditional independence with the response. 
Examples of applications are discussed in Section~\ref{sec: literature CRT}. 
In problems like these examples, three major challenges arise:
\begin{enumerate}
    \item
In the high-dimensional setting, the sample size is often too small to guarantee the asymptotic validity of classical testing procedures.
\item 
In practice, the dependence of $Y$ on $(X_{\mc{T}},X_{\mc{S}})$ can be so complicated that no theoretically solid inference method exists. 
\item 
When $\mc{T}$ contains many variables, their collective influence may be substantial, yet each individual effect can be too weak to detect using any method. 
\end{enumerate}

While the conditional randomization test (CRT) is proven valuable in addressing the first two challenges \citep{candes_panning_2017}, its direct application to testing MCI encounters significant limitations. 
First, the original CRT requires either full knowledge or an accurate estimation of the conditional distribution $P\left(X_{\mathcal{T}} \mid X_{\mathcal{S}}\right)$, which is rarely satisfied in practice. 
Second, most existing CRT implementations focus on univariate $X_{\mathcal{T}}$, and their direct extension to the multivariate settings requires multiple testing adjustments, which can result in a substantial loss of power.
For example, in the low-dimensional linear regression experiment of Section~\ref{sec: Simulation}, with $|T|=8$, the F-test achieves over 60\% power at a certain signal strength, whereas the CRT with Bonferroni adjustment yields below 15\% power. This stark contrast highlights the pressing need for more efficient strategies to test MCI.

\subsection{Our contributions}

To overcome the limitations of classical CRTs in testing MCI, we propose the multivariate sufficient statistic CRT (MS-CRT). 
By conditioning on sufficient statistics, the MS-CRT relaxes the requirement for precise knowledge of the conditional distribution $\mc{P}(X_{\mc{T}}\mid X_{\mc{S}})$. 
Furthermore, it bypasses the need for multiple testing adjustments and achieves high power by capturing joint signals. 
The MS-CRT imposes no modelling assumptions on the response $Y$ and is particularly well-suited to cases where the joint distribution of the covariates lies in a parametric model.

While the idea of the MS-CRT is straightforward, its implementation demands model-specific derivations, which are highly nontrivial. 
We develop efficient algorithms to generate exchangeable samples of $X_{\mc{T}}$ in two important cases: 
\begin{enumerate}
    \item For $X$ following a normal distribution, we introduce the Multivariate Normal CRT (MVN-CRT) and establish its minimax rate optimality. 
    \item For $X$ following a graphical model, we develop the graphical conditional randomization test ($G$-CRT) and empirically demonstrate its superior performance compared to existing methods  in numerical studies. 
\end{enumerate}

To the best of our knowledge, our work is the first to develop a CRT method tailored for multivariate $X_{\mc{T}}$, accompanied by a power analysis demonstrating optimality.
When variables are grouped, our methods extend to \textit{group selection}, where the goal is to identify important groups while controlling false discovery rate.

This paper is organized as follows. 
Section~\ref{sec: CRT} reviews the CRT method and introduces our general method. 
 
In Section~\ref{sec: MVN}, we detail the implementation of MVN-CRT for multivariate normal covariates and establish the minimax rate optimality. 
In Section~\ref{sec: GGM}, we detail the implementation of $G$-CRT for high-dimensional graphical models. 
Section~\ref{sec: MG-CRT} considers group selection. 
Section~\ref{sec: Simulation} demonstrates the effectiveness of our methods in simulation studies under various settings. 
We exemplify the usefulness of our method through applications to real-world datasets in Section~\ref{sec: Application}. 
Section~\ref{sec: Discussion} concludes with a discussion and future directions.
Details of literature reviews, proofs, auxiliary algorithms, numerical studies, and real-data applications are provided in the supplementary material.


\subsection{Related literature}\label{sec: literature}
\label{sec: literature CRT}

 Testing MCI is important in many  applications, such as the following: 

\begin{enumerate}
    \item Model simplification. 
    In many studies, $\mc{S}$ is pre-selected as potentially containing
    all important variables that influence  $Y$. 
Verifying $X_{\mathcal{T}}\indp Y \mid X_{\mathcal{S}}$ can justify excluding $X_{\mathcal{T}}$ from subsequent modelling and inference.

\item 
Testing multivariate treatment effects. 
In observational studies, the primary goal is to determine whether a set of treatment (or intervention) variables $X_{\mathcal{T}}$ has a causal effect on $Y$ after controlling for observed confounders $X_{\mathcal{S}}$; see \citet{imai2004causal} for a detailed discussion on causal inference with general treatment regimes.

\item  
Group-level inference. 
In many applications, variables are organized into predefined groups based on prior knowledge or natural clustering. For instance, genes are often grouped into gene sets based on biological knowledge \citep{goemanGlobalTestGroups2004}, while economic indicators can be grouped by market sectors \citep{dose_clustering_2005}. 
Testing the joint effect of a group, rather than individual variables, can increase power.

\end{enumerate}

Testing conditional independence is an important topic in statistics. 
As shown by \citet{shah2020}, achieving valid conditional independence tests with nontrivial power necessarily requires assumptions on the null distribution class. Existing methods can thus be categorized according to their respective assumptions.

Our work belongs to the ``model-X'' framework, which assumes $F_X$ the distribution of $X$ is known or satisfies some assumptions but makes no assumptions on $F_{Y|X}$ the conditional distribution of $Y$ given $X$. 
The seminal work by \citet{candes_panning_2017} introduced the conditional randomization test (CRT) and model-X knockoffs within this framework. 
The CRT has attracted significant attention and has been extended in various directions \citep{bates2020causal, shaer_model-x_2023, tansey_holdout_2022, liu_fast_2022, niu2024spa}.

The original CRT faces practical limitations when applied to MCI testing, particularly in high-dimensional settings. A key challenge is the requirement of knowing the true conditional distribution $P(X|Z)$, which is rarely satisfied in practice. While several advancements have been made to address the issue of inexact $P(X|Z)$ \citep{bellot_conditional_2019, berrett_conditional_2020, niu_reconciling_2024, li_maxway_2023, barber_testing_2021}, the asymptotic guarantees of these methods are tied to the estimation error and may be questionable in finite samples. 
Moreover, most CRT variants focus on univariate conditional independence, and extending them to multivariate $X_{\mc{T}}$ using multiple testing corrections can lead to a substantial loss of power.

Several related works have theoretically investigate the power of CRTs in high-dimensional models. 
\citet{wang2022high} focuses on the univariate CRT in linear regression, while \citet{katsevich_power_2022} consider a semiparametric setting where the dimension of $X_{\mc{T}}$ is fixed and $Y$ is normal with mean $\beta^\top X_{\mc{T}}+g(X_{\mc{S}})$. 
These results are different from ours, since we allow the dimension of $X_{\mc{T}}$ to grow as fast as linear with $n$ and provide a matched minimax lower bound.

Beyond the model-X framework, there are other methods for conditional independence testing, including non-parametric kernel-based methods, correlation-based measures, and regression-based methods in the ``fixed-X'' framework. Appendix A provides a detailed review of the related literature.  

\subsection{Notation}\label{sec: Notation}

The cardinality of any set $U$ is denoted as $|U|$. 
Define $[p]:=\{1,\dots,p\}$ for any positive integer $p$. 
 For $\mathcal{T} \subseteq [p]$ and a vector $X$, let $X_\mathcal{T}$ be the sub-vector $(X_{j})_{j \in \mathcal{T}}$;
 and for a matrix $\mathbf{X}$, let $\mathbf{X}_j$ be the $j$th column and $\mathbf{X}_\mathcal{T}$ the submatrix formed by the columns corresponding to $\mathcal{T}$. 
 Also, let $X_{-\mathcal{T}}:=(X_j)_{j\notin \mathcal{T}}$ and $\mathbf{X}_{-\mathcal{T}}:=\mathbf{X}_{\mathcal{T}^c}$. 
For any $q\geq 1$, let $\|\cdot\|_q$ be the standard $\ell_q$ norm.
For a square matrix $\mathbf{A}$, $\text{diag}(\mathbf{A})$ is the diagonal matrix with the same diagonal entries as $\mathbf{A}$.
We write $\mathbf{A}\succ\mathbf{0}$ if if $\mathbf{A}$ is positive definite. 
$\textbf{N}_p(\bmu, \mathbf{\Sigma})$ denotes the p-variate normal distribution with mean $\bmu$ and covariance matrix $\mathbf{\Sigma}$. Let $\mathbf{\Omega} := \mathbf{\Sigma}^{-1}=(\omega_{ij})$ be the precision matrix. 
%
For any random vectors $Y$, $X$, and $Z$, denote by $\P(X\mid Y)$ the conditional distribution of $X$ given $Y$, write $X \indp Y$ if $X$ and $Y$ are independent and write $X\indp Y \mid Z$ if $X$ and $Y$ are conditionally independent given $Z$. Write $X\eqd Y$ if $X$ and $Y$ have the same distribution.
Denote by $\widetilde{\mathbf{X}}^{(m)}$ the $m$-th copy of the observed data $\mathbf{X}$.


\section{CRT for Multivariate Conditional Independence}\label{sec: CRT}

We first introduce CRTs and elucidate their limitations for MCI testing. 
We then propose our method and discuss the construction of test statistics.

\subsection{Conditional Randomization Test}

The Conditional Randomization Test (CRT) was introduced by \citet{candes_panning_2017} and is closely related to the concept of propensity scores \citep{rosenbaum_central_1983}. It provides a flexible way to test whether a variable $X_{\mathcal{T}}$ (here $|\mc{T}|=1$) is independent of a response $Y$ given other covariates $X_{\mathcal{S}}$. The CRT requires no assumptions about the conditional distribution of $Y\mid (X_{\mathcal{T}}, X_{\mathcal{S}})$. Instead, it assumes the distribution of $X_{\mathcal{T}} \mid X_{\mathcal{S}}$ is known. This approach is often called the \textit{model-X framework} because it shifts the modelling focus from $Y \mid (X_{\mathcal{T}}, X_{\mathcal{S}})$ to $X_{\mathcal{T}} \mid X_{\mathcal{S}}$. This shift is especially useful when researchers have more knowledge of how $X_{\mathcal{T}}$ depends on $X_{\mathcal{S}}$ than of how $Y$ depends on both.

The univariate CRT operates as follows. 
Suppose $\{\left(Y_i, X_i\right):i=1, \ldots, n\} $ are independent and identically distributed copies of $(Y, X)$. 
Let $\boldsymbol{Y}$ be the vector of responses $\{Y_i\}$, $\boldsymbol{x}$ be the vector of $\{X_{i,\mathcal{T}}\}$, and $\boldsymbol{Z}$ be the matrix of observations $\{X_{i,\mathcal{S}}\}$. Given a known distribution for $\boldsymbol{x} \mid \boldsymbol{Z}$, a test statistic function $T(\cdot)$, and an integer $M$, the CRT generates $M$ independent samples $\boldsymbol{x}^{(1)}, \ldots, \boldsymbol{x}^{(M)}$ from $\boldsymbol{x} \mid \boldsymbol{Z}$, each drawn conditionally independently of the original data $(\boldsymbol{Y}, \boldsymbol{x})$. The CRT p-value is then computed as
$$
\text{pVal}=\frac{1}{M+1}\left(1+\sum_{m=1}^M \mathbf{1}\left\{T\left(\boldsymbol{Y}, \boldsymbol{x}^{(m)}, \boldsymbol{Z}\right) \geq T(\boldsymbol{Y}, \boldsymbol{x}, \boldsymbol{Z})\right\}\right).
$$
Under the null hypothesis that $X_{\mathcal{T}} \perp Y \mid X_{\mathcal{S}}$, this construction guarantees exact type I error control, i.e.,  $\mathbb{P}(\text{pVal}\leq \alpha)\leq \alpha$ for any significance level $\alpha\in [0,1]$. The choice of $T$ is flexible and can reflect prior knowledge or use complex machine-learning algorithms, which makes the CRT adaptable and powerful.
The generation of the $M$ copies and the computation of the $M$ test statistics can be paralleled to reduce computation time.

\subsection{Limitations of CRT applied to MCI}

While the CRT is powerful and flexible for testing conditional independence, 
it encounters significant difficulties when applied to testing MCI. 

The first challenge comes from the multiplicity of the set $\mc{T}$. 
Much of the literature on CRT focuses on the univariate case where $\mc{T}$ has only a single variable. 
For the multivariate case where $\mc{T}$ has many variables, a natural extension is to apply CRTs to each variable in $\mc{T}$ separately and then adjust for multiple testing. 
However, this approach is inefficient, since the conservativeness introduced by multiplicity adjustments leads to a substantial loss of power. 
Moreover, testing each variable separately fails to capture joint dependence: although $X_{\mc{T}}$ as a whole might strongly influence $Y$, the contribution of each individual variable may be too weak to detect through separate univariate tests.

A second challenge arises from the requirement that the conditional distribution $\mc{P}(X_{\mc{T}}\mid X_{\mc{S}})$ is exactly known. This condition is rarely met in practice, especially when $\mc{T}$ is large. 
For example, if $X$ follows a multivariate normal distribution with an unknown covariance matrix, $\mc{P}(X_{\mc{T}}\mid X_{\mc{S}})$ will involve the unknown covariance parameters. 
Estimation of such parameters can inflate the type I error rate, which undermines the reliability of the test.

\subsection{Multivariate sufficient statistic CRT}\label{sec: MS-CRT}
We propose the \textit{multivariate sufficient statistic CRT (MS-CRT)} to address the limitations of the original CRT in testing MCI. 
Instead of assuming that $\mc{P}(X_{\mc{T}}\mid X_{\mc{S}})$ is fully known, we assume it lies in a parametric model that admits a sufficient statistic. Under this assumption, we generate copies of $X_{\mathcal{T}}$ that remain exchangeable with the observed data. 

Suppose that $\mathbf{X}$ is an $n\times p$ matrix of covariate observations and that the conditional distribution of $\mb{X}_{\mc{T}}$ given $\mb{X}_{\mc{S}}$ lies in a model with sufficient statistic $\Phi=\phi(\mb{X}_{\mc{T}}; \mb{X}_{\mc{S}})$. 
Without looking at $\bs{Y}$ the vector of responses, we draw $M$ copies $\widetilde{\mathbf{X}}_{\mc{T}}^{(m)}$ of $\mb{X}_{\mc{T}}$ such that $(\mb{X}_{\mc{T}}, \widetilde{\mathbf{X}}_{\mc{T}}^{(1)}, \ldots, \widetilde{\mathbf{X}}_{\mc{T}}^{(M)})$ are exchangeable given $(\mb{X}_{\mc{S}}, \Phi)$. 
Let $T(\mathbf{y}, \mb{x}_{\mc{T}}, \mb{x}_{\mc{S}})$ be a statistic whose larger (positive) values indicate stronger evidence against the null. 
With $T_0:=T\left(\mathbf{Y}, \mathbf{X}_{\mathcal{T}}, \mathbf{X}_{\mathcal{S}}\right) \quad$ and $\quad T_m:=T\left(\mathbf{Y}, \widetilde{\mathbf{X}}_{\mathcal{T}}^{(m)}, \mathbf{X}_{\mathcal{S}}\right)$ for $m\in [M]$, the p-value is given by 
\begin{equation}\label{eqn:pval-1}
\textrm{pVal}_T = \frac{1}{M+1}\left(1+\sum_{m=1}^M \One{T_m \ge T_0}\right),
\end{equation}
which controls the finite-sample type I error as summarized in Proposition~\ref{prop: CRT valid}. 

\begin{proposition}\label{prop: CRT valid}
Suppose the copies $\widetilde{\mathbf{X}}_{\mc{T}}^{(m)}$ are generated without looking at $\bs{Y}$. 
If conditional on $(\mb{X}_{\mc{S}}, \Phi)$, the sequence $(\mb{X}_{\mc{T}}, \widetilde{\mathbf{X}}_{\mc{T}}^{(1)}, \ldots, \widetilde{\mathbf{X}}_{\mc{T}}^{(M)})$ is exchangeable, then under the null hypothesis that $\bs{Y}\indp \mathbf{X}_{\mc{T}}\mid \mathbf{X}_{\mc{S}}$,  we have $\P(\textnormal{pVal}_{T}\leq \alpha)\leq \alpha$  for any significance level $\alpha \in (0,1)$. 
\end{proposition}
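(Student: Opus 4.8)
The plan is to show that, conditionally on the $\sigma$-field generated by $(\mathbf{X}_{\mc{S}}, \Phi, \bs{Y})$, the augmented sequence of test statistics $(T_0, T_1, \ldots, T_M)$ is exchangeable; then the claim follows from the standard rank argument for exchangeable statistics. First I would establish that, under the null $\bs{Y}\indp \mathbf{X}_{\mc{T}}\mid \mathbf{X}_{\mc{S}}$, the observed $\mathbf{X}_{\mc{T}}$ is conditionally exchangeable with its copies even after also conditioning on $\bs{Y}$. The hypothesis of the proposition already gives that $(\mathbf{X}_{\mc{T}}, \widetilde{\mathbf{X}}_{\mc{T}}^{(1)},\ldots,\widetilde{\mathbf{X}}_{\mc{T}}^{(M)})$ is exchangeable given $(\mathbf{X}_{\mc{S}},\Phi)$. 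Because $\Phi = \phi(\mathbf{X}_{\mc{T}};\mathbf{X}_{\mc{S}})$ is a sufficient statistic for the model containing $\P(\mathbf{X}_{\mc{T}}\mid \mathbf{X}_{\mc{S}})$, and the copies are drawn without looking at $\bs{Y}$, I would argue that $\bs{Y}$ is conditionally independent of the whole vector $(\mathbf{X}_{\mc{T}}, \widetilde{\mathbf{X}}_{\mc{T}}^{(1)},\ldots,\widetilde{\mathbf{X}}_{\mc{T}}^{(M)})$ given $(\mathbf{X}_{\mc{S}},\Phi)$: under $H_0$, $\bs{Y}$ depends on $\mathbf{X}$ only through $\mathbf{X}_{\mc{S}}$, and the copies were generated from $(\mathbf{X}_{\mc{S}},\Phi)$ (and external randomness) alone, hence carry no extra information about $\bs{Y}$. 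Conditioning on an independent variable preserves exchangeability, so $(\mathbf{X}_{\mc{T}}, \widetilde{\mathbf{X}}_{\mc{T}}^{(1)},\ldots,\widetilde{\mathbf{X}}_{\mc{T}}^{(M)})$ remains exchangeable given $(\mathbf{X}_{\mc{S}},\Phi,\bs{Y})$.

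Next I would push this through the deterministic map $T$. Since $T_0 = T(\bs{Y},\mathbf{X}_{\mc{T}},\mathbf{X}_{\mc{S}})$ and $T_m = T(\bs{Y},\widetilde{\mathbf{X}}_{\mc{T}}^{(m)},\mathbf{X}_{\mc{S}})$ apply the \emph{same} function to the respective slots, with $\bs{Y}$ and $\mathbf{X}_{\mc{S}}$ held fixed in the conditioning, exchangeability of $(\mathbf{X}_{\mc{T}}, \widetilde{\mathbf{X}}_{\mc{T}}^{(1)},\ldots,\widetilde{\mathbf{X}}_{\mc{T}}^{(M)})$ transfers to exchangeability of $(T_0,T_1,\ldots,T_M)$ given $(\mathbf{X}_{\mc{S}},\Phi,\bs{Y})$. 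Then $\textnormal{pVal}_T$ in \eqref{eqn:pval-1} is exactly $(1 + \#\{m : T_m \ge T_0\})/(M+1)$, i.e.\ one plus the number of copies whose statistic is at least as large as the observed one, normalized by $M+1$. For an exchangeable finite sequence, the rank of $T_0$ among $(T_0,\ldots,T_M)$ is sub-uniform, so for any $\alpha\in(0,1)$, $\P(\textnormal{pVal}_T \le \alpha \mid \mathbf{X}_{\mc{S}},\Phi,\bs{Y}) \le \alpha$; integrating over the conditioning variables gives the unconditional bound. I would state this last step as a short lemma (or cite the standard fact, e.g.\ from \citet{candes_panning_2017}) handling ties by counting $\ge$ rather than $>$.

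The main obstacle is the conditional-independence argument in the first step: making rigorous that, under $H_0$ and given $(\mathbf{X}_{\mc{S}},\Phi)$, the response $\bs{Y}$ is independent of the \emph{joint} law of the original $\mathbf{X}_{\mc{T}}$ together with all $M$ copies. One has to be careful that "generated without looking at $\bs{Y}$" means the external randomness used to produce the copies is independent of $(\bs{Y},\mathbf{X})$ given $(\mathbf{X}_{\mc{S}},\Phi)$, and that under $H_0$ the factorization $\P(\bs{Y},\mathbf{X}_{\mc{T}}\mid \mathbf{X}_{\mc{S}},\Phi) = \P(\bs{Y}\mid \mathbf{X}_{\mc{S}})\,\P(\mathbf{X}_{\mc{T}}\mid \mathbf{X}_{\mc{S}},\Phi)$ holds — here sufficiency of $\Phi$ is what lets the second factor be taken as a known (parameter-free) kernel, which is precisely what guarantees the constructed copies have the correct conditional law regardless of the unknown parameter. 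Once this decomposition is in place, everything else is the routine exchangeability-of-ranks computation, which I would not belabor.
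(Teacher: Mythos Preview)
Your proposal is correct and follows essentially the same route as the paper: establish that $\bs{Y}$ is conditionally independent of the full vector $(\mathbf{X}_{\mc{T}},\widetilde{\mathbf{X}}_{\mc{T}}^{(1)},\ldots,\widetilde{\mathbf{X}}_{\mc{T}}^{(M)})$ given the conditioning information, deduce exchangeability after adding $\bs{Y}$ to the conditioning set, push through $T$, and apply the rank argument. The only cosmetic difference is that the paper conditions on $\mathbf{X}_{\mc{S}}$ alone (tacitly using that exchangeability given $(\mathbf{X}_{\mc{S}},\Phi)$ implies exchangeability given $\mathbf{X}_{\mc{S}}$ by integrating out $\Phi$), whereas you keep $\Phi$ in the conditioning throughout; your version is arguably closer to the stated hypothesis and equally valid.
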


By conditioning on a sufficient statistic, the MS-CRT relaxes the requirement of the original CRT that $\mathbb{P}(X_{\mc{T}}\mid X_{\mc{S}})$ must be fully known or accurately estimated. 

Some previous work has used sufficient statistics to relax assumptions in the model-X framework. 
\citet{huang_relaxing_2020} explore this idea for constructing a model-X knockoff, which is a single copy $\overline{\mathbf{X}}$ of the observed $\mathbf{X}$ such that swapping any column of $\overline{\mathbf{X}}$ with the same column of $\mathbf{X}$ leaves the joint distribution of $(\mathbf{X},\overline{\mathbf{X}})$ unchanged. Despite the similarity in leveraging sufficient statistics, knockoff construction differs fundamentally from exchangeable sampling.
In particular, if one generates $M$ knockoffs $\overline{\mathbf{X}}^{(m)}$ of $\mathbf{X}$, the collection $(\mb{X}_{\mc{T}}, \overline{\mathbf{X}}_{\mc{T}}^{(1)}, \ldots, \overline{\mathbf{X}}_{\mc{T}}^{(M)})$ is generally not exchangeable; for instance, the pair $(\overline{\mathbf{X}}_{\mc{T}}^{(2)}, \overline{\mathbf{X}}_{\mc{T}}^{(1)})$ does not have the same distribution as $(\mb{X}, \overline{\mathbf{X}}_{\mc{T}}^{(1)})$ (see Appendix A.4 for more details). 
\citet{wang2022high} study the power properties of the CRT when conditioning on a sufficient statistic, but their analysis is limited to the univariate CRT under normal distributions.
To our knowledge, we are the first to explore the idea of conditioning on sufficient statistics for testing MCI.

Additionally, MS-CRT avoids multiple testing adjustments when $|\mc{T}|>1$ by treating $\mb{X}_{\mc{T}}$ as a whole. 
This is an advantage in contrast with the univariate CRT, which requires conservative Bonferroni correction for testing $H_{0,i}: \bs{Y}\indp \mb{X}_{i}\mid \mb{X}_{-i}$ for each $i\in \mc{T}$ separately. 
By directly testing the joint hypothesis $
H_0:\ \bs{Y}\indp \mathbf{X}_{\mc{T}}\mid \mathbf{X}_{\mc{S}}$, effectively captures joint signals and achieves higher power, especially when $T$ is large and individual effects are weak.

The MS-CRT controls Type I error in finite samples for any dimension $p$ without assumptions on $Y\mid X$, making it applicable even when $p\gg n$ or $Y$ is difficult to model. 
It also maintains CRT's flexibility to employ any test statistic $T$, which can incorporate advanced models, predictive algorithms, or domain-specific knowledge. 
These features make the MS-CRT a promising tool for testing MCI in complex and high-dimensional problems.

While the general principle behind MS-CRT is conceptually straightforward, implementing it for testing MCI demands substantial innovations.
First, it can be challenging to sample correctly from $\mathbb{P}(\mb{X}_{\mc{T}}\mid \mb{X}_{\mc{S}}, \Phi)$, especially in high-dimensional settings. 
Second, it is nontrivial to design test statistics that are effective in capturing joint signals in high-dimensional MCI testing. 
These issues will be addressed in the following sections.

\begin{remark}
The p-value computed in \eqref{eqn:pval-1} is valid but can be improved by randomization. 
By randomly breaking ties, we redefine the p-value as 
\begin{equation}\label{eqn:pval-random-cts}
\text{pVal}_{T}=\frac{1}{M}\left( A + B \right), \text{ where } A=\sum_{i=1}^{M}1_{\tilde{T}_i>T_0}, 
B\sim \text{Unif}(0, S), \text{ and } S=1+\sum_{i=1}^{M}1_{\tilde{T}_i=T_0}. 
\end{equation}
When $(T_0, \tilde{T}_1, \ldots, \tilde{T}_M)$ is exchangeable, this p-value distributed uniformly on $(0,1)$. 
\end{remark}

\subsection{Test statistics for MCI}\label{sec: CRT statistic}

The effectiveness of MS-CRT depends on the choice of test statistic $T(\cdot)$, which is designed to capture potential dependence between $Y$ and $X_{\mc{T}}$. 
While univariate CRTs typically use variable-specific importance measures (e.g., regression coefficients), MS-CRT requires statistics that reflect the joint importance of $X_{\mc{T}}$, ideally by aggregating the importance of individual variables. 
We outline three approaches for constructing such statistics, each offering distinct advantages.

\textbf{Approach 1: Direct Aggregation.}
This approach first fits a regression of $Y$ on $X$ and then aggregates dependence measures for $X_{\mc{T}}$. 
Approach 1 is straightforward and forms the basis for the other approaches. 
Below are four examples:
\begin{itemize}
    \item \textbf{LM-SST:} 
    Fit a linear regression and sum the squared t-statistics for $X_\mc{T}$.  
    \item \textbf{GLM-Dev:} Fit a generalized linear model (GLM) and compute the deviance.
    For a linear model, this coincides with the sum of squared residuals (\textbf{LM-SSR}). 
    \item \textbf{MaxCor:} Maximum absolute sample correlation between $Y$ and $X_{j}$ for all $j\in\mc{T}$.
    
    \item \textbf{RF:} Train a random forest and sum the importance scores for variables in $\mc{T}$. An example of importance score is the mean decrease in accuracy (MDA). 
    Similar strategies can be applied using another machine learning method and its importance measures.
\end{itemize}

\textbf{Approach 2: Distillation with augmented predictors.}
For high-dimensional covariates, distillation improves computational efficiency and signal extraction \citep{liu_fast_2022}. 
We first extract (distill) the information about $Y$ contained by the non-randomized component $X_{\mc{S}}$ and then use the distilled information together with the randomized component $X_{\mc{T}}$ to define the statistic. 
Concretely, this approach comprises two steps:
\begin{itemize}
    \item \textbf{Step 1: Distill $X_{\mc{S}}$.} 
    Regress $Y$ on $X_{\mc{S}}$ to obtain fitted values $\widehat{\bs{Y}}_{0}$; 
    \item\textbf{Step 2: Augment predictors.} Construct a test statistic using Approach 1 with $\bs{Y}$ and the augmented predictors $\left[ \widehat{\bs{Y}}_{0},  \mathbf{X}_{\mc{T}} \right]$. 
\end{itemize}
Below are two examples:
\begin{itemize}
    \item \textbf{GLM-L1-D:} 
    First distill using $\ell_1$-penalized GLM (Step 1) and then compute \textbf{GLM-Dev} with augmented predictors (Step 2). The distillation offers sparsity. 
    
    \item \textbf{RF-D:} 
    First distill using Random Forest (Step 1) and then compute \textbf{RF} with augmented predictors (Step 2). The distillation offers nonparametric flexibility. 
\end{itemize}

\textbf{Approach 3: Distilled Residuals.} 
This approach shares the same first step as the Approach 2. In the second step, however, we apply Approach 1 directly to the distilled residual $\bs{Y} - \widehat{\bs{Y}}_{0}$ on $\mb{X}_{\mc{T}}$. 

Below are three examples:
\begin{itemize}
 \item \textbf{GLM-L1-R-SST:} First distill using $\ell_1$-penalized GLM (Step 1) and then compute \textbf{LM-SST} on residuals and $X_{\mc{T}}$ (Step 2).
 \item \textbf{LM-L1-R-SSR:} First distill using $\ell_1$-penalized LM (Step 1) and then compute  \textbf{LM-SSR} on residuals and $X_{\mc{T}}$ (Step 2).
 \item \textbf{RF-RR:} 
 First treat the response as a numeric variable and distill using random forest regression (Step 1), and then compute \textbf{RF} on residuals and $X_{\mc{T}}$ (Step 2). 
\end{itemize}

Both Approaches 2 and 3 improve computational efficiency by avoiding repeated fitting of complex models using the common component $X_{\mc{S}}$. 
They are also preferred in high-dimensional problems where Approach 1 does not apply.

In summary, each proposed statistic targets a unique aspect of dependence between $Y$ and $X_{\mc{T}}$. 
For well-specified models, model-based statistics (e.g., \textbf{LM-SST} and \textbf{GLM-Dev}) are more efficient. 
For complex dependence, machine learning-based statistics (e.g., \textbf{RF}) offer greater flexibility. 
When prior knowledge suggests that signals are spread out, a dense-targeting statistic such as \textbf{LM-SSR} is preferable. In contrast, if only a few large signals are expected, a sparse-targeting statistic like \textbf{MaxCor} is more powerful.
In our asymptotic power analysis in Section~\ref{sec: MVN-CRT power}, we employ \textbf{LM-SSR} for dense alternatives and \textbf{MaxCor}  for sparse alternatives. 
The numerical performance of the other statistics are examined in Section~\ref{sec: simulation MCI testing}. 
Choosing an effective statistic depends on the anticipated dependence structure and domain knowledge. 
Our examples are not exhaustive but serve as adaptable templates to guide the tailored construction.


\section{Multivariate Normal Models}\label{sec: MVN}
This section studies the implementation of MS-CRT for multivariate normal distributions and study its theoretical power property.

\subsection{Multivariate Normal CRT}
Suppose \(X \sim N_p(0, \Sigma)\) for some covariance matrix \(\Sigma \in \mathbb{R}^{p \times p}\) (see Remark~\ref{rem: general mean for MVN} for the extension to unknown means). 
We use \(\mathbf{Y} \in \mathbb{R}^n\) and \(\mathbf{X} \in \mathbb{R}^{n \times p}\) to denote the stack of \(n\) i.i.d. realizations of \(Y\) and \(X\). 

Let \(t = |\mathcal{T}|\) and \(s = |\mathcal{S}| = p - t\). 

Suppose \(n > s\). To introduce the sampling algorithm, we use the following notations:
let \(\mathbf{P}_{\mc{R}} \in \mathbb{R}^{(n-s) \times n}\) be a matrix whose rows formed an orthonormal set orthogonal to the columns of $\mathbf{X}_{\mc{S}}$ so that
\begin{equation}\label{construction of projection}
\mathbf{P}_{\mc{R}} \mathbf{P}_{\mc{R}}^{\top} = \mathbf{I}_{n-s} \quad \text{and} \quad  
\mathbf{P}_{\mc{R}}^{\top} \mathbf{P}_{\mc{R}} = \mathbf{I}_{n} - \mathbf{X}_{\mc{S}} \left(\mathbf{X}_{\mc{S}}^{\top} \mathbf{X}_{\mc{S}}\right)^{-1} \mathbf{X}_{\mc{S}}^{\top}.
\end{equation}
Let \(r = \min\{n - s, t\}\), and define the matrix \(\mathbf{Q} \in \mathbb{R}^{r \times t}\) as  
\begin{equation}\label{construction of Q}
\mathbf{Q} = \left\{ \begin{array}{cc}
  \left(\mathbf{X}_{\mc{T}}^{\top} \mathbf{P}_{\mc{R}}^{\top} \mathbf{P}_{\mc{R}} \mathbf{X}_{\mc{T}}\right)^{1/2}&n-s>t,\\ 
  \mathbf{P}_{\mc{R}} \mathbf{X}_{\mc{T}}&n-s\leq t.
\end{array}\right.
\end{equation}
Given that $(\mathbf{X}_{\mc{T}},\mathbf{X}_{\mc{S}})$ is jointly normal, we propose Algorithm~\ref{alg:multi gaussian CRT}, named \textit{Multivariate Normal CRT (MVN-CRT)}, to generate copies $\widetilde{\mathbf{X}}_{\mc{T}}^{(m)}$ ($m\in [M]$) that are conditionally exchangeable with $\mathbf{X}_{\mc{T}}$ given $\mathbf{X}_{\mc{S}}$. 
Proposition~\ref{prop: multi gaussian CRT valid} summarizes the desired properties of Algorithm~\ref{alg:multi gaussian CRT}.

\begin{algorithm}[!hbtp]
  \caption{Multivariate Normal CRT (MVN-CRT)}\label{alg:multi gaussian CRT}
  \begin{itemize}
\item \textbf{Input:} $n$-vector of response $\mathbf{Y}$, $n \times p$ matrix of covariates $\mathbf{X}$, partition $\mc{T}\cup \mc{S}=[p]$, test statistic function $T(\cdot)$, number of randomizations $M$. Suppose $n>s$, where $s=|\mc{S}|$. 
\item \textbf{Step 1:} 
For each $j\in [M]$, define $U^{(j)}\in \bbR^{n-s,r}$ as follows. 
Sample $r$ independent vectors $\{\bs{v}^{j}_{1}, \ldots, \bs{v}^{j}_{r}\}$ from the standard $(n-s)$-dimensional normal distribution and apply the Gram--Schmidt process to obtain the columns of $U^{(j)}$. 
\item \textbf{Step 2:} 
For each $j\in [M]$, define $\mathbf{\widetilde{X}}_{\mc{T}}^{(j)}=\mathbf{X}_{\mc{S}}\left(\mathbf{X}_{\mc{S}}^{\top}\mathbf{X}_{\mc{S}}\right)^{-1}\mathbf{X}_{\mc{S}}^{\top}\mathbf{X}_{\mc{T}}+\mathbf{P}_{\mc{R}}^{\top}U^{(j)}\mathbf{Q}$, where $\mathbf{P}_{\mc{R}}$ is defined in \eqref{construction of projection} and $\mathbf{Q}$ in \eqref{construction of Q}. 
\item \textbf{Step 3:} Compute and output the p-value ${\rm pVal}_T$ defined in \eqref{eqn:pval-1}. 

\end{itemize}
\end{algorithm}

\begin{proposition}\label{prop: multi gaussian CRT valid}
Suppose the rows of $\mathbf{X}$ are i.i.d. observations sampled from \eqref{null of X and Z}. For the copies $\widetilde{\mathbf{X}}_{\mc{T}}^{(j)},j=1,\cdots,M$ generated from Algorithm~\ref{alg:multi gaussian CRT}, the followings hold:
\begin{enumerate}
  \item ${\mathbf{X}}_{\mc{T}}^{\top}{\mathbf{X}_{\mc{S}}}=\left(\widetilde{\mathbf{X}}_{\mc{T}}^{(j)}\right)^{\top}{\mathbf{X}_{\mc{S}}},
  {\mathbf{X}_{\mc{T}}^{\top}}{\mathbf{X}_{\mc{T}}}=\left(\widetilde{\mathbf{X}}_{\mc{T}}^{(j)}\right)^{\top}\widetilde{\mathbf{X}}_{\mc{T}}^{(j)},j=1,\cdots,M$
  \item Under the null hypothesis that $\bf{Y}\indp \mathbf{X}_{\mc{T}}\mid \mathbf{X}_{\mc{S}}$, we have $\P(\textnormal{pVal}_{T}\leq \alpha)\leq \alpha$ for any significance level $\alpha \in (0,1)$.  
\end{enumerate}

\end{proposition}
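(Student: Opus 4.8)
The plan is to prove the two claims of Proposition~\ref{prop: multi gaussian CRT valid} in turn, with the second following from the first together with Proposition~\ref{prop: CRT valid}. For part~1, I would verify the algebraic identities directly. Since $\mathbf{P}_{\mc{R}}^{\top}\mathbf{P}_{\mc{R}} = \mathbf{I}_n - \mathbf{X}_{\mc{S}}(\mathbf{X}_{\mc{S}}^{\top}\mathbf{X}_{\mc{S}})^{-1}\mathbf{X}_{\mc{S}}^{\top}$ is the projection onto the orthogonal complement of the column space of $\mathbf{X}_{\mc{S}}$, we have $\mathbf{X}_{\mc{S}}^{\top}\mathbf{P}_{\mc{R}}^{\top} = \mathbf{0}$, so in $\widetilde{\mathbf{X}}_{\mc{T}}^{(j)} = \mathbf{X}_{\mc{S}}(\mathbf{X}_{\mc{S}}^{\top}\mathbf{X}_{\mc{S}})^{-1}\mathbf{X}_{\mc{S}}^{\top}\mathbf{X}_{\mc{T}} + \mathbf{P}_{\mc{R}}^{\top}U^{(j)}\mathbf{Q}$ the second term is annihilated upon left-multiplication by $\mathbf{X}_{\mc{S}}^{\top}$, yielding $(\widetilde{\mathbf{X}}_{\mc{T}}^{(j)})^{\top}\mathbf{X}_{\mc{S}} = \mathbf{X}_{\mc{T}}^{\top}\mathbf{X}_{\mc{S}}$. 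For the Gram matrix identity, I would split $\mathbf{X}_{\mc{T}} = \mathbf{X}_{\mc{S}}(\mathbf{X}_{\mc{S}}^{\top}\mathbf{X}_{\mc{S}})^{-1}\mathbf{X}_{\mc{S}}^{\top}\mathbf{X}_{\mc{T}} + \mathbf{P}_{\mc{R}}^{\top}\mathbf{P}_{\mc{R}}\mathbf{X}_{\mc{T}}$ into its projection onto and off the span of $\mathbf{X}_{\mc{S}}$; since these two pieces are orthogonal, $\mathbf{X}_{\mc{T}}^{\top}\mathbf{X}_{\mc{T}}$ equals the sum of the two Gram matrices, and similarly for $\widetilde{\mathbf{X}}_{\mc{T}}^{(j)}$ the ``on-span'' piece is identical while the ``off-span'' piece contributes $\mathbf{Q}^{\top}(U^{(j)})^{\top}\mathbf{P}_{\mc{R}}\mathbf{P}_{\mc{R}}^{\top}U^{(j)}\mathbf{Q} = \mathbf{Q}^{\top}(U^{(j)})^{\top}U^{(j)}\mathbf{Q}$. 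Using that $U^{(j)}$ has orthonormal columns (by Gram--Schmidt) so $(U^{(j)})^{\top}U^{(j)} = \mathbf{I}_r$, this reduces to $\mathbf{Q}^{\top}\mathbf{Q}$, which by the definition of $\mathbf{Q}$ in \eqref{construction of Q} equals $\mathbf{X}_{\mc{T}}^{\top}\mathbf{P}_{\mc{R}}^{\top}\mathbf{P}_{\mc{R}}\mathbf{X}_{\mc{T}}$ in both cases ($r = n-s$ or $r = t$), matching the off-span piece of $\mathbf{X}_{\mc{T}}$.

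For part~2, by Proposition~\ref{prop: CRT valid} it suffices to show that the copies are generated without looking at $\bs{Y}$ (immediate from the algorithm: Steps~1 and~2 use only $\mathbf{X}$) and that, conditional on $(\mathbf{X}_{\mc{S}}, \Phi)$ for an appropriate sufficient statistic $\Phi$, the sequence $(\mathbf{X}_{\mc{T}}, \widetilde{\mathbf{X}}_{\mc{T}}^{(1)}, \ldots, \widetilde{\mathbf{X}}_{\mc{T}}^{(M)})$ is exchangeable. Here the natural sufficient statistic for $\mathbb{P}(\mathbf{X}_{\mc{T}} \mid \mathbf{X}_{\mc{S}})$ in the jointly normal model is $\Phi = (\mathbf{X}_{\mc{S}}^{\top}\mathbf{X}_{\mc{T}}, \mathbf{X}_{\mc{T}}^{\top}\mathbf{X}_{\mc{T}})$ --- indeed, under $X \sim N_p(0,\Sigma)$ the conditional law of each row $X_{i,\mc{T}}$ given $X_{i,\mc{S}}$ is $N_t(\mathbf{B}^{\top}X_{i,\mc{S}}, \Sigma_{\mc{T}\mid\mc{S}})$ with $\mathbf{B} = \Sigma_{\mc{S}\mc{S}}^{-1}\Sigma_{\mc{S}\mc{T}}$, an exponential family in which the complete sufficient statistic across the $n$ rows is exactly $(\mathbf{X}_{\mc{S}}^{\top}\mathbf{X}_{\mc{T}}, \mathbf{X}_{\mc{T}}^{\top}\mathbf{X}_{\mc{T}})$. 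Part~1 already shows $\widetilde{\mathbf{X}}_{\mc{T}}^{(j)}$ shares these statistics with $\mathbf{X}_{\mc{T}}$, so all members of the sequence lie on the same ``fiber''. The remaining task is to show exchangeability, and for this I would establish that, conditional on $(\mathbf{X}_{\mc{S}}, \Phi)$, the observed $\mathbf{X}_{\mc{T}}$ is \emph{uniformly distributed} on the manifold $\mathcal{M} = \{\mathbf{W} \in \mathbb{R}^{n\times t} : \mathbf{X}_{\mc{S}}^{\top}\mathbf{W} = \mathbf{X}_{\mc{S}}^{\top}\mathbf{X}_{\mc{T}},\ \mathbf{W}^{\top}\mathbf{W} = \mathbf{X}_{\mc{T}}^{\top}\mathbf{X}_{\mc{T}}\}$ (with respect to the natural invariant measure), and then show that each $\widetilde{\mathbf{X}}_{\mc{T}}^{(j)}$ is an i.i.d. draw from the \emph{same} conditional distribution. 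The parametrization $\mathbf{W} = \mathbf{X}_{\mc{S}}(\mathbf{X}_{\mc{S}}^{\top}\mathbf{X}_{\mc{S}})^{-1}\mathbf{X}_{\mc{S}}^{\top}\mathbf{X}_{\mc{T}} + \mathbf{P}_{\mc{R}}^{\top}U\mathbf{Q}$, with $U$ ranging over matrices with orthonormal columns (a Stiefel manifold), exhibits $\mathcal{M}$ as the image of the Stiefel manifold; the Haar measure on the Stiefel manifold pushes forward to the invariant measure on $\mathcal{M}$. Since Step~1 draws $U^{(j)}$ from precisely this Haar distribution (Gram--Schmidt applied to i.i.d. Gaussian columns yields a Haar-distributed Stiefel element), each $\widetilde{\mathbf{X}}_{\mc{T}}^{(j)}$ is a draw from the invariant measure on $\mathcal{M}$; and the sufficiency of $\Phi$ (together with the fact that the conditional density on the fiber is constant, since the exponential-family density depends on the data only through $\Phi$) gives that $\mathbf{X}_{\mc{T}} \mid (\mathbf{X}_{\mc{S}},\Phi)$ has the same law. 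As the $U^{(j)}$ are drawn i.i.d. and independently of $\mathbf{X}$, the full sequence is i.i.d.\ given $(\mathbf{X}_{\mc{S}},\Phi)$, hence exchangeable, and Proposition~\ref{prop: CRT valid} concludes.

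I expect the main obstacle to be the rigorous identification of the conditional distribution of $\mathbf{X}_{\mc{T}}$ given $(\mathbf{X}_{\mc{S}}, \Phi)$ as the pushforward of Haar measure on the Stiefel manifold --- that is, pinning down the correct invariant measure on the fiber $\mathcal{M}$ and verifying that both the observed statistic and the algorithm's output realize it. This requires care because the case $n - s \le t$ behaves differently ($\mathbf{Q} = \mathbf{P}_{\mc{R}}\mathbf{X}_{\mc{T}}$ is then the ``full'' off-span component and $r = n-s$, so the constraint $\mathbf{W}^{\top}\mathbf{W} = \mathbf{X}_{\mc{T}}^{\top}\mathbf{X}_{\mc{T}}$ may be partially automatic), and one must check the parametrization is measure-theoretically well-behaved (the map $U \mapsto \mathbf{P}_{\mc{R}}^{\top}U\mathbf{Q}$ is injective on the relevant domain when $\mathbf{Q}$ is invertible, i.e., when $\mathbf{X}_{\mc{T}}$ has full column rank, which holds almost surely). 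A clean way around much of this bookkeeping is to argue invariance directly: the whole construction is equivariant under the orthogonal group acting on the $(n-s)$-dimensional space $\mathrm{range}(\mathbf{P}_{\mc{R}}^{\top})$, the conditional law of $\mathbf{X}_{\mc{T}}$ given $(\mathbf{X}_{\mc{S}},\Phi)$ is invariant under this action (because the Gaussian density is, and $\Phi$ is a maximal invariant up to the residual piece), and an invariant distribution supported on a single orbit is unique --- so matching orbits (which is exactly part~1) plus matching the group-invariant randomization in Step~1 forces equality in distribution. I would present the argument in this equivariance form to keep the Jacobian computations to a minimum.
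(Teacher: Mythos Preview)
Your approach is correct and essentially the same as the paper's: the algebra for part~1 matches, and for part~2 both you and the paper identify $\Phi=(\mathbf{X}_{\mc{S}}^{\top}\mathbf{X}_{\mc{T}},\mathbf{X}_{\mc{T}}^{\top}\mathbf{X}_{\mc{T}})$ as sufficient, show that the conditional law of $\mathbf{X}_{\mc{T}}$ on its fiber is the Haar pushforward from the Stiefel manifold, match this with the law produced by the Gram--Schmidt step, and conclude via Proposition~\ref{prop: CRT valid}. The one difference is that the paper treats the two regimes separately: for $n-s>t$ it argues exactly as you do, but for $n-s\le t$ it bypasses the sufficiency route and instead notes directly that $\mathbf{P}_{\mc{R}}\widetilde{\mathbf{X}}_{\mc{T}}=U\,\mathbf{P}_{\mc{R}}\mathbf{X}_{\mc{T}}$ with $U$ Haar on $O(n-s)$, so left-rotational invariance of the Gaussian law of $\mathbf{P}_{\mc{R}}\mathbf{X}_{\mc{T}}\mid\mathbf{X}_{\mc{S}}$ yields exchangeability. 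Your unified equivariance route is arguably cleaner, but be aware of one subtlety you did not quite name: when $n-s\le t$, the algorithm's $\mathbf{Q}=\mathbf{P}_{\mc{R}}\mathbf{X}_{\mc{T}}$ is \emph{not} a function of $\Phi$ alone, so the step ``the full sequence is i.i.d.\ given $(\mathbf{X}_{\mc{S}},\Phi)$'' requires the extra observation that $U^{(j)}\mathbf{Q}$ has the same Haar law on the orbit regardless of which orbit representative $\mathbf{Q}$ is used --- precisely the left-invariance your equivariance framing supplies.
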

Proposition~\ref{prop: multi gaussian CRT valid} establishes the validity of the MVN-CRT. 
As previously argued, this validity does not depend on the model for \( Y \mid X \) or the choice of statistic \( T(\cdot) \).


\begin{remark}\label{rem: general mean for MVN}
We have assumed the mean of $X$ is zero to simplify the exposition. If the mean of $X$ is unknown, we augment $\mc{S}$ with a dummy element $0$ and augment $\mathbf{X}_{\mc{S}}$ with the vector $\mathbf{1}_n$. The rest of our method remains the same.
\end{remark}

\subsection{Power analysis and optimality of multi CRT}\label{sec: MVN-CRT power}
We study the power optimality of the MVN-CRT (Algorithm~\ref{alg:multi gaussian CRT}). 

Given a test statistic \(T\) and level $\alpha$, denote the corresponding test as \(\phi_{T,\alpha}\). 

By partitioning the covariance matrix \(\Sigma\) according to $\mc{T}$ and $\mc{S}$ as
\[
\Sigma =
\begin{bmatrix} 
  \Sigma_{\mc{T}\mc{T}} & \Sigma_{\mc{T}\mc{S}} \\
  \Sigma_{\mc{S}\mc{T}} & \Sigma_{\mc{S}\mc{S}}
\end{bmatrix}, 
\]  
we can rewrite the distribution as  
\begin{equation}\label{null of X and Z}
  X_{\mc{S}} \sim N_s(0, \Sigma_{\mc{S}\mc{S}}) \quad \text{and} \quad X_{\mc{T}} \mid Z_{\mc{S}} \sim N_t(\xi X_{\mc{S}}, \Sigma_{\mc{T}\mid \mc{S}}),
\end{equation}
where  $\xi = \Sigma_{\mc{T}\mc{S}} \Sigma_{\mc{S}\mc{S}}^{-1} \in \mathbb{R}^{t \times s}, \quad \text{and} \quad \Sigma_{\mc{T}\mid \mc{S}} = \Sigma_{\mc{T}\mc{T}} - \Sigma_{\mc{T}\mc{S}} \Sigma_{\mc{S}\mc{S}}^{-1} \Sigma_{\mc{S}\mc{T}}$. 
Although CRTs operate with no assumption on $Y$, for tractable power analysis, we consider the linear regression model that assumes
\begin{equation}\label{linear model}
    {Y} \mid {X} \sim N\left({X}_{\mc{T}}^{\top}\beta_{\mc{T}} + {X}_{\mc{S}}^{\top}\beta_{\mc{S}}, \sigma^2\right),
\end{equation}
where $\beta_{\mc{S}}\in \bbR^t$, $\beta_{\mc{S}}\in \bbR^s$, and $\sigma>0$.
For the model in \eqref{linear model}, testing \eqref{eq: CI-T} is equivalent to testing \(H_0: \beta_{\mc{T}} = \mathbf{0}\). 
We use \(\theta\) to denote all the parameters \((\beta_{\mc{T}}, \beta_{\mc{S}}, \Sigma, \sigma)\). The parameter space is defined as:
$$ \begin{aligned}
          \Theta = \bigg\{\theta = (\beta_{\mc{T}}, \beta_{\mc{S}}, \Sigma, \sigma) : & \, \beta_{\mc{T}} \in \mathbb{R}^t, \, \beta_{\mc{S}} \in \mathbb{R}^s, \, M_0^{-1} \leq \lambda_{\rm min}(\Sigma) \leq \lambda_{\rm max}(\Sigma) \leq M_0, \\
          & \, 0 < \sigma \leq M_1, \, \text{and } \|\beta_{\mc{T}}\|_2^2 + \|\beta_{\mc{S}}\|_2^2 \leq M_2 \bigg\},
      \end{aligned}$$
where \(M_0 > 1\) and \(M_1, M_2 > 0\) are universal constants. The null parameter subspace is 
\(
\Theta_0 = \left\{\theta = (\beta_{\mc{T}}, \beta_{\mc{S}}, \Sigma, \sigma) \in \Theta : \beta_{\mc{T}} = \mathbf{0} \right\}\).  
Here, the conditions  
\(
M_0^{-1} \leq \lambda_{\min}(\Sigma) \leq \lambda_{\max}(\Sigma) \leq M_0
\)
and  
\(
0 < \sigma \leq M_2
\)
are two mild regularity assumptions on the design and noise level that have been employed by related works \citep{10.1214/16-AOS1461, 10.1214/17-AOS1604}. 
Let \( \beta = (\beta_{\mathcal{T}}^{\top}, \beta_{\mathcal{S}}^{\top})^{\top} \in \mathbb{R}^{p \times 1} \). 
Since $\operatorname{Var}(Y_i) = \sigma^2 + \beta^{\top} \Sigma \beta$, the bound on the norm of \(\beta\) can be interpreted as a constraint on the variance of \(Y_i\), which is a common assumption.

The distribution of $X$ in \eqref{null of X and Z} and the conditional distribution of $Y$ in \eqref{linear model} imply that 
$$\bbE_\theta\left((X_{\mc{T}}-\bbE_\theta(X_{\mc{T}}\mid X_{\mc{S}}))(Y-\bbE_\theta(Y\mid X_{\mc{S}}))\mid X_{\mc{S}}\right)=\Sigma_{\mc{T}\mid \mc{S}}\beta_{\mc{T}} .$$
This suggests a straightforward test statistic and characterizations of alternative populations via the magnitude of $\Sigma_{\mc{T}\mid \mc{S}}\beta_{\mc{T}}$. In particular, we consider two different types of alternatives.

\textbf{Dense alternatives.}
We consider the $\ell_2$-norm and define the alternative space as 
$$
{\Theta}_{\rm dense}(h) = \bigg\{ \theta = (\beta_{\mc{T}}, \beta_{\mc{S}}, \Sigma, \sigma) \in {\Theta} : \|\Sigma_{\mc{T}\mid \mc{S}}\beta_{\mc{T}}\|_2 \geq h\bigg\}.
$$
For dense alternatives, we suppose $n>s+t$ so that the classical F-test can be implemented. We consider the test statistic defined as $$T_{\rm dense}(\mathbf{Y},\mathbf{X}_{\mc{T}},\mathbf{X}_{\mc{S}})=\left\|\left(\mathbf{Y}^{\top}\mathbf{P}_{\mc{R}}^{\top}\mathbf{P}_{\mc{R}}\mathbf{Y}\right)^{-1/2}\mathbf{Y}^{\top}\mathbf{P}_{\mc{R}}^{\top}\mathbf{P}_{\mc{R}}\mathbf{X}_{\mc{T}}\left(\mathbf{X}_{\mc{T}}^{\top}\mathbf{P}_{\mc{R}}^{\top}\mathbf{P}_{\mc{R}}\mathbf{X}_{\mc{T}}\right)^{-1/2}\right\|_2^2. $$
To understand this statistic, consider regressing $Y$ and each element of $X_{\mc{T}}$ on $X_{\mc{S}}$ separately, where the residuals are $\mathbf{P}_{\mathcal{R}}^{\top}\mathbf{P}_{\mathcal{R}} Y$ and $\mathbf{P}_{\mathcal{R}}^{\top}\mathbf{P}_{\mathcal{R}} \mathbf{X}_{\mathcal{T}}$ respectively. 
Consequently, $T_{\rm dense}$ equals to the $R^2$ statistic computed based on regressing the residual of $Y$ on the residual of $X_{\mc{T}}$. 
Using this statistic is equivalent to using the \textbf{LM-SSR} statistic.

\textbf{Sparse alternative.}
We consider the $\ell_\infty$-norm and define the alternative space as 
$${\Theta}_{\rm sparse}(h) = \bigg\{ \theta = (\beta_{\mc{T}}, \beta_{\mc{S}}, \Sigma, \sigma) \in {\Theta} : \|\Sigma_{\mc{T}\mid \mc{S}}\beta_{\mc{T}}\|_\infty \geq  h\bigg\}. 
$$
For sparse alternatives, we consider the test statistic defined as 
$$T_{\rm sparse}(\mathbf{Y},\mathbf{X}_{\mc{T}},\mathbf{X}_{\mc{S}})=\max_{j\in \mc{T}} \left|\frac{\mathbf{Y}^{\top}\mathbf{P}_{\mc{R}}^{\top} \mathbf{P}_{\mc{R}} \mathbf{X}_j}{\left\|\mathbf{P}_{\mc{R}}\mathbf{Y}\right\|\|\mathbf{P}_{\mc{R}}\mathbf{X}_j\|}\right|,$$ 
which is the maximum (absolute) correlation (i.e., \textbf{MaxCor}) between the residual \(\mathbf{P}_{\mathcal{R}}^{\top}\mathbf{P}_{\mathcal{R}} \mathbf{Y}\) and the residual \(\mathbf{P}_{\mathcal{R}}^{\top}\mathbf{P}_{\mathcal{R}} \mathbf{X}_j\) among \( j \in \mathcal{T} \).

We consider this asymptotic setting: as the sample size $n$ increases, the dimensions $s$ and $t$ increase while  the separation $h=h_n$ vanishes. 
Theorem~\ref{power analysis} characterizes the asymptotic power of the MVN-CRT with appropriate test statistics.

\begin{theorem}\label{power analysis}
  Fixed $0<\al<\beta<1$ and $s\leq c_1 n$ for some positive constant $c_1\in (0,1)$.
  Suppose $M \geq \max \left(2 \alpha^{-1}, \log \left(2 (1-\beta)^{-1}\right)\right)$. 
  \begin{enumerate}
    \item \textbf{(Dense alternative)} Suppose $t\leq c_2 n$ for some positive constant $c_2<1-c_1$. 
    With test statistics $T_{\rm dense}$, 
    we have
    $$\liminf_{n\to\infty}\inf_{\theta\in {\Theta}_{\rm dense}(h_{1,n})}\bbE_\theta \phi_{T_{\rm dense},\al}\geq \beta,$$
   where $h_{1,n}= C_1{t^{1/4}}/{\sqrt{n}}$ for some constant $C_1>0$.
    \item \textbf{(Sparse alternative)} Suppose $\log t = o(n)$. With test statistics $T_{\rm sparse}$, we have
    $$\liminf_{n\to\infty}\inf_{\theta\in {\Theta}_{\rm sparse}(h_{2,n})}\bbE_\theta \phi_{T_{\rm sparse}, \alpha}\geq \beta$$
    where $h_{2,n} = C_2\sqrt{\log (et)/{n}}$ for some constant $C_2>0$.
\end{enumerate}
\end{theorem}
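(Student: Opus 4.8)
The plan is to analyze each test separately by tracking the distribution of the chosen test statistic under the null randomizations versus under the alternative, and to show that the observed statistic $T_0$ exceeds the appropriate quantile of the null-copy statistics with probability at least $\beta$. The common backbone is this: by Proposition~\ref{prop: multi gaussian CRT valid}, conditionally on $(\mathbf{X}_{\mc{S}},\mathbf{X}_{\mc{T}}^\top\mathbf{X}_{\mc{T}})$ the copies $\widetilde{\mathbf{X}}_{\mc{T}}^{(j)}$ are exchangeable with $\mathbf{X}_{\mc{T}}$, so each $T_m$ is an i.i.d.\ draw from the "null distribution" of $T$ given these sufficient statistics. Since $M\ge 2\alpha^{-1}$, the test rejects whenever $T_0$ exceeds the $(1-\alpha/2)$-quantile of that conditional null law with probability tending to $1$ over the draw of copies (a standard binomial-tail argument using $M\ge\log(2(1-\beta)^{-1})$ gives the $\beta$ slack). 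So the real work is: (i) identify the conditional null law of $T$ and a high-probability upper bound $q_{1-\alpha/2}$ for its $(1-\alpha/2)$-quantile, and (ii) show the \emph{observed} $T_0$ exceeds that bound with probability $\ge\beta+o(1)$ uniformly over the alternative parameter space.

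For the dense case, the key structural fact is that $T_{\rm dense}$ is exactly the squared norm of the (whitened) projection of the residualized $\mathbf{Y}$ onto the column space of the residualized $\mathbf{X}_{\mc{T}}$, i.e.\ an $R^2$-type statistic living in the $(n-s)$-dimensional residual space $\mathbf{P}_{\mc{R}}$. First I would condition on $\mathbf{X}_{\mc{S}}$ and on $\mathbf{W}:=\mathbf{P}_{\mc{R}}\mathbf{X}_{\mc{T}}$; under the null, $\mathbf{P}_{\mc{R}}\mathbf{Y}$ is spherically symmetric in $\R^{n-s}$ (after residualizing out $\mathbf{X}_{\mc{S}}$, which kills $\beta_{\mc{S}}$), so $T_m$ is distributed as the squared norm of the projection of a uniform direction onto a fixed $t$-dimensional (or $r$-dimensional) subspace — precisely a $\mathrm{Beta}(t/2,(n-s-t)/2)$ variable, hence concentrating around $t/(n-s)$ with fluctuations of order $\sqrt{t}/n$; thus $q_{1-\alpha/2}\le t/(n-s)+C\sqrt{t}/n$. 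Under the alternative, decompose $\mathbf{P}_{\mc{R}}\mathbf{Y}=\mathbf{W}\beta_{\mc{T}}+\mathbf{P}_{\mc{R}}(\mathbf{X}_{\mc{S}}\beta_{\mc{S}}+\bs\varepsilon)$ and note $\mathbf{X}_{\mc{S}}\beta_{\mc{S}}$ is annihilated by $\mathbf{P}_{\mc{R}}$; the "signal" contribution to $T_0$ is of order $\beta_{\mc{T}}^\top(\mathbf{X}_{\mc{T}}^\top\mathbf{P}_{\mc{R}}^\top\mathbf{P}_{\mc{R}}\mathbf{X}_{\mc{T}})\beta_{\mc{T}}/\|\mathbf{P}_{\mc{R}}\mathbf{Y}\|^2$, which by concentration of the sample Gram matrix around $(n-s)\Sigma_{\mc{T}\mid\mc{S}}$ and of $\|\mathbf{P}_{\mc{R}}\mathbf{Y}\|^2$ around $(n-s)\mathrm{Var}(Y)$ is of order $\|\Sigma_{\mc{T}\mid\mc{S}}^{1/2}\beta_{\mc{T}}\|_2^2/n \gtrsim \|\Sigma_{\mc{T}\mid\mc{S}}\beta_{\mc{T}}\|_2^2/(M_0 n)\ge h_{1,n}^2/(M_0 n)=C_1^2\sqrt t/(M_0 n^2)\cdot n$... — the point is that with $h_{1,n}=C_1 t^{1/4}/\sqrt n$ the signal term is of order $C_1^2\sqrt t/n$, which dominates the null quantile's fluctuation $C\sqrt t/n$ once $C_1$ is chosen large, giving the separation. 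Making "with probability $\ge\beta$" precise requires a Gaussian CLT for the cross term $\beta_{\mc{T}}^\top\mathbf{X}_{\mc{T}}^\top\mathbf{P}_{\mc{R}}^\top\mathbf{P}_{\mc{R}}\bs\varepsilon$, which is asymptotically normal with variance $\sigma^2\|\mathbf{P}_{\mc{R}}\mathbf{X}_{\mc{T}}\beta_{\mc{T}}\|^2$, so $T_0$ has a location-shifted (noncentral-$F$-like) distribution and the $\beta$ threshold fixes $C_1$.

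For the sparse case, I would condition on $(\mathbf{X}_{\mc{S}},\mathbf{X}_{\mc{T}})$ and use that under the null $\mathbf{P}_{\mc{R}}\mathbf{Y}$ is spherically symmetric, so each coordinate statistic $\mathbf{Y}^\top\mathbf{P}_{\mc{R}}^\top\mathbf{P}_{\mc{R}}\mathbf{X}_j/(\|\mathbf{P}_{\mc{R}}\mathbf{Y}\|\,\|\mathbf{P}_{\mc{R}}\mathbf{X}_j\|)$ is the cosine of the angle between a uniform direction and a fixed vector in $\R^{n-s}$, hence sub-Gaussian-like with tails $\exp(-(n-s)u^2/2)$; a union bound over $j\in\mc{T}$ gives that the null copy maximum, and hence $q_{1-\alpha/2}$, is at most $C'\sqrt{\log(et)/n}$. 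Under $\Theta_{\rm sparse}(h_{2,n})$ there is a coordinate $j^\star$ with $|(\Sigma_{\mc{T}\mid\mc{S}}\beta_{\mc{T}})_{j^\star}|\ge h_{2,n}$; the population correlation between the $j^\star$-residual of $Y$ and $X_{j^\star}$ is then of order $h_{2,n}/\mathrm{const}$, and by concentration the sample correlation at $j^\star$ is at least $c\,h_{2,n} - o_P(1/\sqrt n)$, which exceeds $C'\sqrt{\log(et)/n}$ when $C_2$ is large enough; the $\beta$ slack comes from a CLT for this single sample correlation. The main obstacle in both parts will be establishing the \emph{uniform} (over $\theta\in\Theta_{\rm alt}$) high-probability concentration of the sample quantities — in particular the sample Gram matrix $\mathbf{X}_{\mc{T}}^\top\mathbf{P}_{\mc{R}}^\top\mathbf{P}_{\mc{R}}\mathbf{X}_{\mc{T}}$ around its mean in operator norm when $t$ is proportional to $n$ (requiring a random-matrix bound à la Bai--Yin / Davidson--Szarek, for which $s+t\le c n$ with $c<1$ is exactly what keeps $n-s-t$ a constant fraction of $n$) — together with tracking that all $O(1)$-constants can be taken uniform because $\Theta$ bounds $\lambda_{\min}(\Sigma),\lambda_{\max}(\Sigma),\sigma,\|\beta\|_2$; and, technically, handling the degenerate regime $n-s\le t$ in the dense case where $\mathbf{Q}=\mathbf{P}_{\mc{R}}\mathbf{X}_{\mc{T}}$ and $r<t$, which changes the effective dimension in the Beta law but not the conclusion.
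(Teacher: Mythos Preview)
Your overall strategy is correct and closely parallels the paper's: bound the null-copy quantile, then show $T_0$ exceeds it with probability $\ge\beta$ under the alternative. Two differences are worth noting.

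First, your route to the null-copy law goes through ``$\mathbf{P}_{\mc{R}}\mathbf{Y}$ is spherically symmetric under the null.'' This is valid by exchangeability (the law of $T_m$ under any $\theta$ equals the law of $T_0$ under $H_0$, conditional on the sufficient statistics), but it is cleaner and more direct to work with the randomness of $U^{(m)}$: since $\mathbf{P}_{\mc R}\widetilde{\mathbf X}_{\mc T}^{(m)}=U^{(m)}\mathbf Q$ with $U^{(m)}$ Haar on the Stiefel manifold, $T_m$ is a quadratic (dense) or linear (sparse) form in $U^{(m)}$, and one applies Stiefel-manifold concentration inequalities directly. Your Beta$(t/2,(n-s-t)/2)$ identification is exactly right and gives the same $t/(n-s)+O(\sqrt t/n)$ quantile; the paper's route just makes clearer that no null assumption is invoked for the copies.

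Second, and more substantively: for the dense alternative you flag operator-norm concentration of the residual Gram matrix $\mathbf X_{\mc T}^\top\mathbf P_{\mc R}^\top\mathbf P_{\mc R}\mathbf X_{\mc T}$ as ``the main obstacle,'' anticipating Bai--Yin/Davidson--Szarek bounds. This machinery is unnecessary. The paper writes $T_{\rm dense}=S_2/S_1$ with $S_1=\sigma^{-2}\|\mathbf P_{\mc R}\mathbf Y\|^2$ and $S_2=\sigma^{-2}\mathbf Y^\top\mathbf P_{\mc R}^\top\mathbf P_{\mc R}\mathbf X_{\mc T}(\mathbf X_{\mc T}^\top\mathbf P_{\mc R}^\top\mathbf P_{\mc R}\mathbf X_{\mc T})^{-1}\mathbf X_{\mc T}^\top\mathbf P_{\mc R}^\top\mathbf P_{\mc R}\mathbf Y$, observes that \emph{conditional on $\mathbf P_{\mc R}\mathbf X_{\mc T}\beta_{\mc T}$} both are noncentral $\chi^2$ (with $n-s$ and $t$ degrees of freedom, common noncentrality $\|\mathbf P_{\mc R}\mathbf X_{\mc T}\beta_{\mc T}\|^2/\sigma^2$), and that the noncentrality itself is $(\beta_{\mc T}^\top\Sigma_{\mc T\mid\mc S}\beta_{\mc T})\cdot\chi^2_{n-s}$. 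Three scalar chi-square tail bounds then finish the argument---no random-matrix theory at all. Your Gram-matrix route would work but is a detour; the noncentral-$\chi^2$ decomposition is the shortcut you are missing.
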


Theorem~\ref{power analysis} implies that the power of the MVN-CRT converges to 1 for (1) dense alternatives with $\|\Sigma_{\mc{T}\mid \mc{S}}\beta_{\mc{T}}\|_2\gg t^{1/4}/\sqrt{n}$, and (2) sparse alternatives with $\|\Sigma_{\mc{T}\mid \mc{S}}\beta_{\mc{T}}\|_\infty \gg \sqrt{\log (et)/n}$. 
The next theorem shows that the separation rate required in Theorem~\ref{power analysis} is optimal and cannot be improved by any method.
For \(0 < \alpha < 1\) and a given parameter space \(\Theta\), let the set of all tests of level \(\alpha\) w.r.t. the null space $\Theta_0$ be 
\[
\Phi_\alpha(\Theta_0) = \left\{\psi: (\mathbf{Y}, \mathbf{X}) \to [0,1] \mid \sup_{\theta \in \Theta_0} \mathbb{E}_\theta [\psi] \leq \alpha \right\}.
\]

\begin{theorem}\label{lower bound}
  Fix $0<\al<\beta<1$.
  (1). Suppose $t\leq C_1n^2$ for some constant $C_1>0$, then we have
  $$\limsup_{n\to\infty}\sup_{\psi\in \Phi_\al({\Theta}_0)}\inf_{\theta\in {\Theta}_{\rm dense}(h_{3,n})}\bbE_\theta\psi \leq \beta,$$
  where $h_{3,n} = c_3{t^{1/4}}/{\sqrt{n}}$ for some constant $c_3>0$. \\ 
  (2). Suppose $\log t\leq C_2n$ for some constant $C_2>0$, then we have
  $$\limsup_{n\to\infty}\sup_{\psi\in \Phi_\al({\Theta}_0)}\inf_{\theta\in {\Theta}_{\rm sparse}(h_{4,n})}\bbE_\theta\psi \leq \beta,$$
  where $h_{4,n}=c_4\sqrt{\log t/n}$ for some constant $c_4>0$.
\end{theorem}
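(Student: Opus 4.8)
\noindent\textbf{Proof proposal for Theorem~\ref{lower bound}.}

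The plan is to prove both lower bounds by Le~Cam's mixture (two-point) method, after reducing the composite alternative to a Gaussian linear-model detection problem. Since we only need a lower bound on the worst-case power, it suffices to restrict the alternative to a convenient sub-family. Fix $\Sigma=\mathbf{I}_p$, $\beta_{\mc{S}}=\mathbf{0}$, and $\sigma=\sigma_0$ for a small constant $\sigma_0\le M_1$; these choices lie in $\Theta$ (recall $M_0>1$) and give $\Sigma_{\mc{T}\mid \mc{S}}=\mathbf{I}_t$, so a parameter with $\beta_{\mc{T}}=b$ belongs to $\Theta_{\rm dense}(h)$ iff $\|b\|_2\ge h$ and to $\Theta_{\rm sparse}(h)$ iff $\|b\|_\infty\ge h$. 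Under this reduction $\mathbf{X}$ has i.i.d.\ $N(0,\mathbf{I}_p)$ rows (a law free of $b$) and $\mathbf{Y}\mid\mathbf{X}\sim N(\mathbf{X}_{\mc{T}}b,\sigma_0^2\mathbf{I}_n)$, and the null becomes $b=\mathbf{0}$. Any $\psi\in\Phi_\al(\Theta_0)$ satisfies $\E_{P_0}\psi\le\al$ at the fixed null $P_0$, so for any prior $\pi$ supported in the reduced alternative, writing $P_\pi=\int P_b\,\mathrm{d}\pi(b)$, the two elementary facts $\inf_\theta\E_\theta\psi\le\E_{P_\pi}\psi$ and $\E_{P_\pi}\psi-\E_{P_0}\psi\le\|P_\pi-P_0\|_{\mathrm{TV}}$ give $\sup_{\psi}\inf_\theta\E_\theta\psi\le\al+\|P_\pi-P_0\|_{\mathrm{TV}}$. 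It therefore suffices to exhibit priors with $\|P_\pi-P_0\|_{\mathrm{TV}}<\beta-\al$ eventually; since $c_3,c_4$ are at our disposal I will in fact drive this distance to $0$.

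For the dense case I would take $\pi$ to be the law of $b=h_{3,n}\,u$ with $u$ uniform on the sphere $S^{t-1}$ (equivalently $u_j=\pm t^{-1/2}$ i.i.d.), so that $\|b\|_2=h_{3,n}$; for the sparse case, $\pi$ the law of $b=h_{4,n}\,e_J$ with $J$ uniform on $\mc{T}$, so $\|b\|_\infty=h_{4,n}$. Bounding $\|P_\pi-P_0\|_{\mathrm{TV}}\le\tfrac12\sqrt{\chi^2(P_\pi\|P_0)}$, I would compute the $\chi^2$ by the second-moment identity $1+\chi^2(P_\pi\|P_0)=\E_{b,b'}\E_{P_0}[L_bL_{b'}]$ with $b,b'$ i.i.d.\ $\sim\pi$ and $L_b=\mathrm{d}P_b/\mathrm{d}P_0$. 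Integrating out $\mathbf{Y}$ first gives $\E_{P_0}[L_bL_{b'}]=\E_{\mathbf{X}_{\mc{T}}}\exp\!\big(\sigma_0^{-2}b^\top\mathbf{X}_{\mc{T}}^\top\mathbf{X}_{\mc{T}}b'\big)$; since the rows of $\mathbf{X}_{\mc{T}}$ are i.i.d.\ this equals $\big(\E\exp(\sigma_0^{-2}(g^\top b)(g^\top b'))\big)^n$ for $g\sim N(0,\mathbf{I}_t)$, an exponential moment of a Gaussian bilinear form with the closed form $\big[(1-\sigma_0^{-2}b^\top b')^2-\tfrac14\sigma_0^{-4}(\|b\|_2^2+\|b'\|_2^2)^2\big]^{-n/2}$. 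Substituting the priors and using standard concentration --- a $\cosh$/central-limit estimate for $b^\top b'$ in the dense case, and the exact dichotomy $J=J'$ versus $J\ne J'$ in the sparse case, where the $J=J'$ term is $t^{-1}[1-2\sigma_0^{-2}h_{4,n}^2]^{-n/2}=t^{c_4^2/\sigma_0^2-1+o(1)}$ and the $J\ne J'$ (cross) term is governed by $[1-\sigma_0^{-4}h_{4,n}^4]^{-n/2}=e^{(1+o(1))nh_{4,n}^4/(2\sigma_0^4)}$ --- one checks that, for $c_3$ (resp.\ $c_4<\sigma_0$) small enough, $\chi^2(P_\pi\|P_0)\to0$ in the regime $t\lesssim n$ (resp.\ $\log t\lesssim\sqrt n$), which already covers those subranges.

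Extending to the full stated ranges ($t\le C_1n^2$, resp.\ $\log t\le C_2n$) requires a truncated second-moment argument in the spirit of Ingster and of Donoho--Jin, because once the per-row factor above exceeds $1+c/n$ its $n$-th power explodes. I would introduce a truncation event $E$ measurable with respect to $\mathbf{X}_{\mc{T}}$ (and, in the sparse case, the per-coordinate scores $\mathbf{X}_j^\top\mathbf{Y}$) on which the likelihood ratio is $O(1)$-controlled --- concretely, bounding $\|n^{-1}\mathbf{X}_{\mc{T}}^\top\mathbf{X}_{\mc{T}}-\mathbf{I}_t\|$ and the maximal column inner product $\max_{j\ne k}|\mathbf{X}_j^\top\mathbf{X}_k|$, which drives the cross terms --- show $\P_0(E^c)+\sup_\theta\P_\theta(E^c)=o(1)$ uniformly over the reduced alternative, and bound the truncated second moment $\E_{P_0}[(L_\pi\one{E})^2]$. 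Then $\|P_\pi-P_0\|_{\mathrm{TV}}\le o(1)+\tfrac12(\E_{P_0}[(L_\pi\one{E})^2]-1+o(1))^{1/2}$, which is small for small constants.

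The main obstacle is this last truncation step. The delicate points are: (i) choosing $E$ so that it has probability $1-o(1)$ simultaneously under the null \emph{and} under every alternative in the family, since uniform control is exactly what turns the conclusion into a genuine minimax lower bound; and (ii) showing that the truncation excises precisely the heavy tail of the exponential bilinear form without losing appreciable mass --- near the detection boundary the per-row moment is finite only because $h$ is small relative to $\sigma_0$, and its $n$-th power sits just below the explosion threshold, so the bookkeeping of the error terms (the $O((b^\top b')^2)$, $O(\|b\|_2^4)$, and column-correlation contributions, each amplified by the factor $n$) must be carried out carefully; this is where the hypotheses $t\le C_1n^2$ and $\log t\le C_2n$ enter. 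The remaining ingredients --- the Le~Cam reduction, the closed-form bilinear moment, and the tail/concentration bounds --- are routine.
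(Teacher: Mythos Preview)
Your Le Cam/second-moment setup is fine, and you correctly diagnose that the untruncated $\chi^2$ only handles $t\lesssim n$ (dense) and $\log t\lesssim\sqrt n$ (sparse). However, the proposed truncation cannot close the gap, because the obstruction is not a heavy tail of the likelihood ratio but a genuine separation between $P_\pi$ and $P_0$. Under your prior, with $\sigma$ held at $\sigma_0$, each $Y_i$ has marginal law $N(0,\sigma_0^2+\|b\|_2^2)=N(0,\sigma_0^2+h^2)$, versus $N(0,\sigma_0^2)$ under the null; the crude test based on $\|\mathbf Y\|_2^2$ already distinguishes these as soon as $nh^2\gg\sqrt n$, i.e.\ precisely when $t\gg n$ (dense) or $\log t\gg\sqrt n$ (sparse). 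In those regimes $\|P_\pi-P_0\|_{\mathrm{TV}}$ is itself bounded away from zero, so no truncation argument can make it small. The cross term $[1-\sigma_0^{-4}h^4]^{-n/2}$ that you flag is real signal in $\operatorname{Var}(Y)$, not a removable tail artifact.

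The paper's fix is to let $\sigma$ vary along with $\beta_{\mc T}$ in the prior: at each alternative point it sets $\sigma_1=\sqrt{\sigma_*^2-\|\beta_{\mc T}\|_2^2}$, so that $\operatorname{Var}(Y_i)=\sigma_*^2$ is identical under null and alternative. Then the joint covariance of $(Y,X_{\mc T},X_{\mc S})$ differs from the null only in the off-diagonal $Y$--$X_{\mc T}$ block, and a determinant computation (Lemma~11 of \citet{10.1214/16-AOS1461}) yields the clean identity
\[
\bbE_{\theta_*}\!\left[\frac{\rmd\bbP_{\theta_{j_1}}}{\rmd\bbP_{\theta_*}}\,\frac{\rmd\bbP_{\theta_{j_2}}}{\rmd\bbP_{\theta_*}}\right]=\Bigl(1-\sigma_*^{-2}\,\beta_{\mc T}^{(j_1)\top}\beta_{\mc T}^{(j_2)}\Bigr)^{-n},
\]
with no $h^4$ term; orthogonal alternatives contribute exactly $1$. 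The full ranges $t\le C_1n^2$ and $\log t\le C_2n$ then follow from the untruncated second moment, using a finite $k$-sparse prior with $k\asymp\sqrt t$ (dense) or $k=1$ (sparse) and controlling the overlap via the hypergeometric distribution. The hypotheses $t\le C_1n^2$ and $\log t\le C_2n$ enter only to keep the alternatives inside $\Theta$ (so that $\sigma_1>0$ and $\|\beta\|_2^2\le M_2$), not in any tail truncation.
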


Theorem~\ref{lower bound} and Theorem~\ref{power analysis} together establish the rate optimality of the MVN-CRT under both the dense alternative and the sparse alternative. 

When \(\mc{T}\) is a singleton \(\{i\}\), the problem reduces to testing whether \(\beta_i = 0\), which has recently attracted much attention \citep{bradic2022testability, wen2022residual, 10.1093/jrsssb/qkae039, 10.1214/22-AOS2216}. 

When $t>1$, our result is, 
to the best of our knowledge, the first one on the separation rate for testing \(\beta_{\mathcal{T}} = \mathbf{0}\) against the dense alternative in moderately high dimensions. 
Two closely related works are discussed as follows. 
\citet{verzelen_goodness--fit_2010} considered goodness-of-fit tests based on F-statistics and derived power guarantees that are similar to ours. However, their lower bound is about testing $\beta_{1:p}=\mathbf{0}$ rather than $\beta_{\mc{T}}=\mathbf{0}$. 
\citet{10.1214/22-AOS2216} studied two-sample testing of high-dimensional linear regression coefficients.  
Their approach transforms the problem into a one-sample test, whose null is of the form $\beta_{1:p}=\mathbf{0}$. 
This aligns with a special case of our setting where \(s = 0\), but their method and analysis do not extend to testing MCI. 
Furthermore, when \(t \asymp n\), their power guarantee under the dense alternative requires the $\ell_2$-norm to be at least of order \(\sqrt{\log n / n}\), which is suboptimal compared to the minimax rate \(\sqrt{1 / n}\).


\section{Graphical Models}\label{sec: GGM}

This section studies the implementation of MS-CRT for graphical models. 
Let $G=(\mc{V},\mc{E})$ be a graph with vertex set $\mc{V}=[p]$, and let $N_i$ be the neighborhood of node $i$. 
Let $\mc{M}_G$ be a family of distributions of $X$ that satisfy the \textit{local Markov property} for each $i\in \mc{T}$, which means that 
$
X_{i} \indp X_{- \left(N_i \cup \{i\}\right)} \mid X_{N_i}. 
$
This property implies $\mb{P}(X_i \mid X_{- i })=\mb{P}(X_i \mid X_{N_i})$. 
If each conditional distribution $\mb{P}(X_i \mid X_{N_i})$ admits a sufficient statistic, we can efficiently construct exchangeable copies. 

Specifically, suppose the rows of $\mathbf{X}$ are i.i.d.\ samples from a distribution in $\mc{M}_G$. Our goal is to generate copies 
$\widetilde{\mathbf{X}}_{\mc{T}}^{(m)}$ ($m\in [M]$) such that $(\mathbf{X}_{\mc{T}}, \widetilde{\mathbf{X}}_{\mc{T}}^{(1)},\ldots,\widetilde{\mathbf{X}}_{\mc{T}}^{(M)})$ is exchangeable conditional on $\mathbf{X}_{\mc{S}}$. The sampling method proceeds in two layers:
\begin{enumerate}
    \item For each variable $i\in \mc{T}$, we use a local update that replaces $\mb{X}_i$ with a new column $\overline{\mathbf{X}}_i$ generated using only $\mb{X}_{N_i}$ and a sufficient statistic $\Psi_{i}=\Psi_{i}(\mb{X}_i;\mb{X}_{N_i})$. 
    \item 
    Let $\mc{I}$ be a permutation of $\mc{T}$. 
    We generate Markov chains by incorporating the local updates of columns sequentially. Starting from $\mathbf{X}$, a forward sweep along $\mc{I}$ yields an intermediate hub $\mathbf{X}^{(hub)}$. 
    Then starting from $\mathbf{X}^{(hub)}$, multiple copies are generated independently by performing reverse sweeps (using reversed $\mc{I}$). 
\end{enumerate}

\begin{algorithm}[H]
\caption{Sampling exchangeable copies for graphical models}\label{alg: exchangeable}
\hspace*{\algorithmicindent} \textbf{Input:}  $n\times p$ data matrix $\mathbf{X}$, graph $G$ (or neighborhood $N_i$ of each $i\in [p]$), number of copies $M$,  number of iterations $L$ (set to 1 by default), permutation $\mc{I}$ of a subset $\mc{T}\subset [p]$. \\
 \hspace*{\algorithmicindent} \textbf{Step 1:} Start from $\mathbf{X}$ and apply local updates according to the order $\mc{I}$ for $L$ times to generate $\mathbf{X}^{(hub)}$. \\ 
 \hspace*{\algorithmicindent} \textbf{Step 2:} For $m = 1,2, \cdots, M$, independently start from $\mathbf{X}^{(hub)}$ and apply local updates according to the reversed order of $\mc{I}$ for $L$ times to generate $\widetilde{\mathbf{X}}^{(m)}$. \\
\hspace*{\algorithmicindent} \textbf{Output:} $\widetilde{\mathbf{X}}^{(1)}, \ldots, \widetilde{\mathbf{X}}^{(M)}$. 
\end{algorithm}

We present the complete sampling procedure as Algorithm~\ref{alg: exchangeable} and we refer to the resulting CRT as the \textit{graphical conditional randomization test}, or $G$-CRT to emphasize the reliance on the graph. 
To ensure the validity of $G$-CRT, we impose the following assumption on the local update mechanism. 

\begin{condition}\label{condition:local update}
For each $i\in \mc{T}$, let   $\overline{\mathbf{X}}_i$ be the new column yielded by the local update with index $i$. 
Then, (1) $\Psi_{i}(\overline{\mathbf{X}}_i; \mb{X}_{N_i})=\Psi_{i}(\mathbf{X}_i; \mb{X}_{N_i})=\Psi_i$, and (2) either $\mathbf{X}_i=\overline{\mathbf{X}}_i$, or $\overline{\mathbf{X}}_i$ and $\mathbf{X}_i$ are i.i.d. conditional $(\mb{X}_{N_i}, \Psi_i)$. 
\end{condition}

\begin{proposition}\label{prop: exchangeable}
   Suppose $\mathbf{X}$ is a $n\times p$ matrix and $G$ is a graph on $[p]$. 
   Let $\{\widetilde{\mathbf{X}}^{(m)} \}_{m=1}^{M}$ be the output of Algorithm~\ref{alg: exchangeable}.  
   If Assumption~\ref{condition:local update} holds, then $\widetilde{\mathbf{X}}_{-\mc{T}}^{(m)}=\mathbf{X}_{-\mc{T}}$ for all $m\in [M]$ and $ \mathbf{X}_{\mc{T}}, \widetilde{\mathbf{X}}_{\mc{T}}^{(1)}, \cdots, \widetilde{\mathbf{X}}_{\mc{T}}^{(M)}$ are conditionally exchangeable given $\mathbf{X}_{-\mc{T}}$.  
\end{proposition}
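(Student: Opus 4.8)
\textbf{Proof plan for Proposition~\ref{prop: exchangeable}.}

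The plan is to reduce the exchangeability of the full output to a statement about a single local update, and then chain those single-step facts along the forward sweep and the $M$ reverse sweeps. First I would verify the easy structural claim: since each local update with index $i\in\mc{T}$ only modifies column $i$ and leaves all columns outside $\mc{T}$ untouched, every intermediate matrix in the algorithm agrees with $\mathbf{X}$ on the columns indexed by $-\mc{T}$; in particular $\widetilde{\mathbf{X}}^{(m)}_{-\mc{T}}=\mathbf{X}_{-\mc{T}}$ for all $m$. Consequently everything can be carried out conditional on $\mathbf{X}_{-\mc{T}}$, and it suffices to track how the $\mc{T}$-columns evolve.

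The core observation I would isolate as a lemma is a \emph{reversibility} property of one local update: fix $i\in\mc{T}$ and condition on $(\mb{X}_{N_i},\Psi_i)$; then Condition~\ref{condition:local update} says the new column $\overline{\mathbf{X}}_i$ is either equal to $\mathbf{X}_i$ or i.i.d.\ with it given $(\mb{X}_{N_i},\Psi_i)$, so in either case $(\mathbf{X}_i,\overline{\mathbf{X}}_i)\eqd(\overline{\mathbf{X}}_i,\mathbf{X}_i)$ conditionally on $(\mb{X}_{N_i},\Psi_i)$, and $\Psi_i$ is preserved. Since the update only touches column $i$ and uses only $\mb{X}_{N_i}$ (which, for $i\in\mc{T}$, may include other columns of $\mc{T}$ that have already been updated as well as columns in $\mc{S}$), the conditional law of the whole matrix before and after the update, given the ``frozen'' data $(\mb{X}_{-i},\Psi_i)$, is invariant under swapping the pre- and post-update versions of column $i$. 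I would phrase this as: the map that applies local update $i$ is a Markov kernel that is reversible with respect to the conditional distribution of $\mathbf{X}$ given $(\mb{X}_{-\mc{T}},\Psi_i)$, and more usefully, the pair (input column $i$, output column $i$) is exchangeable given everything else. This is exactly the mechanism behind Metropolis-type / sufficient-statistic resampling moves, and it is where the local Markov property enters — it guarantees $\mb{P}(X_i\mid X_{-i})=\mb{P}(X_i\mid X_{N_i})$, so conditioning on $\Psi_i$ together with $\mb{X}_{N_i}$ is the same as the ``correct'' conditioning needed for the resampling step to be a genuine conditional draw.

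Next I would combine the forward sweep and the reverse sweeps. Let $\pi$ denote the composition of local updates along $\mc{I}$ (the forward sweep producing $\mathbf{X}^{(hub)}$) and let $\rho$ denote the composition along the reversed order. The key algebraic point is that $\rho$ applied to $\mathbf{X}^{(hub)}$ runs the \emph{same} updates as $\pi$ but in the opposite order, so by the single-step reversibility chained across the sweep, starting from $\mathbf{X}$, going forward to $\mathbf{X}^{(hub)}$, and then going backward, returns a matrix whose $\mc{T}$-columns have the same conditional law (given $\mathbf{X}_{-\mc{T}}$ and the full vector of preserved sufficient statistics) as the original $\mathbf{X}_{\mc{T}}$ — and moreover the pair $(\mathbf{X}_{\mc{T}},\widetilde{\mathbf{X}}^{(1)}_{\mc{T}})$ is exchangeable. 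Because the $M$ reverse sweeps are run independently from the common hub $\mathbf{X}^{(hub)}$ and each uses fresh randomness, conditional on $\mathbf{X}^{(hub)}$ (and $\mathbf{X}_{-\mc{T}}$) the copies $\widetilde{\mathbf{X}}^{(1)}_{\mc{T}},\ldots,\widetilde{\mathbf{X}}^{(M)}_{\mc{T}}$ are i.i.d.; and the ``backward'' step makes each of them have the same conditional law as $\mathbf{X}_{\mc{T}}$ would have if we ran one more independent backward sweep from the hub. A clean way to package this is: generate $M+1$ i.i.d.\ reverse sweeps from $\mathbf{X}^{(hub)}$, call their outputs $\mathbf{Z}^{(0)},\ldots,\mathbf{Z}^{(M)}$; these are exchangeable by construction; then argue that $(\mathbf{X}_{\mc{T}},\mathbf{Z}^{(1)}_{\mc{T}},\ldots,\mathbf{Z}^{(M)}_{\mc{T}})\eqd(\mathbf{Z}^{(0)}_{\mc{T}},\mathbf{Z}^{(1)}_{\mc{T}},\ldots,\mathbf{Z}^{(M)}_{\mc{T}})$ given $\mathbf{X}_{-\mc{T}}$, using that $\mathbf{X}_{\mc{T}}$ and $\mathbf{Z}^{(0)}_{\mc{T}}$ are related to $\mathbf{X}^{(hub)}_{\mc{T}}$ by mutually reverse operations whose forward/backward compositions preserve the relevant conditional law.

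The main obstacle I anticipate is making the chaining rigorous when $N_i$ for $i\in\mc{T}$ contains other elements of $\mc{T}$: then a later local update in the sweep conditions on columns that earlier updates already altered, so one cannot simply condition on a fixed external environment once and for all. The resolution is to do the argument step by step — write the joint law of the whole sweep as a product of kernels, and at the $k$-th step condition on the sigma-algebra generated by all columns other than the one currently being updated \emph{at that moment} together with the sufficient statistics accumulated so far; the single-step exchangeability gives, inductively, that the law is unchanged if we also record the pre-update column and that this ``shadow'' column could be swapped in. Handling the bookkeeping of which sufficient statistics are preserved (all of $\Psi_i$, $i\in\mc{T}$, are preserved throughout, by part (1) of Condition~\ref{condition:local update}) and confirming that the reversed sweep genuinely undoes the forward sweep in the distributional sense — rather than pointwise — is the delicate part; everything else is routine conditioning and the definition of exchangeability.
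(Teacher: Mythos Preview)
Your approach is correct and matches what the paper indicates. The paper does not give a self-contained proof of Proposition~\ref{prop: exchangeable}; it only remarks after the statement that ``Condition~\ref{condition:local update} ensures the detailed balance for each update of the generated Markov chains'' and otherwise defers to the companion work \citet{lin_goodness--fit_2025} (see Theorem~\ref{thm: exchangeable} in Appendix~\ref{sec: residual rotation}). Your plan is exactly this detailed-balance argument spelled out: each local update $K_i$ is a Markov kernel on the $\mc{T}$-columns that is reversible with respect to $\mb{P}(\mathbf{X}_{\mc{T}}\mid\mathbf{X}_{-\mc{T}})$, so the forward sweep $K_{\rm fwd}$ and the reverse sweep $K_{\rm rev}$ are time-reversals of one another, giving $(\mathbf{X}_{\mc{T}},\mathbf{X}^{(hub)}_{\mc{T}})\eqd(\mathbf{Z}^{(0)}_{\mc{T}},\mathbf{X}^{(hub)}_{\mc{T}})$ for a fictitious extra independent reverse sweep $\mathbf{Z}^{(0)}$; exchangeability then follows from the conditional i.i.d.\ structure of $\mathbf{Z}^{(0)},\ldots,\mathbf{Z}^{(M)}$ given $\mathbf{X}^{(hub)}$. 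Your identification of the only real subtlety --- that $N_i$ can contain other indices of $\mc{T}$, so one must view each $K_i$ as a kernel on the full $\mc{T}$-block rather than condition on a fixed external environment --- and its resolution via step-by-step reversibility is also correct.
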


Condition~\ref{condition:local update} ensures the detailed balance for each update of the generated Markov chains, under which Propositions~\ref{prop: exchangeable} and \ref{prop: CRT valid} together ensure the validity of $G$-CRT. 
Notably, we can implement $G$-CRT with no constraints on the dimension $p$, even when $p$ is arbitrarily larger than $n$.

Though general, deriving a local update mechanism based on conditional distributions given sufficient statistics is nontrivial. 
Below, we consider two examples of graphical models where we have local update mechanisms that satisfy  Condition~\ref{condition:local update}.


\subsection{Gaussian Graphical Models}

Suppose $X\sim \mathbf{N}_p(\bs{\mu}, \bs{\Sigma})$ with mean $\bs{\mu}\in\mathbb{R}^p$ and covariance matrix $\mathbf{\Sigma}\succ \mathbf{0}$. 
Let $\mathbf{\Omega}=\mathbf{\Sigma}^{-1}$ be the precision matrix \citep{Dempster1972}. 
A key property of the precision matrix is that an off-diagonal entry $\omega_{ij}$ equals zero 
if and only if $X_i$ and $X_j$ are conditionally independent given the remaining variables. 
The Gaussian graphical model (GGM) with respect to $G$ is defined as
\begin{equation}\label{eq: model MG}
\mathcal{M}_{G}
= \bigl\{ 
    \mathbf{N}_p(\bmu, \bOmega^{-1}): \bmu \in \mathbb{R}^p, \bOmega \succ \mathbf{0}, \omega_{ij}=0 \text{ if } (i,j)\notin \mathcal{E} 
\bigr\}.
\end{equation}
For GGMs, \citet{lin_goodness--fit_2025} proposed 
the \emph{residual rotation} update for each variable: decompose $\mb{X}_i$ into its projection on the column span of $\mb{X}_{N_i}$ plus a residual, then replace the residual with a randomly rotated copy. 
This update satisfies Condition~\ref{condition:local update} (see Appendix B.1 for details). 

\subsection{Discrete Graphical Models}

Suppose $\mc{M}_G$ is a discrete graphical model and each variable $X_i$ takes values in $\{1,2,\ldots,K_i\}$.
The sufficient statistic for $\mathbf{P}(X_i \mid X_{N_i})$ is the count of each possible value of $X_i$ within subgroups defined by the configuration of its neighbors.
Specifically, for any fixed configuration $v$ of $X_{N_i}$, define 
$$
\mathcal{H}_v = \{ j\in [n] : \mb{X}_{j,N_i}=v\} \quad\text{and}\quad n_{v,k}=\sum_{j\in \mathcal{H}_v}\mathbb{I}\{\mb{X}_{ji}=k\},\quad k\in [K_i].
$$
The collection $\Psi_i=\{n_{v,k}: k\in [K_i], \forall v\}$ forms a sufficient statistic for $\mb{P}(\mb{X}_i\mid \mb{X}_{N_i})$.
Conditional on $(\mb{X}_{N_i}, \Psi_{i})$, the values of $\mb{X}_{\mc{H}_v,i}$ across different $v$ are independent, and within each $\mc{H}_v$, all locations of $\mathbf{X}_{j,i}$ that preserve the counts $\{n_{v,k}:k\in [K_i]\}$ are equally likely. 
Consequently, a local update satisfying Condition~\ref{condition:local update} is to permute the observed values of $X_i$ within each $\mathcal{H}_v$ separately. 
This approach is efficient if the sample size is sufficiently larger than the number of possible neighbor configurations $\{v\}$. Otherwise, the permutation within each $\mc{H}_v$ often remains identical, which makes the resulting test powerless.

\subsection{On the graphical modeling assumption for $X$}\label{sec: G-CRT assumption}

Our $G$-CRT only requires that the local Markov property w.r.t. $G$ holds for $\mc{T}$, which is a relatively mild condition. 
Suppose $G_*$ is the exact graph \textit{faithful} to $\mathbf{P}(X)$, meaning that for $i\neq j$, $X_i$ and $X_j$ are conditionally independent if and only if $(i,j)$ is not an edge in $G_*$. 
The $G$-CRT is valid as long as $G_*\subset G$, that is,  $G$ can contain possibly more edges than $G_*$. 
Appendix~E.1 numerically demonstrates that using moderately oversized super graphs only induces a slight loss in power.

Finding a supergraph $G$ of the exact graph $G_*$ is often achievable in practice. 
For GGMs with $p<n$, $G$ can be chosen conveniently as the complete graph. 
When $p>n$, a suitable $G$ can often be derived from temporal or spatial data structures, domain knowledge, or large historical datasets.
Real-world examples include time series with autoregressive structures, spatial air quality measurements, financial stock prices with cross-sectional dependence,
and gene expression data with gene regulatory networks verified in related studies.

If $G$ has to be selected based on data, the selection should use an independent dataset (or via appropriate data-splitting) to maintain Type I error control of the test.
When unlabeled observations (covariates without responses) are available as in semi‑supervised learning, they can be used in the selection. 
An investigation into optimal strategies is left for future research.

Given the supergraph $G$, the $G$-CRT provides finite-sample Type I error control for testing MCI even with more unknown parameters than the sample size. 
In GGMs, for instance, if $|\mc{T}|$ scales with $p$ and the node degrees of $G$ scale with $n$, then the number of unknown parameters grows at the order of $pn$, which is far larger than $n$, but the $G$-CRT remains valid since it avoids explicit parameter estimation.


\section{Group Selection using MS-CRT}\label{sec: MG-CRT}

In many applications, variables are naturally grouped. The goal of a \textit{group selection} method is to uncover which groups are important for $Y$, rather than selecting individual variables, while controlling the error rates.

Suppose the $p$ covariates are partitioned into $g$ disjoint groups, $\mathcal{T}_1, \ldots, \mathcal{T}_g$. 
For each group $\mathcal{T}_j$, we test the null hypothesis of group-wise conditional independence 
$H_0^{(j)}: Y \indp X_{\mathcal{T}_j} \mid X_{\mathcal{T}_j^c}$. 
Rejecting $H_0^{(j)}$ indicates that group $\mathcal{T}_j$ provides information about $Y$ beyond what is already captured by the other groups.
For each group $\mathcal{T}_j$, we apply MS-CRT to obtain a group-specific p-value, $\text{pVal}_j$, which quantifies the evidence against $H_0^{(j)}$.
Since we conduct $g$ such tests simultaneously, we use multiple testing corrections---specifically the BH or BY procedures \citep{benjamini1995controlling,benjaminiControl2001}, or the more recent e-BH approach \citep{wang_false_2022}---to control the false discovery rate (FDR). 
Let $\mathcal{D}\subset[g]$ be the index set of rejected (discovered) groups, and let $\mathcal{N}$ be the indices where the null holds. The FDR is defined as $\mathbb{E}[|\mc{D}\cap \mc{N}| / \max(|\mc{D}|, 1) ]$. 
Implementation details for BH, BY, and e-BH are provided in Appendix B.2. 
To apply the e-BH procedure, we compute the randomized p-values in \eqref{eqn:pval-random-cts} and transform them to boost e-values using Algorithm~\ref{alg:p_to_e}, whose validity follows from Proposition~\ref{prop: Valid Boosted eValue}.

\begin{algorithm}[htbp]
\caption{Transforming p-values to boosted e-values}\label{alg:p_to_e}
\begin{algorithmic}[1]
    \item[] \textbf{Input:} p-values $\{\text{pVal}_1, \dots, \text{pVal}_g\}$, FDR level $\alpha$.
    \item[] \textbf{Step 1:} 
  For $i\in [g]$, define 
        $\delta_i = \frac{\alpha}{g(1 + i)}$  and calculate 
$b_i = \sum_{k=1}^{i} \delta_k$.
    \item[] \textbf{Step 2:} For each p-value $\text{pVal}_j$, determine the index $i= \sum_{k=1}^{g}\mathbb{I}(\text{pVal}_j > b_k)$ and transform it into a boosted e-value using $ E_j = \frac{g}{\alpha (i + 1)} $
    \item[] \textbf{Output:} Boosted e-values $\{E_1, E_2, \dots, E_g\}$.
\end{algorithmic}
\end{algorithm}

\begin{proposition}\label{prop: Valid Boosted eValue}
Define $T(z) = \frac{g }{\lceil g /z\rceil}\,1_{\{z\ge1\}}$ and fix $\alpha\in (0,1)$. 
If an input $p_j$ of Algorithm~\ref{alg:p_to_e} follows Unif($0,1$), then the output $E_j$ satisfies $\mathbb{E}[T(\alpha E_j)]\leq \alpha$. 
\end{proposition}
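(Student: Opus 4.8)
The plan is to verify that the map $p_j \mapsto E_j$ defined by Algorithm~\ref{alg:p_to_e} produces a valid \emph{boosted e-value} in the sense required by the e-BH procedure, namely $\mathbb{E}[T(\alpha E_j)] \le \alpha$ where $T(z) = \frac{g}{\lceil g/z\rceil}\mathbbm{1}_{\{z \ge 1\}}$. The key observation is that $E_j$ is a deterministic, piecewise-constant, non-increasing function of $p_j$: the thresholds $0 = b_0 < b_1 < b_2 < \cdots < b_g < b_{g+1} := \infty$ (with $b_i = \sum_{k=1}^i \delta_k$, $\delta_k = \frac{\alpha}{g(1+k)}$) partition $[0,1]$, and on the event $\{b_i < p_j \le b_{i+1}\}$ we have the index equal to $i$ and $E_j = \frac{g}{\alpha(i+1)}$. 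So I would first write $\mathbb{E}[T(\alpha E_j)]$ as a finite sum over these pieces, using that $p_j \sim \mathrm{Unif}(0,1)$ so that the probability of the $i$-th piece is $\min(b_{i+1},1) - \min(b_i,1)$, which is at most $b_{i+1} - b_i = \delta_{i+1}$ for $i \le g-1$ and is $0$ for $i \ge g$ (since $b_g \le 1$ is not actually guaranteed — I should check $b_g < 1$, which follows because $b_g = \frac{\alpha}{g}\sum_{k=1}^g \frac{1}{1+k} < \frac{\alpha}{g}\cdot g\cdot \frac12 \cdot$, wait more carefully $\sum_{k=1}^g \frac{1}{k+1} < \log(g+1) $, hmm this need not be $<g$ trivially but $\frac{\alpha}{g}\sum \frac1{k+1} \le \alpha \cdot \frac{1}{g}\cdot (g) = \alpha <1$ using $\frac{1}{k+1}\le 1$; so $b_g < \alpha < 1$, fine).

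**The core computation.** On the $i$-th piece with $i \le g-1$, we have $\alpha E_j = \frac{g}{i+1} \ge 1$, so the indicator in $T$ is active and $T(\alpha E_j) = \frac{g}{\lceil g/(g/(i+1))\rceil} = \frac{g}{\lceil i+1 \rceil} = \frac{g}{i+1}$ since $i+1$ is a positive integer. Thus on this piece $T(\alpha E_j) = \frac{g}{i+1}$, and it occurs with probability at most $\delta_{i+1} = \frac{\alpha}{g(i+2)}$. On the piece $i = g$ (i.e. $p_j > b_g$), this has probability $1 - b_g$, but there $\alpha E_j = \frac{g}{g+1} < 1$, so $T(\alpha E_j) = 0$ and contributes nothing. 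Hence
\begin{equation}\label{eq:ebound}
\mathbb{E}[T(\alpha E_j)] \le \sum_{i=0}^{g-1} \frac{g}{i+1}\cdot \frac{\alpha}{g(i+2)} = \alpha \sum_{i=0}^{g-1}\frac{1}{(i+1)(i+2)} = \alpha \sum_{i=0}^{g-1}\left(\frac{1}{i+1} - \frac{1}{i+2}\right) = \alpha\left(1 - \frac{1}{g+1}\right) \le \alpha,
\end{equation}
where the telescoping sum collapses. This gives the claim.

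**Anticipated obstacles and loose ends.** The computation itself is routine telescoping; the only genuinely delicate points I would be careful about are (i) confirming that $b_i \le 1$ for all $i \le g$ so that the uniform-probability bookkeeping ``$\mathbb{P}(b_i < p_j \le b_{i+1}) \le \delta_{i+1}$'' is legitimate and no mass is lost or double-counted at the top of the range, and (ii) checking the boundary arithmetic of $\lceil \cdot \rceil$ — specifically that $\lceil i+1\rceil = i+1$ because $i+1\in\mathbb{Z}_{>0}$, and that the indicator $\mathbbm{1}_{\{\alpha E_j \ge 1\}}$ is exactly the event $i \le g-1$. I would also note that $p_j \sim \mathrm{Unif}(0,1)$ is exactly the conclusion available from the second Remark in Section~\ref{sec: MS-CRT} (the randomized p-value in \eqref{eqn:pval-random-cts}), so this proposition plugs into the e-BH FDR guarantee of \citet{wang_false_2022}, whose admissibility condition for e-values is precisely $\mathbb{E}[T(\alpha E_j)]\le \alpha$; no further probabilistic input is needed. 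In short, the whole proof is the display \eqref{eq:ebound} preceded by the piecewise description of $E_j$ and the verification that $T(\alpha E_j) = g/(i+1)$ on piece $i < g$ and $0$ on piece $i = g$.
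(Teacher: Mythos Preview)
Your proof is correct and follows essentially the same approach as the paper's: the paper first states a general lemma (partitioning $(0,1)$ by cut points $b_k$, evaluating $T(qY)$ piecewise, and bounding the expectation by a sum $\sum \delta_k/\lceil \ell_k\rceil$), then specializes it with $\ell_k=k$ and $\delta_k=1/(k+1)$ to obtain the same telescoping sum $\sum_{k=1}^g \frac{1}{k(k+1)}=1-\frac{1}{g+1}$ that you computed directly. Your bookkeeping (verifying $b_g<\alpha<1$, identifying $T(\alpha E_j)=g/(i+1)$ on piece $i\le g-1$ and $0$ on piece $i=g$) matches the paper's argument exactly, just without the intermediate abstraction.
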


Proposition~\ref{prop: Valid Boosted eValue} implies that for randomized p-values, Algorithm~\ref{alg:p_to_e} outputs boosted e-values (see Equation (11) in Section~4.2 of \citet{wang_false_2022}). 
Consequently, the selection using the BH procedure applied to $\{1/E_j\}_{j=1}^{g}$ controls the FDR at level $\alpha$. This guarantee follows Theorem~3 of \citet{wang_false_2022} and our model-X assumption.

\section{Simulation Studies}\label{sec: Simulation}

We conduct extensive simulation experiments to evaluate the numerical performance of $G$-CRT for MCI testing and group selection under various settings.
Code can be found on GitHub (\url{https://github.com/xiaotong0201/CSS-GGM.git}).

\subsection{Multivariate Conditional Independence Testing}\label{sec: simulation MCI testing}

We evaluate the power of the $G$-CRT with several test statistics (Section~\ref{sec: CRT statistic}) for testing MCI. 
We highlight the key findings to demonstrate the effectiveness of the $G$-CRT. Additional details on simulation setups, method implementation, and comprehensive comparisons are provided in Appendix D.

We examine four data-generating models for $Y$: linear regression, nonlinear regression, logistic regression, and nonlinear binary regression. 
The covariate vector $X$ is sampled from a normal distribution, where the graph $G$ is obtained by randomly permuting the nodes of a band graph with bandwidth $6$.
We use a constant factor $\theta$ to control the magnitude of the dependence of $Y$ on $X$. 
The null hypothesis is $H_0: Y\indp X_{\mc{T}}\mid  X_{-\mc{T}}$ with $\mc{T}=\{1,2,\ldots, 8\}$; note that $\mc{T}$ is not particular to the band graph structure because the nodes of $G$ have been permuted. 
Our $G$-CRT uses graph $G$ but does not use the precision matrix or the mean.

For each model, we consider both low-dimensional ($p=20, n=50$) and high-dimensional ($p=120, n=80$) settings, and we compare our method with several existing tests. 
In each figure, power is estimated as the fraction of rejections at level $\alpha=0.05$ across at least 400 repetitions and error bars represent one standard error. 

\textbf{Gaussian linear regression.}
Figure~\ref{fig: linear regression} presents power curves for Gaussian linear regressions as a function of $\theta$. 
In the low-dimensional setting (Figure~\ref{fig:llw}), the $G$-CRT with either LM-SST or LM-SSR statistic achieves remarkable performance: it outperforms the Bonferroni-adjusted dCRT and matches the power of the F-test. 
This is notable since the F-test is the UMPI test in low-dimensional linear regression, which suggests the $G$-CRT can match optimal tests when they are applicable. Turning to the high-dimensional setting (Figure~\ref{fig:hlw}), $G$-CRT with either L1-R-SSR or L1-R-SST statistic consistently achieves substantially higher power than the Bonferroni-adjusted dCRT \citep{liu_fast_2022} and de-sparsified Lasso (De-Lasso) \citep{van_de_geer_asymptotically_2014}.

\begin{figure}[!bth]
    \centering

\subfloat[Low-dimensional]{\includegraphics[width = .48\textwidth]{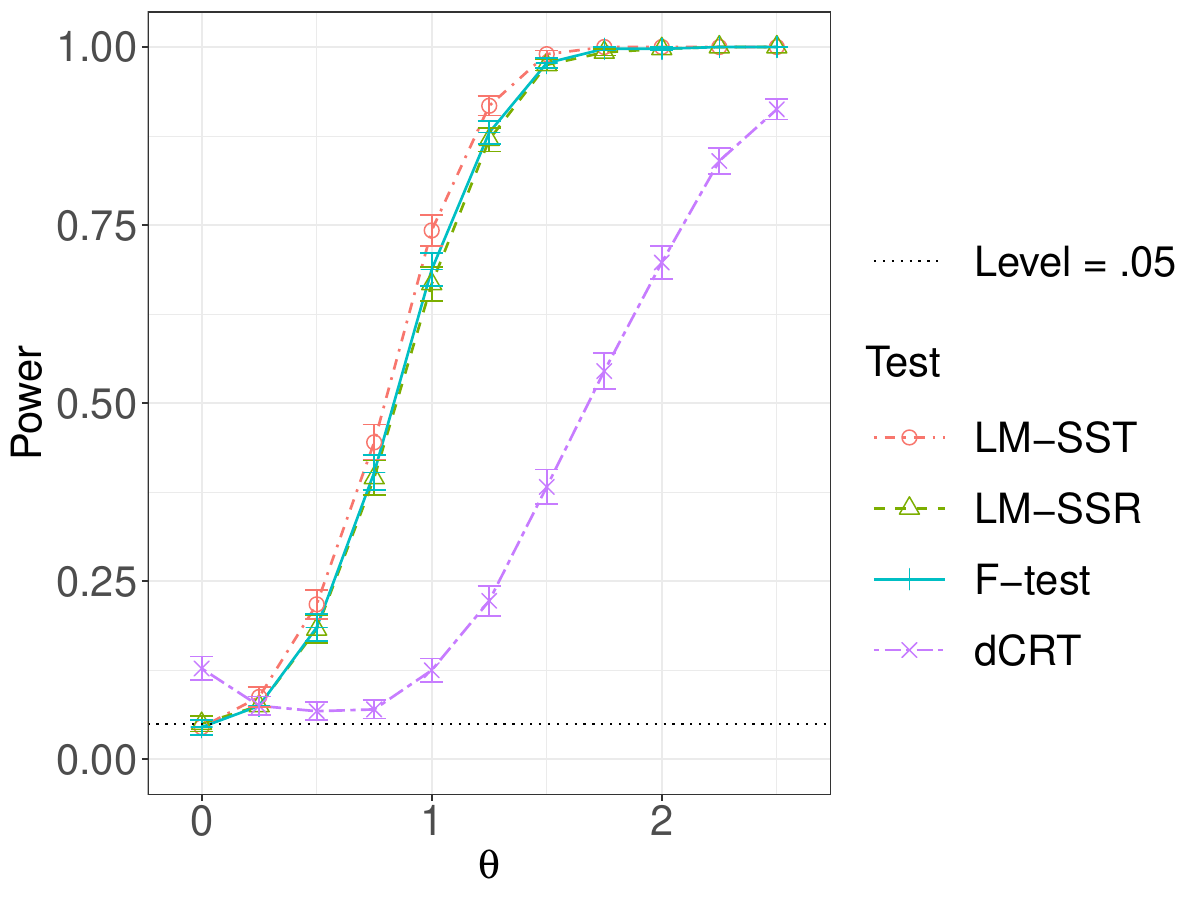} \label{fig:llw}}
	\subfloat[High-dimensional]{\includegraphics[width = .48\textwidth]{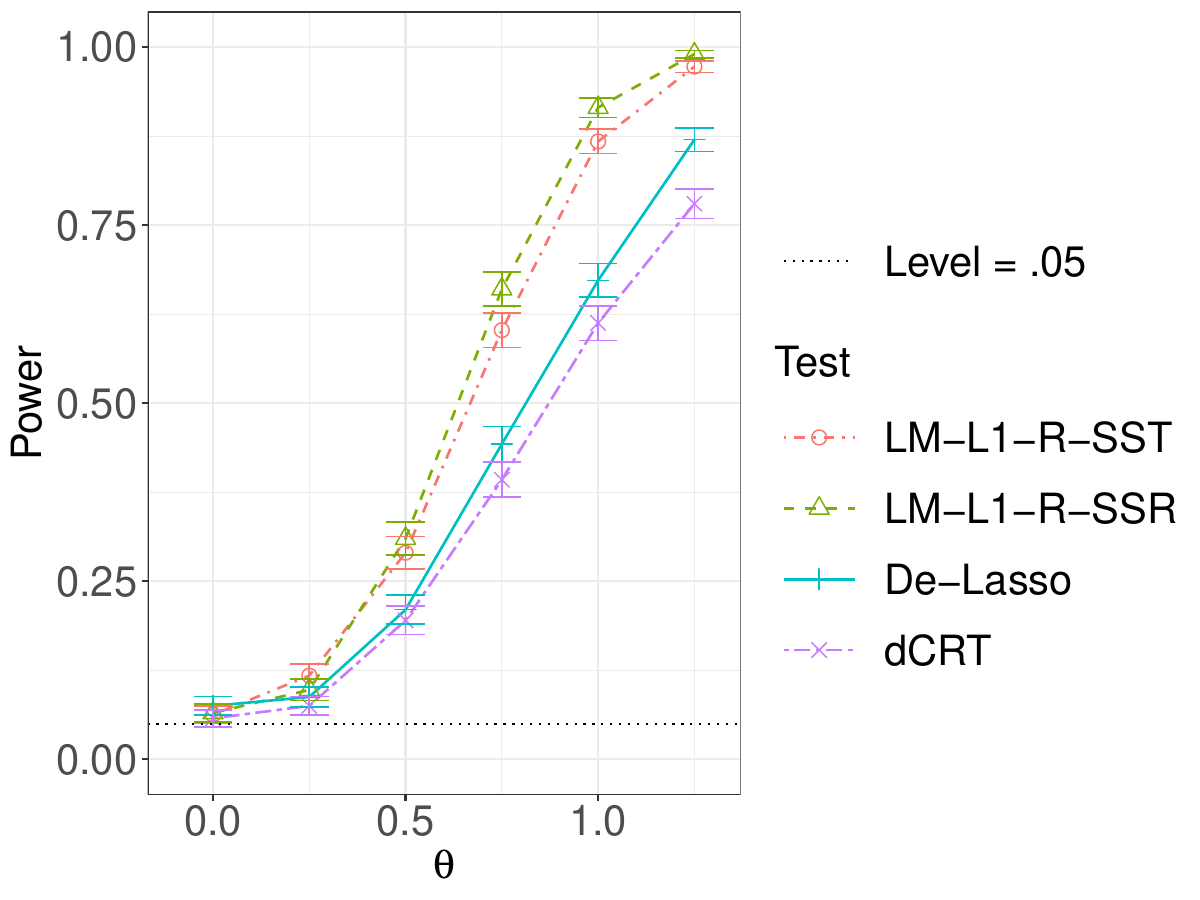}\label{fig:hlw}}
    
    \caption{Power comparison in linear regressions. The first two tests are the $G$-CRT with different statistics, and the other two are existing methods. }\label{fig: linear regression}
\end{figure}

\textbf{Nonlinear regression.}
Figure~\ref{fig: nonlinear regression} presents power curves for nonlinear regressions, where we focus on machine learning-based statistics 
In both low-dimensional (Figure~\ref{fig:llm}) and high-dimensional (Figure~\ref{fig:hlm}) settings, $G$-CRT with either RF-D or RF-RR statistic achieves substantially higher power than the F-test (low-dimensional), the Bonferroni-adjusted de-sparsified Lasso (high-dimensional), and dCRT (with random forests). 
The direct RF statistic is comparable to the two distillation-based statistics in the low-dimensional case, but the advantage of distillation is evident in the high-dimensional case.

\begin{figure}[!bth]
    \centering

	\subfloat[Low-dimensional]{
\includegraphics[width=0.48\textwidth]{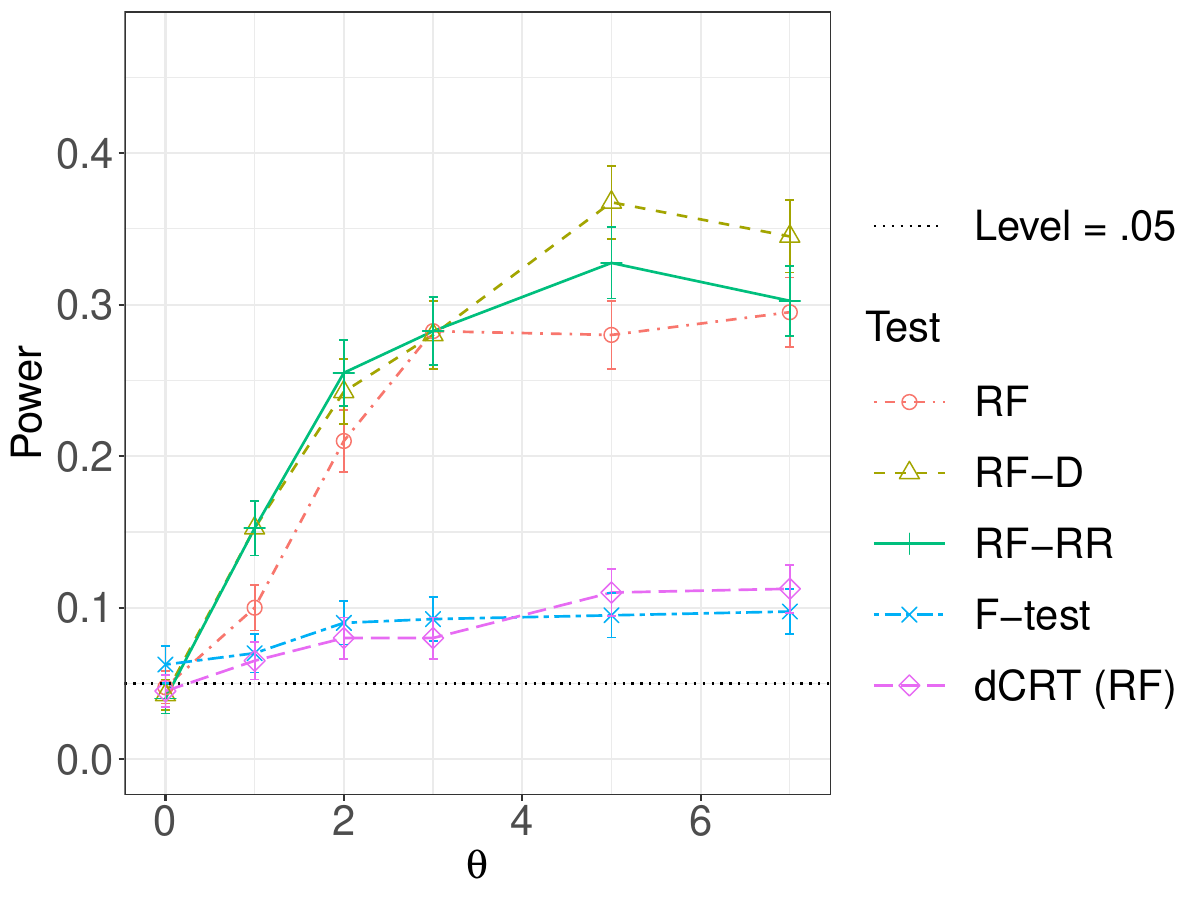}\label{fig:llm}
    }
    	\subfloat[High-dimensional]{
    \includegraphics[width=0.48\textwidth]{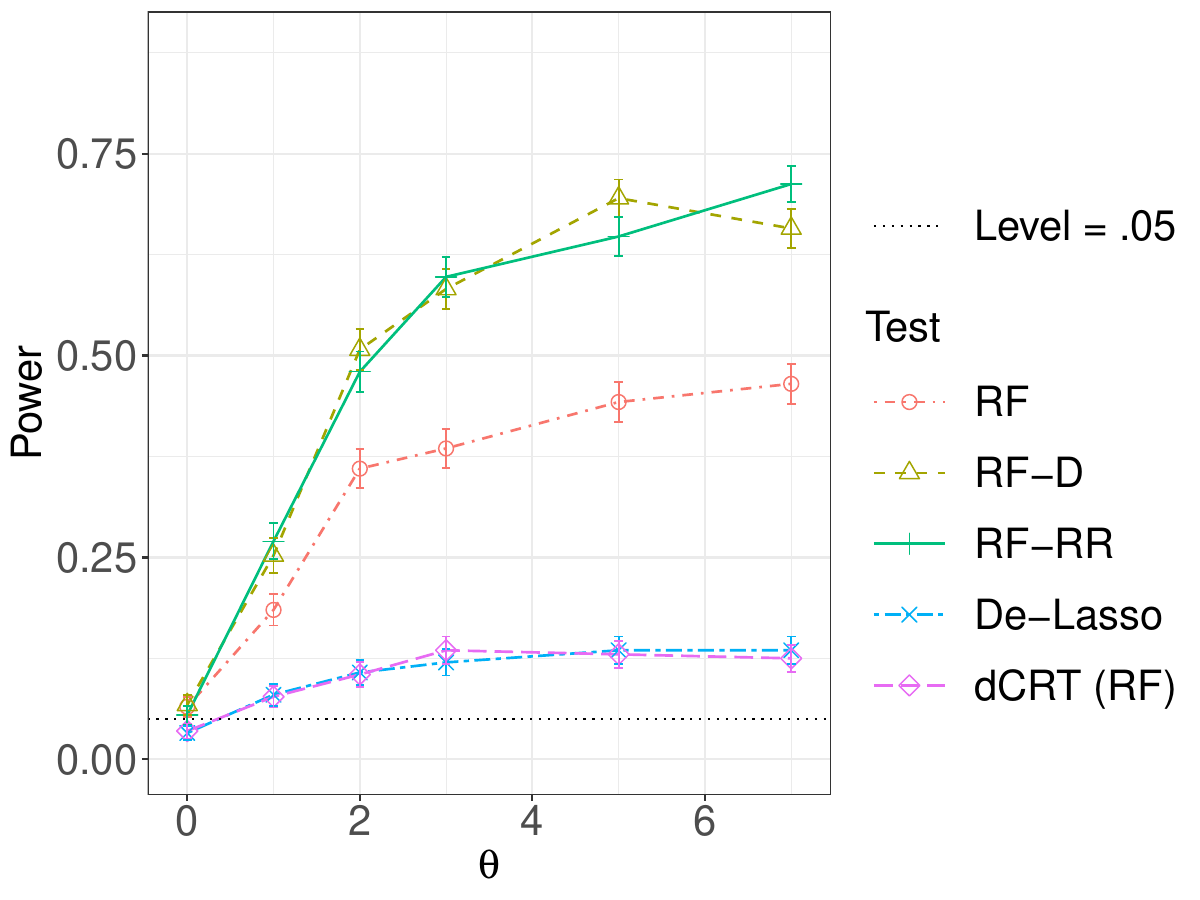}\label{fig:hlm}
    }
    \caption{Power comparison in nonlinear regressions. The first three tests are the $G$-CRT with different statistics, and the other two are existing methods.}\label{fig: nonlinear regression}
\end{figure}

\textbf{Logistic regression}
Figure~\ref{fig: logistic regression} presents power curves for logistic regressions. 
In the low-dimensional case (Figure~\ref{fig:lglw}), $G$-CRT with the GLM-Dev statistic outperforms the Bonferroni-adjusted dCRT, while the classical Chi-squared test fails to control Type I error. 
In the high-dimensional case (Figure~\ref{fig:hglw}), $G$-CRT with L1-D or L1-R-SST statistic performs significantly better than both the Bonferroni-adjusted dCRT and the bias-correction method CGM \citet{cai2023statistical}. 

\begin{figure}[!th]
    \centering

	\subfloat[Low-dimensional]{
\includegraphics[width=0.48\textwidth]{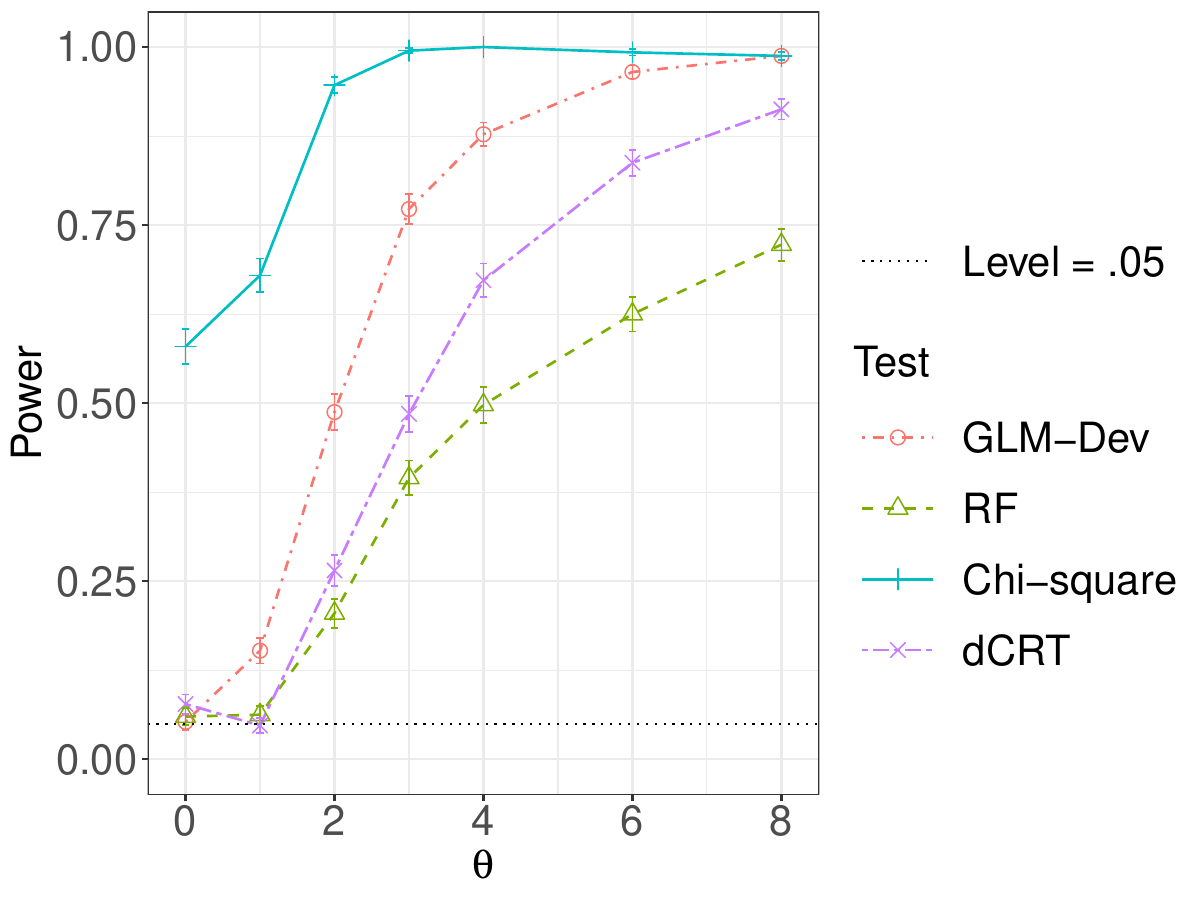}\label{fig:lglw}
    }
    	\subfloat[High-dimensional]{
    \includegraphics[width=0.48\textwidth]{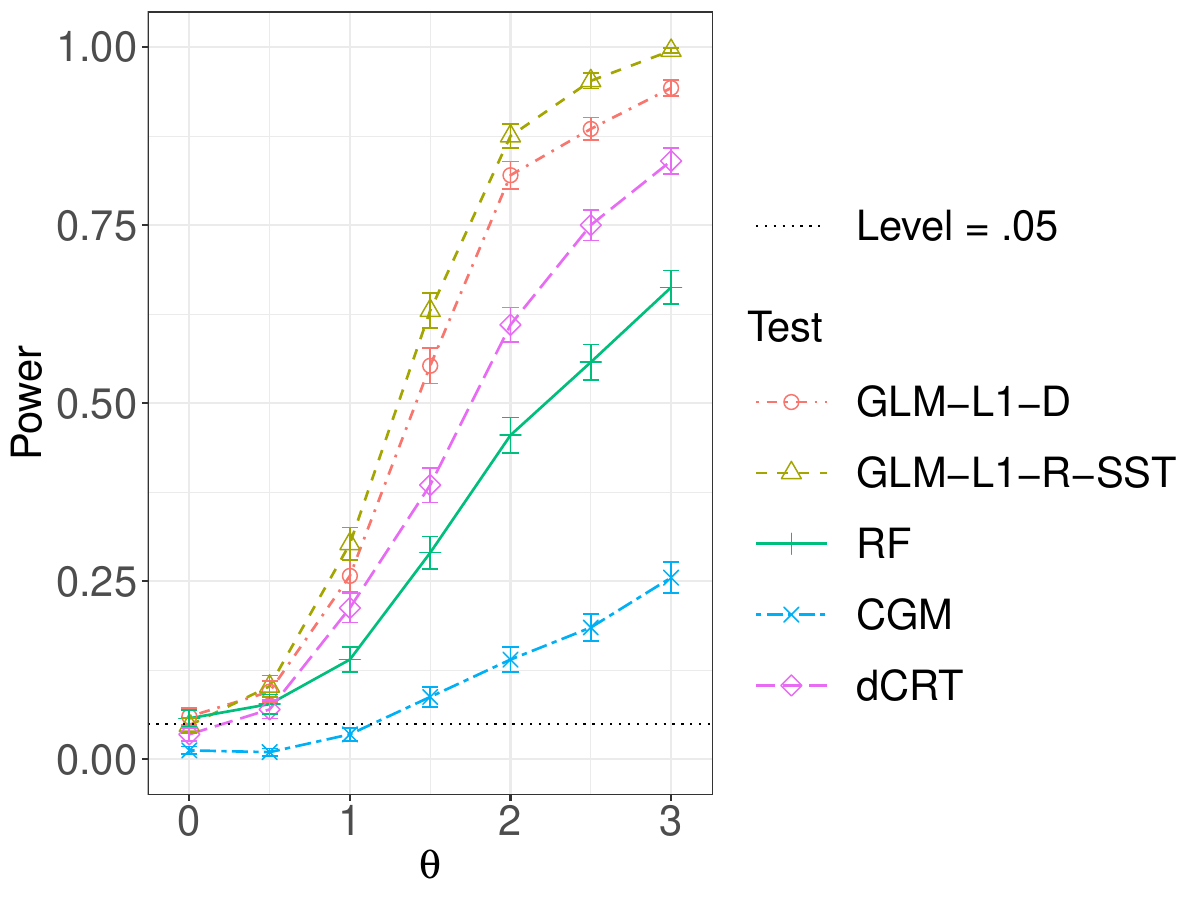}\label{fig:hglw}
    }
    \caption{Power comparison in logistic regressions. The first two low-dimensional tests and the first three high-dimensional tests are the $G$-CRT with different statistics, and the other two are existing methods. 
    }\label{fig: logistic regression}
\end{figure}

\textbf{Nonlinear binary regression}
Figure~\ref{fig: nonlinear binary} presents power curves for nonlinear binary regressions, where we focus on machine learning-based statistics. 
The findings parallel those in Figure~\ref{fig: nonlinear regression}, but the Bonferroni-adjusted dCRT is nearly powerless in both low- and high-dimensional settings. 
Besides, the classical Chi-square test fails to control the Type I error in low dimensions, and CGM is powerless in high dimensions.

\begin{figure}[!ht]
    \centering
	\subfloat[Low-dimensional]{
\includegraphics[width=0.48\textwidth]{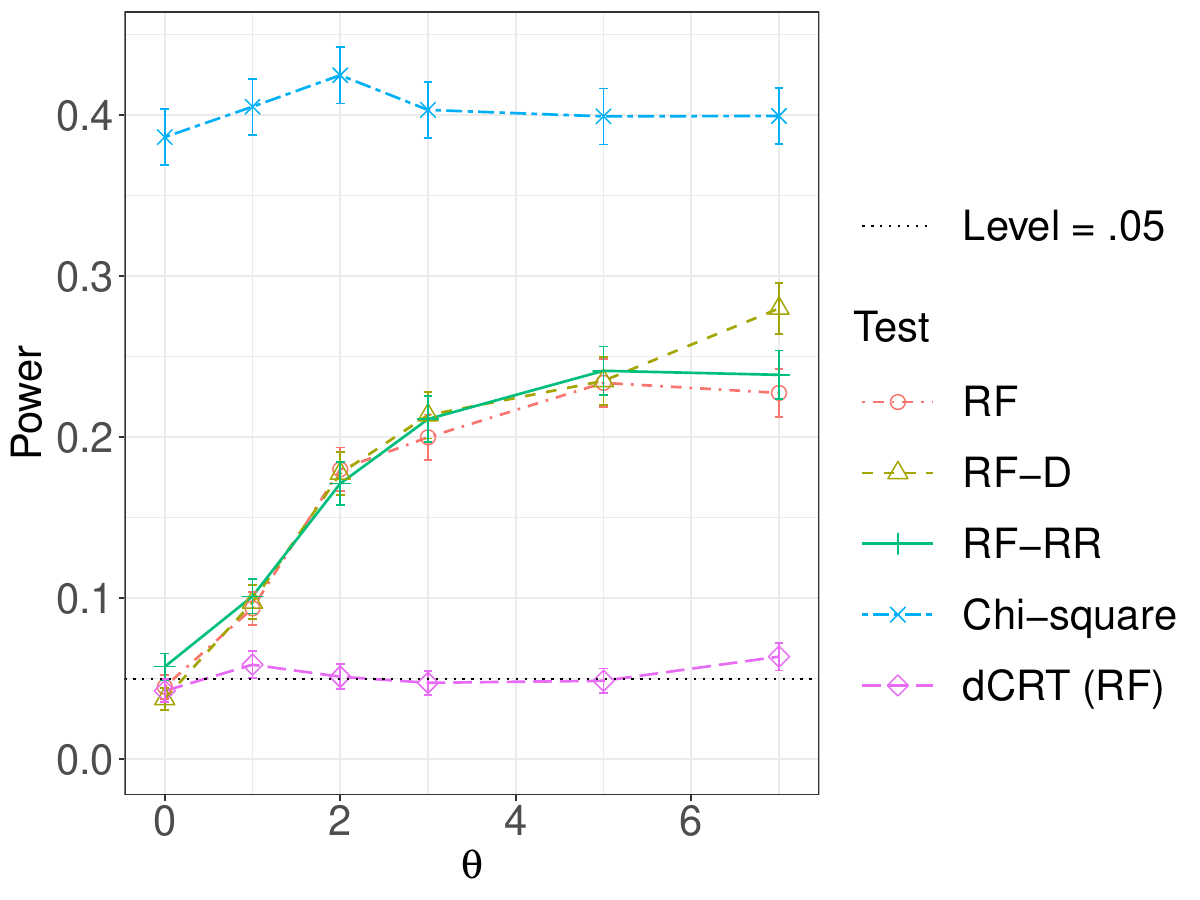}\label{fig:lglm}
    }
    	\subfloat[High-dimensional]{
    \includegraphics[width=0.48\textwidth]{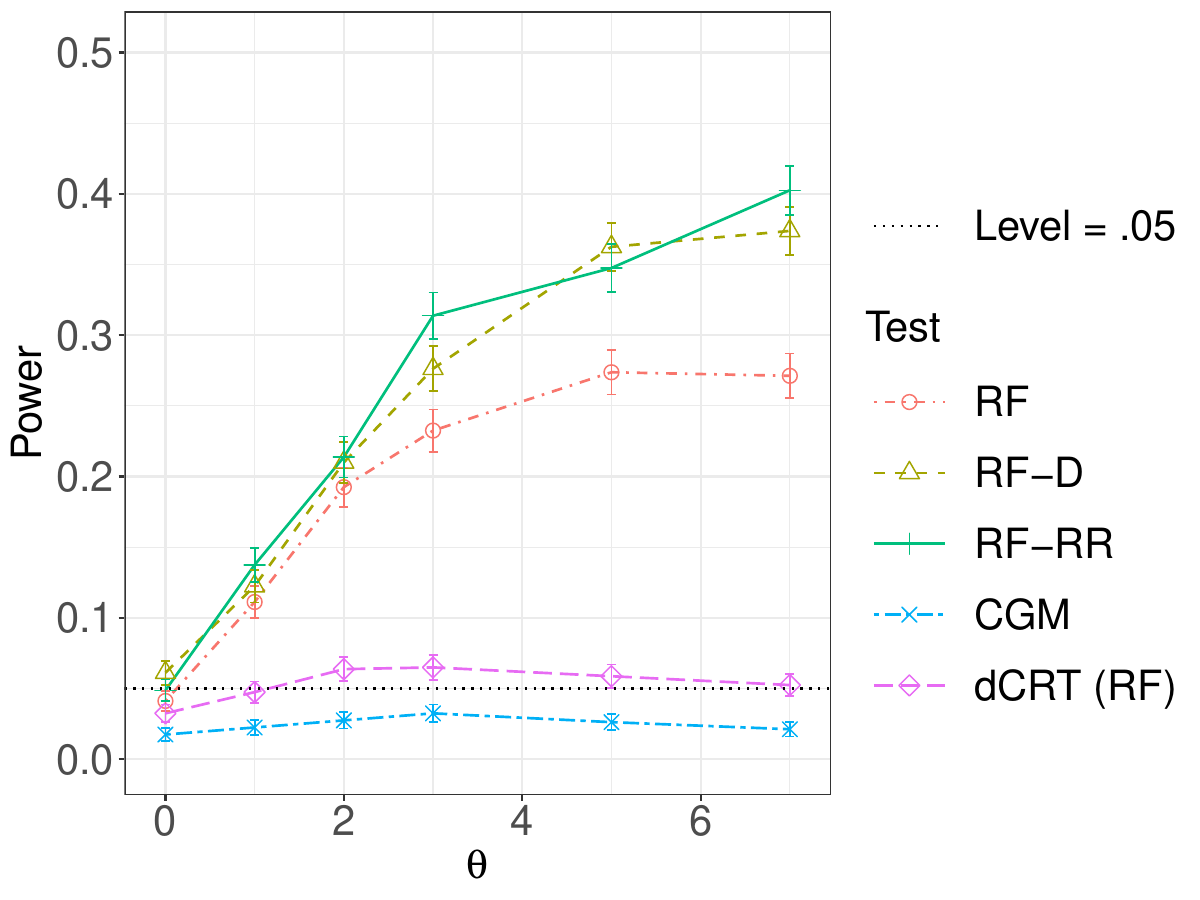}\label{fig:hglm}
    }
    \caption{Power comparison in nonlinear binary regressions. The first three tests are the $G$-CRT with different statistics, and the other two are existing methods.}\label{fig: nonlinear binary}
\end{figure}

Across all figures, the $G$-CRT controls the Type I error and significantly outperforms Bonferroni-adjusted tests by effectively capturing the joint signal. 

Additional experiments are provided in the supplementary material to investigate other aspects of the $G$-CRT. Key findings are summarized as follows: \\
(1) Appendix~E.1 studies the power when a moderately oversized supergraph is used in the $G$-CRT, which induces only a slight loss in power compared to the true graph; \\
(2) Appendix~E.2 examines the case where $X_{\mc{T}}$ is univariate and shows that the $G$-CRT remains comparable to the competitors; \\
(3) Appendix~E.3 studies the performance under violations of the Gaussian assumption and we observe that the $G$-CRT is robust to moderate violations.

\subsection{Group Selection}
To evaluate the group selection method using $G$-CRT, we modified the high-dimensional linear regression experiment by partitioning covariates into 15 groups (each of size 8), with 20 true signals distributed among three groups. 
Other settings remain unchanged, and the magnitude of signals is controlled by $\theta$. 
For comparison, we included another model-X method,  the group knockoff filter proposed by \citet{chu_second-order_2024}, implemented with the true distribution of $X$ to ensure a strong benchmark.
Each experiment is repeated 400 times at a nominal FDR level of 0.1.

Figure~\ref{fig: group selection} summarizes the results. Our method, combined with appropriate FDR control procedures (BH, BY, or eBH), provides effective FDR control and achieves high power. 
Notably, the eBH procedure balances power and theoretical FDR guarantees; BH has high power but lacks a FDR  guarantee, while BY has a theoretical guarantee but is too conservative. 
The group knockoff filter performs poorly here, mainly because it is designed for scenarios with hundreds of groups while the number of groups is small (15) in this experiment. 
These findings suggest that using the eBH procedure with $G$-CRT is effective for group selection when the number of groups is moderate.

\begin{figure}[!hbt]
    \centering

\subfloat[FDR]{\includegraphics[width = .48\textwidth]{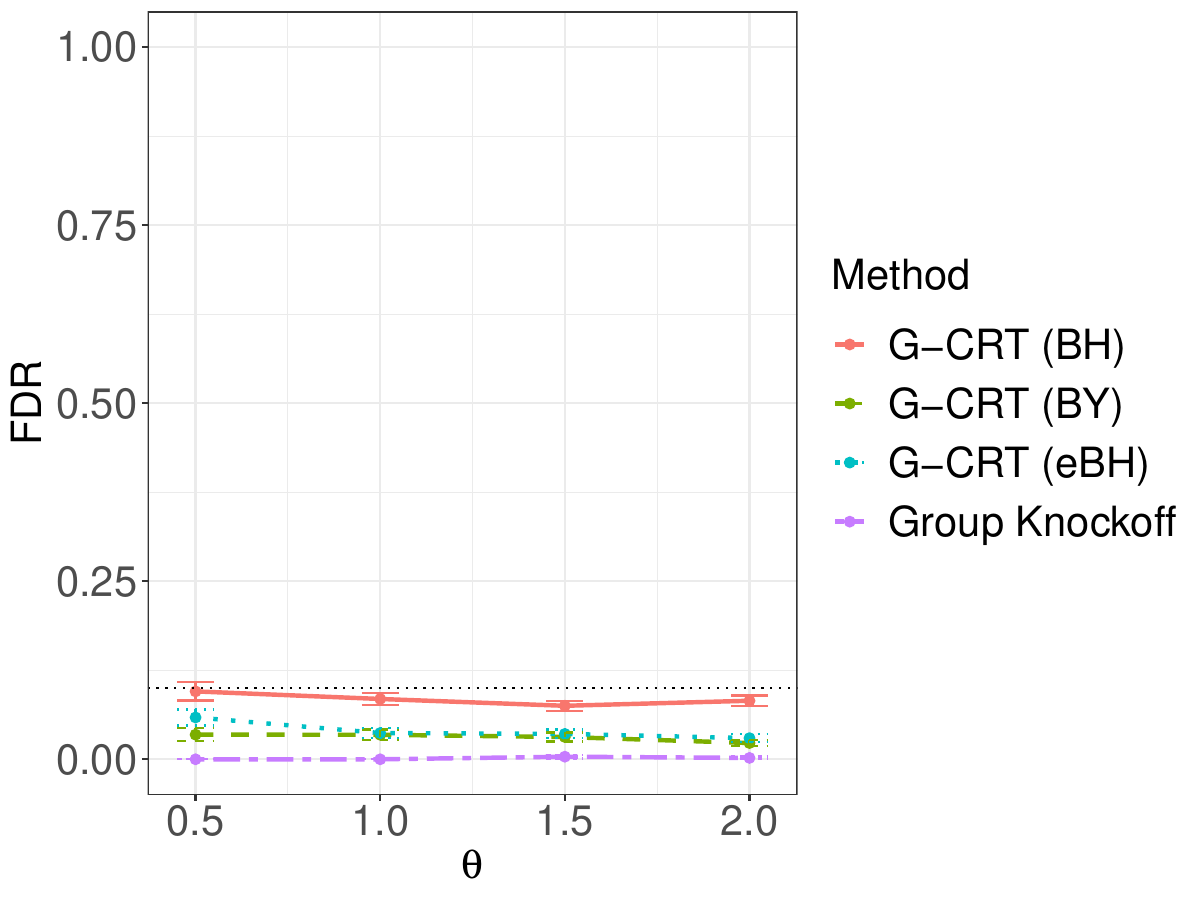} \label{fig:group_selection_FDR}}
	\subfloat[Power]{\includegraphics[width = .48\textwidth]{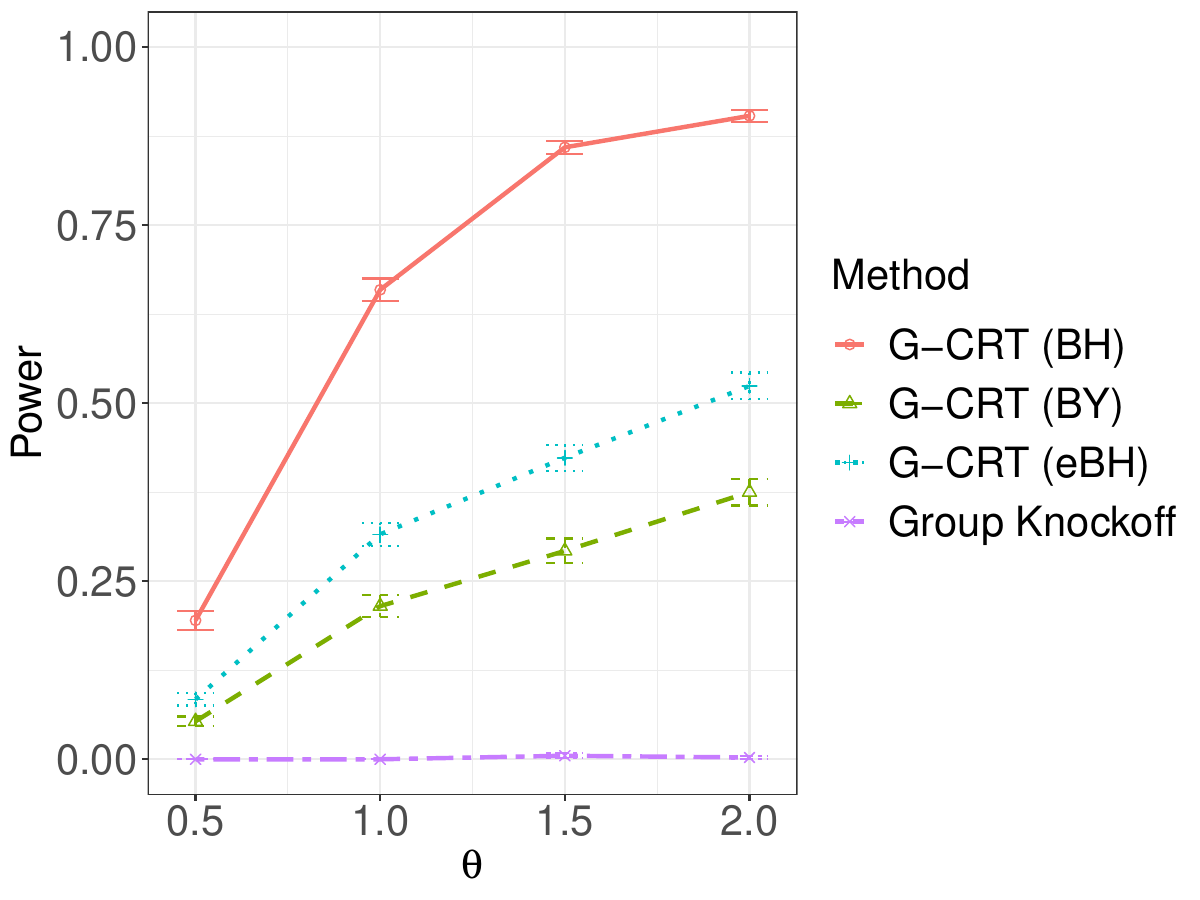}\label{fig:group_selection_power}}
    
    \caption{Performance of group selection methods in linear regression with $(n,p)=(80,120)$. There are 15 groups of size 8, and three groups are important. FDR level$=10\%$. 
    }\label{fig: group selection}
\end{figure}


\section{Real-World Examples}
\label{sec: Application}
This section examines the effectiveness of our proposed methods in real datasets.  

\subsection{Dependence of fund return}\label{sec: fund}

In finance, delayed disclosure of portfolio holdings in hedge funds and mutual funds presents challenges and opportunities in financial markets \citep{agarwal2013uncovering}.

In this example, we illustrate how to infer whether a fund's return $Y$ depend on certain stocks. 
Specifically, we want to test if $Y$ depends on a subset of stocks $X_{\mc{T}}$, after adjusting for other stocks $X_{-\mc{T}}$, i.e., $Y\indp X_{\mc{T}}\mid X_{-\mc{T}}$. 
We employ a fund with disclosed holdings to establish a ground truth, but the strategy here can guide investors in future analysis on funds with undisclosed holdings.

In our example, $Y$ is the weekly average return of the SPDR Dow Jones Industrial Average ETF Trust (DIA) and $X$ are the weekly average returns of 103 large stocks. 
We consider the period from September 1st, 2020 to Jun 30th, 2022, which results in a dataset with $n=92$ and $p=103$. 
Details about the dataset can be found in Appendix~F.1.1. 

We consider two examples of set $\mc{T}$:
\begin{enumerate}
    \item $\mc{T}_1=\{\text{CAT, DOW, HON, MMM, PG}\}$:
    five stocks from traditional sectors such as materials and manufacturing with small or moderate stock prices during the period.

      \item  $\mc{T}_2=\{\text{AMGN, CRM,  MSFT, UNH }\}$: stocks from biotechnology, technology, and healthcare sectors with relatively large stock prices during the period.

\end{enumerate}
Since the holdings of the fund DIA are disclosed and both $\mc{T}_1$ and $\mc{T}_2$ have a real influence on $Y$, we can compare different tests by their abilities to reject the null hypothesis. 

We consider the $G$-CRT with the statistics L1-R-SSR and RF-RR, and compare it with the de-sparsified Lasso and the dCRT with either Gaussian Lasso models (L1) or random forests (RF). 
In Appendix~F.1.2, we build a GGM for $X$ and conduct a simulation study to show that the four CRTs control the Type-I error while the de-sparsified Lasso does not.

\begin{table} [t]
    \caption{p-values of MCI tests with different $\mc{T}$ on the DIA returns.\label{tab: fund p-value}}
      \centering
\begin{tabular}{l|cccc}
\hline
$\mc{T}$  & LM-L1-R-SSR & RF-RR & dCRT (L1) & \multicolumn{1}{c}{dCRT (RF)} \\ 
\hline $\mc{T}_1$  & 0.043 & 0.048 & 0.070 & 0.285 \\
$\mc{T}_2$  & 0.001 & 0.007 & 0.000 & 0.107 \\
\hline 
\end{tabular}

\end{table}

We conduct our $G$-CRTs and the dCRTs on the DIA returns with $\mc{T}_1$ and $\mc{T}_2$, and summarize their p-values in Table~\ref{tab: fund p-value}. 
For $\mc{T}_1$, both $G$-CRTs reject the null hypothesis at the level $\alpha=0.05$, while the dCRTs accept the null hypothesis. 
In this example, each of the five stocks in $\mc{T}_1$ has a relatively small influence on the fund return, but they together as a whole are influential, whose signal can be detected by the $G$-CRTs.

For $\mc{T}_2$, all the considered tests yield small p-values. 
We subsequently consider a situation where the response variables are contaminated by additive noises and thus the ``signal-to-noise" ratio becomes smaller. 
The details of this experiment can be found in Appendix~F.1.3, where the $G$-CRT with the LM-L1-R-SSR statistic has uniformly higher rejection rates and exhibits robustness to additive noises.  

In conclusion, the $G$-CRT is shown powerful in detecting weak joint dependence and robust to noise. 
Its superior performance suggests its potential in financial application.

\subsection{Breast Cancer Relapse}\label{sec: breast}
We illustrate the application of the $G$-CRT in a situation where the classical likelihood ratio test is not applicable. 
The breast cancer relapse (BCR) dataset is collected by \citet{wang_gene-expression_2005} for identifying genes associated with metastasis (data accessible at NCBI GEO database \citep{barrett2012ncbi}, accession GSE2034). 
We focus on a subgroup of 209 patients and analyze the 60 genes identified for this subgroup by the original study.

In our experiment, we consider a response variable $Y$ with three categories:  (1) no relapse, (2) relapsed without metastases, and (3) relapsed with metastases, each considered within four years. 
$X$ is the expression level of 60 genes. 
A preliminary analysis using different variable selection methods results in two nested sets of important variables: $\mc{S}$ and $\mc{S}\cup \mc{T}$. We consider testing for the null hypothesis that $H_{0}: Y\indp X_{\mc{T}}\mid X_{\mc{S}}$ using the likelihood ratio test for multinomial logistic regression, the $G$-CRT with the RF statistic, and the Bonferroni-adjusted dCRT with either Gaussian Lasso models or random forests. 
Details of this example can be found in Appendix~F.2.

Our findings are: (1) in simulations, the classical asymptotic likelihood ratio test cannot control the Type-I error while the $G$-CRTs and the dCRTs can; (2) when applied to the BCR dataset, the two dCRTs yield p-values larger than 0.2, while the $G$-CRT with the statistic RF rejects the null with a p-value of $0.002$. 
This example illustrates the efficiency and applicability of our method when classical methods fail.

\section{Discussion}
\label{sec: Discussion}
This study proposes the multivariate sufficient statistic conditional randomization test (MS-CRT) for testing multivariate conditional independence (MCI) under model-X framework. 
By conditioning on sufficient statistics rather than estimating model parameters, MS-CRT maintains finite-sample Type I error control while greatly relaxing the restrictive assumption of the original CRT. 
A key advantage over univariate methods is that MS-CRT captures joint effects, and consequently, it circumvents the need for multiple testing adjustments and remains powerful in settings where joint effects are strong but individual effects are too weak to be detected separately.

We develop efficient implementations of MS-CRT for two important classes of models of $X$: 
(1) MVN-CRT for multivariate normal models and (2) $G$-CRT for high-dimensional graphical models. For MVN-CRT, we study its asymptotic power in linear regression and establish its minimax rate optimality when the alternative is either dense or sparse. 
For $G$-CRT, we numerically demonstrate its superior power performance compared to Bonferroni-adjusted univariate CRTs and fixed-X methods, through both extensive simulation studies and real-world data analyses.

Our work opens several future research directions. 
First, more efforts are needed in designing test statistics to capture joint effects in specific applications.
Second, extending G-CRT to a wider range of Markov random fields, such as \textit{exponential family graphical models} \citep{wainwright2008graphical}, could broaden its applicability. 
Lastly, it remains open to study theoretical power guarantees for $G$-CRT, where the main difficulty lies in characterizing the distribution of copies generated from Markov chains when the graph structure is complicated.

\section*{Acknowledgements}
D. Huang was partially supported by NUS Start-up Grant A-0004824-00-0.


\bigskip

\appendix

\noindent\textbf{\LARGE Appendix}
\smallskip
\newline 
Appendix~\ref{app: literature} provides a more detailed and complete version of the literature review discussed in the main text. Appendix~\ref{sec: Algo} includes algorithms of residual rotation for GGMs and procedures for controlling FDR in group selection. Appendix~\ref{sec: Proof} contains the omitted proofs for the theoretical results in the main paper. Appendix~\ref{sec: Appendix Simulation Details} presents additional details for the simulation studies in Section~\ref{sec: Simulation} in the
main text. 
Appendix~\ref{sec: Additional Simulations}
provides additional simulation studies to examine several aspects of the $G$-CRT.
Appendix~\ref{sec: Appendix Application} shows detailed information about the applications of our testing
methods to real-world datasets in Section~\ref{sec: Application} in the main text.
Code for reproducing our experiments is maintained on GitHub (\url{https://github.com/xiaotong0201/CSS-GGM.git}).



\section{Literature Review}\label{app: literature}
This appendix provides a more detailed and complete version of the literature review discussed in the main text.

Testing conditional independence that  $X_{\mc{T}}\indp Y\mid X_{\mc{S}}$ is a central problem in statistics with wide‐ranging applications in graphical modelling, causal inference, and variable selection.
First of all, the necessity of modelling assumptions for testing conditional independence has been theoretically justified in the literature \citep{shah2020}, which implies that with no assumption on the joint distribution of $(X, Y, Z)$, a valid test for $X\indp Y\mid Z$ is powerless against any alternative. This means that some assumptions must be made to the class of null distributions to have some power. 

\subsection{Model-X inference}

The ``model-X'' framework assumes $F_X$ the distribution of $X$ is known or satisfies some assumptions, while making no assumptions on $F_{Y|X}$ the conditional distribution of $Y$ given $X$. \citet{candes_panning_2017} introduced the conditional randomization test (CRT) and model-X knockoffs under this framework. 
The CRT has been extended and developed in various directions in recent years. 
\citet{bates2020causal} applied the idea of the CRT to develop a method for causal inferences using genetic datasets that include parents and offspring. 
\citet{shaer_model-x_2023} extended CRT to streaming data under arbitrary dependency structures. 
To accelerate the computation of test statistics, 
\citet{tansey_holdout_2022} proposed the holdout randomization test, while \citet{liu_fast_2022} introduced the distilled CRT (dCRT). Recently, \cite{niu2024spa} leverages a saddlepoint approximation to dCRT to achieve similar statistical performance with significantly reduced computational
demands.

The original CRT requires sampling from the true conditional distribution of $X\mid Z$, which may be impractical. 
To adjust errors arising from imperfectly modelled $P(X\mid Z)$, several refinements have been proposed.  
\citet{bellot_conditional_2019} proposed to use generative adversarial networks to approximate the conditional distribution.
\citet{berrett_conditional_2020} proposed permuting the observed value non-uniformly and they developed an error bound on the inflation of the Type I error. 
\citet{niu_reconciling_2024} studied the double robustness of the dCRT, and \citet{li_maxway_2023} proposed using the conditional distribution of $Y\mid Z$ to calibrate the resampling distribution. 
The aCSS method proposed by \citet{barber_testing_2021} can also be applied to implementing a CRT. 
These methods provide asymptotic validity guarantees that are contingent on estimation error of $P(X\mid Z)$. 
In contrast, our proposed MS-CRT provide exact finite-sample error control. 
To the best of our knowledge, nearly all CRT methods have been developed for univariate $X$. The only exception is \citet{ham_using_2022}, who studied the significance of a multidimensional factor in conjoint analysis.

Regarding the power of CRT-based tests, \citet{wang2022high} derived explicit expressions of the asymptotic power of the CRT in high-dimensional linear regression. 
\citep{katsevich_power_2022} used the Neyman--Pearson lemma to identify the most powerful CRT statistic and derive its asymptotic power in semiparametric models. 
Our power analysis, to the best of our knowledge, is the first to investigate the power optimality of testing MCI within the model-X framework.

\subsection{Other methods for testing conditional independence}

Beyond the Model-X framework, other approaches exist for testing MCI. These can broadly be categorized into ``fixed-X'' regression-based methods and non-parametric methods.

In the ``fixed-X'' framework,  it is assumed that the conditional distribution $F_{Y|X}$ belongs to a specific parametric (or semiparametric) regression model, where testing conditional independence reduces to testing whether the corresponding regression coefficients are zero. 
In this framework, high‑dimensional inference methods focus on confidence interval construction and hypothesis testing for low‑dimensional parameters within regression models. 
Along this line, \citet{zhang2014confidence} developed a bias correction method for constructing confidence intervals for low-dimensional linear combinations of the regression coefficients in high-dimensional linear models. \citet{van_de_geer_asymptotically_2014} and \citet{javanmard_confidence_2013} further developed various debiasing methods based on Lasso to achieve optimality. 
Recently, \citet{cai2023statistical} extended the debiasing method to high-dimensional generalized linear models (GLMs) with binary outcomes. 
These debiasing methods typically rely on the ultra-sparsity assumption on the regression coefficients that the number of true nonzero coefficients is of order $o(\sqrt{n/\log p})$, which may be restricted in practice.  
Recently, inference methods that do not require sparsity assumptions have been developed based on different statistical principles: 
some are based on permutation \citep{diciccio2017robust,lei2021assumption,wen_residual_2024}, some on Fisher statistics \citep{verzelen_goodness--fit_2010}, and others on Neyman orthogonalization principles \citep{zhu2018linear, bradic2022testability}. All these methods are primarily developed for linear models.

Apart from the model-X and fixed-X frameworks, there exists a substantial body of work on nonparametric CI testing using kernel methods \citep{fukumizu2007kernel,zhang2011kernel,doran2014permutation,strobl2019approximate,scetbon2022asymptotic}, correlation measures \citep{sejdinovic2013equivalence,wang2015conditional,kubkowski2021gain}, and nearest‐neighbor methods \citep{runge_conditional_2018,sen_model-powered_2017}. 
These nonparametric methods often have less stringent assumptions, but they are often developed and analyzed in settings where the number of variables is smaller than the sample size $p<n$ and where the tested set $\mc{T}$ is univariate.  

\subsection{Group selection}

In many applications, variables are naturally structured into pre-defined groups \citep{goemanGlobalTestGroups2004}, which  group selection is a relevant task. 
The goal is to identify important groups of variables rather than selecting individual variables 
\citep{davePredictionSurvivalFollicular2004, maSupervisedGroupLasso2007}. 
Parametric methods for group selection often employ penalties such as group Lasso in regression models \citep{yuanModelSelectionEstimation2006, meierGroupLassoLogistic2008}. 
\citet{dai2016knockoff} proposed a fixed-X knockoff method for FDR-controlled group selection in Gaussian linear regression, but this requires the sample size $n\geq p$ and  is not directly applicable to high-dimensional settings. 
The model-X knockoff method can be extended to perform group selection \citep{spector2022powerful, chu_second-order_2024}. However, similar to the original CRT, the theoretical validity of group knockoffs relies on knowing the exact distribution of $X$.


\subsection{Comparison between knockoffs and exchangeable copies} \label{rem:knockoff}

This subsection compares the exchangeable sampling and knockoff constructions. 
A similar discussion has appeared in the appendix of our related work \citet{lin_goodness--fit_2025}, but we include it here for completeness.

The validity of the p-value defined in \eqref{eqn:pval-1} requires that $\{\mathbf{X}_{\mc{T}}, \widetilde{\mathbf{X}}^{(1)}_{\mc{T}}, \ldots, \widetilde{\mathbf{X}}^{(M)}_{\mc{T}}\}$ are exchangeable conditional on $(\mathbf{X}_{\mc{S}}, \mathbf{Y})$, that is, 
the conditional distribution is invariant to permuting any subset of these datasets.
Sampling these copies $\widetilde{\mathbf{X}}^{(m)}_{\mc{T}}$ may appear similar to the knockoff construction \citep{barber_controlling_2015,candes_panning_2017}, but there is a crucial difference. 
We use the model-X knockoff framework proposed in \cite{candes_panning_2017} as an example, and the discussion below easily extends to the fixed-X knockoff framework used in \cite{barber_controlling_2015}.

For knockoff filters, a single dataset $\overline{\mathbf{X}}$ is constructed to replicate the dependencies in the original dataset $\mathbf{X}$. 
The key property of the knockoff construction, known as the swap invariance property, requires that for each variable $j$, swapping the $j$-th column of $\mathbf{X}$ with the corresponding column of $\overline{\mathbf{X}}$ leaves their joint distribution unchanged. 
Formally, for any $j \in \{1,2,\ldots,p\}$, we have the distributional equivalence  
$$[\mathbf{X}, \overline{\mathbf{X}}]_{\text{swap}(j)} \stackrel{d}{=} [\mathbf{X}, \overline{\mathbf{X}}].
$$  
The swap invariance property is the foundation of the FDR control of the knockoff filter in variable selection.

To connect with the focus of the current paper, the swap invariance property also guarantees that the original and knockoff datasets, as a pair, are exchangeable. 
However, when multiple knockoffs $\{\overline{\mathbf{X}}^{(1)}, \ldots, \overline{\mathbf{X}}^{(M)}\}$ are generated, the pairwise exchangeability does not automatically extend to the joint exchangeability. 
Since each knockoff dataset is generated by conditioning on the observed data $\mathbf{X}$, $\mathbf{X}$ plays a distinguished role that the knockoffs do not share. 
This distinction is critical: while the swap invariance property ensures pairwise exchangeability between  $\mathbf{X}$ and $\overline{\mathbf{X}}^{(m)}$, it does not guarantee joint exchangeability across multiple knockoffs (e.g., $\overline{\mathbf{X}}^{(1)}$ and $\overline{\mathbf{X}}^{(2)}$) and $\mathbf{X}$ together.

In fact, exchanging the position of $\mathbf{X}$ with any of the knockoffs $\overline{\mathbf{X}}^{(m)}$ will change the joint distribution.
For instance, 
$$
[\mb{X},\overline{\mathbf{X}}^{(1)}, \ldots, \overline{\mathbf{X}}^{(M)}] \text{ and } 
[\overline{\mathbf{X}}^{(1)},\mb{X}, \ldots, \overline{\mathbf{X}}^{(M)}]
$$
do not follow the same distribution in general. 
We therefore note that knockoff construction methods do not naturally extend to exchangeable sampling, and vise versa.


\section{Algorithms}\label{sec: Algo}

\subsection{Residual rotation for GGMs}\label{sec: residual rotation}
Suppose the rows of the observed data $\mathbf{X}$ are independent and identically distributed (i.i.d.) samples from some population $P$ in a GGM $\mc{M}_{G}$.
\citet{lin_goodness--fit_2025} introduce an algorithm that can efficiently generate copies $\widetilde{\mathbf{X}}^{(m)}$ ($m=1,\ldots,M$) so that $\mb{X},\widetilde{\mathbf{X}}^{(1)}, \ldots, \widetilde{\mathbf{X}}^{(M)}$ are exchangeable. 
The first step of their algorithm is the so-called \textit{residual rotation}, which is outlined in Algorithm~\ref{alg: residual rotation}. 
Proposition~\ref{prop: residual rotation} guarantees that it satisfies Assumption~\ref{condition:local update} and Theorem~\ref{thm: exchangeable} guarantees the exchangeability. Both results are proved by \citet{lin_goodness--fit_2025}.

\begin{algorithm}[H]
\caption{Sampling one column via residual rotation}\label{alg: residual rotation}
\hspace*{\algorithmicindent} \textbf{Input:}  $n\times p$ data matrix $\mathbf{X}$, index $i$ of the variable to sample, neighborhood $N_i$ of $i$\\
\hspace*{\algorithmicindent} If $n\leq |N_i|+1$, set $\widetilde{\mathbf{X}}_{i}=\mathbf{X}_{i}$; otherwise, proceed to the following steps. \\
\hspace*{\algorithmicindent} \textbf{Step 1:} Apply least squares linear regression to $\mathbf{X}_{i}$ on $\left[\bs{1}_{n}, \mathbf{X}_{N_i}\right]$ \\
\hspace*{\algorithmicindent} \textbf{Step 2:} Obtain the fitted vector $\bs{F}$ and the residual vector $\bs{R}$ from the regression \\
\hspace*{\algorithmicindent} \textbf{Step 3:} 
Draw a standard normal $n$-vector $\bs{E}$. 
Apply least squares linear regression to $\bs{E}$ on $\left[\bs{1}_{n}, \mathbf{X}_{N_i}\right]$, and obtain the residual $\widetilde{\bs{R}}$ \\ 
\hspace*{\algorithmicindent} \textbf{Step 4:}  Compute  $\widetilde{\mathbf{X}}_{i}=\bs{F}+\widetilde{\bs{R}}\frac{\|\bs{R}\|}{\|\widetilde{\bs{R}}\| }$\\
\hspace*{\algorithmicindent} \textbf{Output:} $\widetilde{\mathbf{X}}_{i}$ 
\end{algorithm}

\begin{proposition}\label{prop: residual rotation}
    Suppose the rows of $\mathbf{X}$ are i.i.d. samples from a distribution in $\mc{M}_{G}$. 
    Let $\widetilde{\mathbf{X}}_{i}$ be the output of Algorithm~\ref{alg: residual rotation} with index $i$, $N_i$ be its neighborhood, and  $\widetilde{\mathbf{X}}$ be a matrix formed by replacing the $i$-th column of $\mathbf{X}$ with $\widetilde{\mathbf{X}}_{i}$.
    Then, $\psi_{G}(\mathbf{X})=\psi_{G}(\widetilde{\mathbf{X}})$ and the conditional distribution of $\widetilde{\mathbf{X}}$ given $\mathbf{X}=\mathbf{x}$  is the same as that of $\mathbf{X}$ for almost every  $\mathbf{x}\in \mathbb{R}^{n\times p}$.  
Furthermore, if $n\geq 3+|N_i|$, then $\widetilde{\mathbf{X}}_{i}\neq \mathbf{X}_{i}$, a.s.
\end{proposition}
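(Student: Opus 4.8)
The plan is to reduce all three claims to elementary facts about the Gaussian linear model for the single column $\mathbf{X}_i$ given its neighbors. Write $\mathbf{A} := [\bs{1}_n, \mathbf{X}_{N_i}]$, let $P_{\mathbf{A}}$ be the orthogonal projection onto $\mathrm{col}(\mathbf{A})$, and recall that in Algorithm~\ref{alg: residual rotation} $\bs{F} = P_{\mathbf{A}}\mathbf{X}_i$, $\bs{R} = (\mathbf{I}_n - P_{\mathbf{A}})\mathbf{X}_i$, and $\widetilde{\bs{R}} = (\mathbf{I}_n - P_{\mathbf{A}})\bs{E}$. Since the rows of $\mathbf{X}$ are i.i.d.\ from a distribution in $\mc{M}_{G}$, the conditional law $X_i \mid X_{N_i}$ is univariate Gaussian with mean affine in $X_{N_i}$ and constant conditional variance $\tau^2 = 1/\omega_{ii}$; stacking the $n$ rows gives $\mathbf{X}_i \mid \mathbf{X}_{N_i} \sim \mathbf{N}_n(\mathbf{A}\gamma, \tau^2\mathbf{I}_n)$ for some $\gamma$. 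For this isotropic-noise model the pair $\Psi_i := (\mathbf{A}^\top\mathbf{X}_i, \|\bs{R}\|^2)$ (equivalently $(\mathbf{A}^\top\mathbf{X}_i, \|\mathbf{X}_i\|^2)$) is sufficient for $(\gamma,\tau^2)$, and $\psi_G(\mathbf{X})$ is the corresponding collection over all nodes, i.e.\ the column sums $\bs{1}_n^\top\mathbf{X}_j$ together with the inner products $\mathbf{X}_j^\top\mathbf{X}_k$ over $j=k$ and over edges $(j,k)\in\mathcal{E}$, which are precisely the canonical statistics of the GGM likelihood in \eqref{eq: model MG}. Absolute continuity of the Gaussian makes $\mathbf{A}$ of full column rank $|N_i|+1$ almost surely, which I use throughout.

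\emph{Preservation of $\psi_G$.} Since $\widetilde{\bs{R}}\in\mathrm{col}(\mathbf{A})^\perp$, we get $\mathbf{A}^\top\widetilde{\mathbf{X}}_i = \mathbf{A}^\top\bs{F} = \mathbf{A}^\top\mathbf{X}_i$, and since $\bs{F}\perp\widetilde{\bs{R}}$, $\|\widetilde{\mathbf{X}}_i\|^2 = \|\bs{F}\|^2 + \|\bs{R}\|^2 = \|\mathbf{X}_i\|^2$. As $\widetilde{\mathbf{X}}$ differs from $\mathbf{X}$ only in column $i$, the only entries of $\psi_G$ that can change are $\bs{1}_n^\top\mathbf{X}_i$, $\|\mathbf{X}_i\|^2$, and $\mathbf{X}_k^\top\mathbf{X}_i$ for $k\in N_i$, and all of these are read off $\mathbf{A}^\top\widetilde{\mathbf{X}}_i = \mathbf{A}^\top\mathbf{X}_i$ and $\|\widetilde{\mathbf{X}}_i\|^2=\|\mathbf{X}_i\|^2$; hence $\psi_G(\widetilde{\mathbf{X}}) = \psi_G(\mathbf{X})$, which is part~(1) of Condition~\ref{condition:local update}.

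\emph{The conditional-law identity.} Here I invoke the classical orthogonal decomposition of an isotropic Gaussian: $\bs{F} = P_{\mathbf{A}}\mathbf{X}_i$ is a measurable function of $\mathbf{A}^\top\mathbf{X}_i$, is independent of $\bs{R}$, and $\bs{R}$ is isotropic Gaussian in the $(n-|N_i|-1)$-dimensional space $\mathrm{col}(\mathbf{A})^\perp$; consequently, conditional on $(\mathbf{X}_{N_i},\Psi_i)$, the residual $\bs{R}$ is uniform on the sphere of radius $\|\bs{R}\|$ in $\mathrm{col}(\mathbf{A})^\perp$. Step~3 of Algorithm~\ref{alg: residual rotation} produces $\widetilde{\bs{R}}$ isotropic Gaussian in that same space, so $\|\bs{R}\|\,\widetilde{\bs{R}}/\|\widetilde{\bs{R}}\|$ is exactly a draw from this uniform law, and therefore $\widetilde{\mathbf{X}}_i = \bs{F} + \|\bs{R}\|\,\widetilde{\bs{R}}/\|\widetilde{\bs{R}}\|$ is distributed as $\mathbf{X}_i$ given $(\mathbf{X}_{N_i},\Psi_i)$. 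Because $\Psi_i$ is sufficient this conditional law is free of $(\gamma,\tau^2)$, and by the local Markov property it coincides with $\mathbf{P}(\mathbf{X}_i\mid\mathbf{X}_{-i},\Psi_i)$; combined with $\widetilde{\mathbf{X}}_{-i}=\mathbf{X}_{-i}$ and the preservation of $\Psi_i$, this shows that $\mathcal{L}(\widetilde{\mathbf{X}}\mid\mathbf{X}=\mathbf{x})$ equals the conditional law of $\mathbf{X}$ given $\{\mathbf{X}_{-i}=\mathbf{x}_{-i},\,\psi_G(\mathbf{X})=\psi_G(\mathbf{x})\}$ for almost every $\mathbf{x}$ --- equivalently, $\mathbf{X}_i$ and $\widetilde{\mathbf{X}}_i$ are i.i.d.\ given $(\mathbf{X}_{-i},\Psi_i)$, which is part~(2) of Condition~\ref{condition:local update}.

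\emph{Non-triviality, and the main obstacle.} If $n\geq 3+|N_i|$ then $\dim\mathrm{col}(\mathbf{A})^\perp = n-|N_i|-1\geq 2$ almost surely and $\|\bs{R}\|,\|\widetilde{\bs{R}}\|>0$ almost surely; a point drawn uniformly from a sphere $S^{d-1}$ with $d\geq 2$ equals any fixed point with probability zero, so $\|\bs{R}\|\,\widetilde{\bs{R}}/\|\widetilde{\bs{R}}\|\neq\bs{R}$ almost surely, whence $\widetilde{\mathbf{X}}_i\neq\mathbf{X}_i$ almost surely. I expect the conditional-law identity to be the only substantive step: it requires setting up the column-wise model so that the noise is genuinely isotropic (so the ``projection independent of norm times a uniform direction'' decomposition applies), pinning down the correct sufficient statistic, and then using the local Markov property to pass from conditioning on $\mathbf{X}_{N_i}$ to conditioning on all of $\mathbf{X}_{-i}$; the other two claims are a short linear-algebra computation and a measure-zero argument.
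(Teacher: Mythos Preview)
The paper does not actually prove this proposition; it states that both Proposition~\ref{prop: residual rotation} and Theorem~\ref{thm: exchangeable} ``are proved by \citet{lin_goodness--fit_2025}'' and gives no argument of its own. So there is no in-paper proof to compare against.

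Your argument is correct and is the standard one: reduce to the node-wise Gaussian linear model $\mathbf{X}_i\mid\mathbf{X}_{N_i}\sim\mathbf{N}_n(\mathbf{A}\gamma,\tau^2\mathbf{I}_n)$, identify the sufficient statistic $\Psi_i=(\mathbf{A}^\top\mathbf{X}_i,\|\bs{R}\|^2)$, use the orthogonal decomposition to see that $\bs{R}$ is uniform on the sphere of radius $\|\bs{R}\|$ in $\mathrm{col}(\mathbf{A})^\perp$ conditional on $(\mathbf{X}_{N_i},\Psi_i)$, and observe that Step~3--4 of Algorithm~\ref{alg: residual rotation} draws a fresh point from that same uniform law. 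The preservation of $\psi_G$ and the non-triviality claim are handled cleanly. One small remark: the proposition's phrase ``the conditional distribution of $\widetilde{\mathbf{X}}$ given $\mathbf{X}=\mathbf{x}$ is the same as that of $\mathbf{X}$'' is elliptical, and you correctly unpack it as the i.i.d.\ property in Condition~\ref{condition:local update}(2), which is exactly what Proposition~\ref{prop: exchangeable} needs downstream.
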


\begin{theorem}\label{thm: exchangeable}
   Suppose $\mathbf{X}$ is a $n\times p$ matrix and $G$ is a graph whose node set is $[p]$. Let $\{\widetilde{\mathbf{X}}^{(m)} \}_{m=1}^{M}$ be the output of Algorithm~\ref{alg: exchangeable} with local update using Algorithm~\ref{prop: residual rotation}.  
  If the rows of $\mathbf{X}$ are i.i.d. samples from a distribution in  $\mc{M}_{G}$ as defined in \eqref{eq: model MG}, then $\mathbf{X}, \widetilde{\mathbf{X}}^{(1)}, \cdots, \widetilde{\mathbf{X}}^{(M)} $ are exchangeable. Furthermore, if the elements of $\mc{I}$ form a subset of $[p]$, denoted by $\mc{T}$, then $\widetilde{\mathbf{X}}_{-\mc{T}}^{(m)}=\mathbf{X}_{-\mc{T}}$ for all $m\in [M]$ and $ \mathbf{X}_{\mc{T}}, \widetilde{\mathbf{X}}_{\mc{T}}^{(1)}, \cdots, \widetilde{\mathbf{X}}_{\mc{T}}^{(M)}$ are conditionally exchangeable given $\mathbf{X}_{-\mc{T}}$. 

\end{theorem}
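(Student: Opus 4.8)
The plan is to obtain Theorem~\ref{thm: exchangeable} by assembling the two results that immediately precede it. Proposition~\ref{prop: exchangeable} already converts the abstract requirement of Condition~\ref{condition:local update} on the per-column update into exactly the conclusion we want --- namely $\widetilde{\mathbf{X}}^{(m)}_{-\mc{T}}=\mathbf{X}_{-\mc{T}}$ for all $m$ and conditional exchangeability of $\mathbf{X}_{\mc{T}},\widetilde{\mathbf{X}}^{(1)}_{\mc{T}},\ldots,\widetilde{\mathbf{X}}^{(M)}_{\mc{T}}$ given $\mathbf{X}_{-\mc{T}}$ --- so the only thing to supply is the verification that the residual-rotation update of Algorithm~\ref{alg: residual rotation} satisfies Condition~\ref{condition:local update} when the rows of $\mathbf{X}$ are i.i.d.\ from a GGM $\mc{M}_G$. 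Writing $\mc{T}$ for the set of indices appearing in $\mc{I}$, the proof then has three steps: (i) check Condition~\ref{condition:local update} for each $i\in\mc{T}$; (ii) invoke Proposition~\ref{prop: exchangeable}; (iii) upgrade conditional exchangeability to unconditional exchangeability of the full matrices.

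For step (i) I would argue as follows. By the local Markov property, $\mathbf{P}(X_i\mid X_{-i})=\mathbf{P}(X_i\mid X_{N_i})$, and since the population is Gaussian, the rows of $\mathbf{X}_i$ given $\mathbf{X}_{N_i}$ are i.i.d.\ Gaussian with mean affine in $X_{N_i}$ and a common variance; that is, $\mathbf{X}_i\mid\mathbf{X}_{N_i}$ follows a homoskedastic Gaussian linear model with design $[\mathbf{1}_n,\mathbf{X}_{N_i}]$, whose sufficient statistic $\Psi_i$ is the pair consisting of the OLS fitted vector $\bs{F}$ and the residual norm $\|\bs{R}\|$. Step~4 of Algorithm~\ref{alg: residual rotation} returns $\bs{F}+\widetilde{\bs{R}}\,\|\bs{R}\|/\|\widetilde{\bs{R}}\|$, which leaves both $\bs{F}$ and the residual norm unchanged, so part~(1) of Condition~\ref{condition:local update} holds. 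For part~(2): conditionally on $(\mathbf{X}_{N_i},\Psi_i)$ the residual direction of $\mathbf{X}_i$ is uniform on the unit sphere of the orthogonal complement of $\operatorname{col}([\mathbf{1}_n,\mathbf{X}_{N_i}])$ (rotational invariance of the isotropic Gaussian restricted to that subspace), and by construction $\widetilde{\bs{R}}$, being the residual of an independent standard normal vector regressed on the same design, is an independent draw from that same uniform law; hence $\overline{\mathbf{X}}_i$ and $\mathbf{X}_i$ are i.i.d.\ given $(\mathbf{X}_{N_i},\Psi_i)$ when $n>|N_i|+1$, and the update is the identity when $n\le|N_i|+1$. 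This is precisely Proposition~\ref{prop: residual rotation}, which I would cite rather than re-derive; note that the i.i.d.-from-$\mc{M}_G$ hypothesis of the theorem enters only here.

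Step (ii) is then immediate: with Condition~\ref{condition:local update} verified, Proposition~\ref{prop: exchangeable} applies (its conclusion is valid for any number of sweeps $L\ge1$), yielding $\widetilde{\mathbf{X}}^{(m)}_{-\mc{T}}=\mathbf{X}_{-\mc{T}}$ for all $m$ and conditional exchangeability of $(\mathbf{X}_{\mc{T}},\widetilde{\mathbf{X}}^{(1)}_{\mc{T}},\ldots,\widetilde{\mathbf{X}}^{(M)}_{\mc{T}})$ given $\mathbf{X}_{-\mc{T}}$. For step (iii), observe that the $(-\mc{T})$-blocks of all $M+1$ matrices coincide with the common value $\mathbf{X}_{-\mc{T}}$, so any permutation of $\{0,1,\ldots,M\}$ acts on $(\mathbf{X},\widetilde{\mathbf{X}}^{(1)},\ldots,\widetilde{\mathbf{X}}^{(M)})$ only through its action on the $\mc{T}$-blocks; conditioning on $\mathbf{X}_{-\mc{T}}$, invariance of the joint law under that permutation is exactly the conditional exchangeability just established, and integrating out $\mathbf{X}_{-\mc{T}}$ gives unconditional exchangeability of the full matrices. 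Taking $\mc{T}=[p]$ covers the case where $\mc{I}$ is a permutation of all of $[p]$.

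The only genuinely nontrivial ingredient is step (i) --- identifying the minimal sufficient statistic of the GGM conditional $\mathbf{P}(\mathbf{X}_i\mid\mathbf{X}_{N_i})$ and establishing the conditional uniform-on-the-sphere law of the residual, which is where the local Markov property and the rotational invariance of the isotropic Gaussian do the work --- together with the degenerate small-sample bookkeeping when $n\le|N_i|+1$. Since this is exactly the content of Proposition~\ref{prop: residual rotation}, established in \citet{lin_goodness--fit_2025}, the remaining work in steps (ii)--(iii) is essentially bookkeeping, and the proof is short.
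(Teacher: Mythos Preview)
Your proposal is correct and follows exactly the route the paper intends. In fact, the paper does not supply its own proof of this theorem at all: immediately after stating Proposition~\ref{prop: residual rotation} and Theorem~\ref{thm: exchangeable}, the appendix says ``Both results are proved by \citet{lin_goodness--fit_2025}.'' Your three-step plan---verify Condition~\ref{condition:local update} for the residual-rotation update (which is precisely what the paper says Proposition~\ref{prop: residual rotation} does), invoke Proposition~\ref{prop: exchangeable}, then lift conditional to unconditional exchangeability via the shared $-\mc{T}$ block---is the natural assembly of the surrounding pieces and is how the cited paper proceeds. The only content you add beyond what the present paper states is the short step~(iii), and your argument there is correct.
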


An application of Algorithm~\ref{alg: exchangeable} is the \textit{Monte Carlo goodness-of-fit} (MC-GoF) test for the null hypothesis that the distribution of $\mathbf{X}$ belongs to $\mc{M}_{G}$.
The procedure of the MC-GoF test is outlined in Algorithm~\ref{alg:gof}. 
The function $T(\cdot)$ can be any test statistic function such that larger values are regarded as evidence against the null hypothesis. 
\citet{lin_goodness--fit_2025} proposed some promising choices such as $\text{PRC}$, $\text{ERC}$, $\text{F}_{\Sigma}$, and GLR-$\ell_1$. 
Theoretical analyses showed that the MC-GoF test can achieve minimax rate optimal power in some high-dimensional settings. 

\begin{algorithm}[H]
\caption{Monte Carlo goodness-of-fit (MC-GoF) test for GGMs}\label{alg:gof}
 \hspace*{\algorithmicindent} \textbf{Input:} $n\times p$ data matrix $\mathbf{X}$, graph $G$, test statistic function $T(\cdot)$, number of copies $M$ \\
 \hspace*{\algorithmicindent} \textbf{Require:} $n-1>$ maximum degree of $G$   \\ 
 \hspace*{\algorithmicindent} \textbf{Step 1:} Apply Algorithm~\ref{alg: exchangeable} with $\mc{I}=(1,2,\ldots,p)$ to obtain $\widetilde{\mathbf{X}}^{(1)}, \ldots, \widetilde{\mathbf{X}}^{(M)}$ \\
 \hspace*{\algorithmicindent} \textbf{Step 2:} Calculate the statistics  $T(\mathbf{X}), T(\widetilde{\mathbf{X}}^{(1)}), T(\widetilde{\mathbf{X}}^{(2)}), \cdots, T(\widetilde{\mathbf{X}}^{(M)}) $ \\
 \hspace*{\algorithmicindent} \textbf{Step 3:} Compute the (one-sided) p-value \\
$$\textnormal{pVal}_T= \frac{1}{M+1}\left(1+\sum_{m=1}^M \One{T(\widetilde{\mathbf{X}}^{(m)})\ge T(\mathbf{X})}\right)$$ 
\hspace*{\algorithmicindent} \textbf{Output:}  $\textnormal{pVal}_T$ 
\end{algorithm}

\subsection{Procedures for controlling FDR in group selection}\label{app:fdr-control}

This appendix outlines the multiple testing correction procedures employed in the CRT-Based group selection procedure to address the issue of multiplicity in Section~\ref{sec: MG-CRT}. In addition to the classical Benjamini-Hochberg (BH) procedure \citep{benjamini1995controlling} and Benjamini-Yekutieli (BY) procedure \citep{benjaminiControl2001}, we also consider the e-BH procedure based on e-values, which have emerged from recent advancements in multiple testing theory \citep{wang_false_2022}.

The following algorithms detail the BH procedure and e-BH procedure as applied to the context of group selection. 
The BY procedure is the same as the BH procedure but with $\alpha$ replaced by $\alpha/c_{g}$ where $c_g=\sum_{i=1}^{g} 1/i$.

\begin{algorithm}[H]
\caption{Benjamini-Hochberg (BH) Procedure for Group Selection}\label{alg:bh}
\hspace*{\algorithmicindent} \textbf{Input:} p-values from group-wise MS-CRTs: $\{\text{pVal}_1, \dots, \text{pVal}_g\}$, FDR level $\alpha$. 
\hspace*{\algorithmicindent} 
\begin{algorithmic}[1]
    \State Sort p-values in ascending order: $\textnormal{pVal}_{(1)} \leq \textnormal{pVal}_{(2)} \leq \dots \leq \textnormal{pVal}_{(g)}$. 
    \State Find the largest index $k$ such that $\textnormal{pVal}_{(k)} \leq \frac{k}{g} \alpha$.
    \State \textbf{Output:} Set of rejected groups $\{j: \textnormal{pVal}_{(j)}\leq \textnormal{pVal}_{(k)}\}$. 
\end{algorithmic}
\end{algorithm}

\begin{algorithm}[H]
\caption{e-BH Procedure for Group Selection}\label{alg:ebh}
\hspace*{\algorithmicindent} \textbf{Input:} p-values from group-wise MS-CRTs: $\{\text{pVal}_1, \dots, \text{pVal}_g\}$, FDR level $\alpha$.
\hspace*{\algorithmicindent}
\begin{algorithmic}[1]
    \State  Use Algorithm~\ref{alg:p_to_e} to transform p-values to boosted e-values, denoted by $\{E_1, E_2, \dots, E_g\}$.
    \State Apply BH procedure (Algorithm \ref{alg:bh}) to  $\{1/E_1, 1/E_2, \dots, 1/E_g\}$.\\
    \textbf{Output:} Set of rejected groups by BH. 
\end{algorithmic}
\end{algorithm}

To prove Proposition~\ref{prop: Valid Boosted eValue}, we first introduce a lemma. 

\begin{lemma}\label{lem: boost e from unif}
Given a sequence of $\{\delta_k\}_{k=1}^{g }$ and $\{\ell_{k}\}_{k=1}^{g }$. 
Define the partition points by
$$
b_0 = 0 \quad \text{and} \quad b_{k} = b_{k-1} + \frac{q}{g }  \delta_k, \quad k\ge1.
$$
For $x\in (b_{k-1}, b_{k}]$, define
$Y(x) = \frac{g }{q\, \ell_k}$; for $x>b_{g }$, define $Y(x)=1/(2q)$.
Suppose $X\sim \mathrm{Unif}(0,1)$. If 
$$
\sum_{k:\,\ell_k\le g } \frac{\delta_k}{\lceil \ell_k\rceil} \le 1,
$$
 then $Y(X)$ is a boosted e-variable satisfying that 
    $$
\mathbb{E}\Bigl[T\bigl(qY(X)\bigr)\Bigr] \le q.
$$
\end{lemma}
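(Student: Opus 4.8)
The plan is a direct, unconditional computation of $\mathbb{E}\bigl[T(qY(X))\bigr]$ obtained by partitioning on which interval of $\{(b_{k-1},b_k]\}_{k=1}^{g}$ contains $X$. First I would record that $Y$ is piecewise constant: for $x\in(b_{k-1},b_k]$ we have $qY(x)=g/\ell_k$, and for $x>b_g$ we have $qY(x)=1/2$. Substituting into $T(z)=\frac{g}{\lceil g/z\rceil}\,1_{\{z\ge 1\}}$ and using that $g/\ell_k\ge 1$ iff $\ell_k\le g$, this gives $T(qY(X))=\frac{g}{\lceil\ell_k\rceil}\,1_{\{\ell_k\le g\}}$ on the $k$-th interval and $T(qY(X))=0$ on $(b_g,\infty)$ since $1/2<1$.

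Next I would take the expectation over $X\sim\mathrm{Unif}(0,1)$. The probability that $X$ lands in $(b_{k-1},b_k]$ equals the Lebesgue measure of $(b_{k-1},b_k]\cap(0,1)$, which is at most $b_k-b_{k-1}=\frac{q}{g}\delta_k$; this bound holds whether or not $b_g\le 1$, and any interval lying entirely to the right of $1$ simply contributes zero. Summing the contributions and dropping the (vanishing) last piece,
\[
\mathbb{E}\bigl[T(qY(X))\bigr]\;\le\;\sum_{k:\,\ell_k\le g}\frac{g}{\lceil\ell_k\rceil}\cdot\frac{q}{g}\,\delta_k\;=\;q\sum_{k:\,\ell_k\le g}\frac{\delta_k}{\lceil\ell_k\rceil}\;\le\;q,
\]
where the final inequality is exactly the stated hypothesis. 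Since $Y(X)>0$ by construction, $Y(X)$ is a boosted e-variable in the sense of \citet{wang_false_2022}, which completes the argument.

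The calculation is essentially routine, so there is no serious obstacle; the points that require a little care are (i) evaluating the ceiling correctly, $\lceil g/(g/\ell_k)\rceil=\lceil\ell_k\rceil$, and tracking the indicator $1_{\{g/\ell_k\ge 1\}}=1_{\{\ell_k\le g\}}$, which is precisely what restricts the sum to $\{k:\ell_k\le g\}$; and (ii) not inadvertently assuming $b_g\le 1$ --- the bound on $\mathbb{P}(X\in(b_{k-1},b_k])$ must only use that intersecting an interval with $(0,1)$ cannot increase its measure, so the argument goes through for any choice of $\{\delta_k\}$ and $\{\ell_k\}$.
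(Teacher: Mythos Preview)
Your proof is correct and follows essentially the same direct computation as the paper: partition on the interval containing $X$, evaluate $T(qY(x))$ on each piece, and sum. Your version is in fact slightly more careful in writing $\mathbb{P}\bigl(X\in(b_{k-1},b_k]\bigr)\le \frac{q}{g}\delta_k$ rather than an equality, which avoids the implicit assumption $b_g\le 1$ that the paper silently makes.
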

\begin{proof}[Proof of Lemma~\ref{lem: boost e from unif}]
If $x>b_g $, then $q Y(x)<1$. Therefore, 
$$
q\,Y(x) \ge 1 \quad \Longleftrightarrow \quad \ell_k \le g , \text{ and } x\in (b_{k-1}, b_{k}]. 
$$
Note that
$$
\mathbb{P}\bigl(X\in (b_{k-1},b_{k}]\bigr) = b_{k}-b_{k-1} = \frac{q}{g }\,\delta_k. 
$$
For $x\in(b_{k-1}, b_{k}]$ with $\ell_k\le g $, we have
$$
\frac{g }{qY(x)} = \ell_k \quad \text{ and } \quad T\bigl(qY(x)\bigr) = \frac{g }{\lceil \ell_k\rceil}. 
$$
Therefore, 
$$
\mathbb{E}\Bigl[T\bigl(qY(X)\bigr)\Bigr] 
=\sum_{k:\,\ell_k\le g } \frac{q}{g }\,\delta_k \, \frac{g }{\lceil \ell_k\rceil}
=\sum_{k:\,\ell_k\le g } \frac{q\, \delta_k}{\lceil \ell_k\rceil} \leq q.
$$
\end{proof}

We can now prove Proposition~\ref{prop: Valid Boosted eValue}. 
\begin{proof}[Proof of Proposition~\ref{prop: Valid Boosted eValue}]
Define
\begin{align*}
q=\alpha, ~~ \delta_k &= \frac{1}{1+k}, \quad  b_{k} = b_{k-1} + \frac{q}{g (k+1)},\\
\ell_k & = k,  \quad \text{for }  k=1,\dots,g. 
\end{align*}
For $x\in (b_{k-1}, b_{k}]$, define
$$
Y(x) = \frac{g }{q\,k}, 
$$
and $Y(x)=q/2$ for $x>b_{g }$. It is clear from Algorithm~\ref{alg:p_to_e} that the output $E_j=Y(p_j)$ for the input $p_j$. 
Since $\sum_{k=1}^g  \frac{1}{k(k+1)}=\sum_{k=1}^g \left(\frac{1}{k}-\frac{1}{k+1}\right)=1-\frac{1}{g +1}\leq 1$, we can apply Lemma~\ref{lem: boost e from unif} to see that if $X\sim \mathrm{Unif}(0,1)$, then $\mathbb{E}\Bigl[T\bigl(qY(X)\bigr)\Bigr] =q(1-\frac{1}{g +1})\leq q$. 
\end{proof}

\section{Proofs of Main Results}
\label{sec: Proof}
This section provides the omitted proofs for the theoretical results in the main paper.

In the proof of Sections~\ref{app: pf prop MVN CRT valid} and \ref{app: pf power}, we make use of the following nations. For $n\geq k$, the Stiefel manifold 
$W_{n, k} = \{ V \in \mathbb{R}^{n \times k} \mid V^T V = \mb{I}_k \}$.
In the special case where $n = k$, $W_{n, n}$ reduces to the orthogonal group 
$O(n) = \{ V \in \mathbb{R}^{n \times n} \mid V^T V = V V^T = \mb{I}_n \}$. Let $\mu_{n,d}$ denote the unique Haar (invariant) measure on $W_{n,d}$.

\subsection{Proof of Proposition~\ref{prop: CRT valid}}\label{app:Proof of Lemma 4}
Recall that 
\begin{enumerate}
    \item 
    $\mathbf{X}_{\mc{T}}, \widetilde{\mathbf{X}}_{\mc{T}}^{(1)}, \cdots, \widetilde{\mathbf{X}}_{\mc{T}}^{(M)}$ are conditionally exchangeable given $\mathbf{X}_{-\mc{T}}$, and 
\item under the null hypothesis, $\bs{Y}$ and $\mathbf{X}_{\mc{T}}$ are conditionally independent given $\mathbf{X}_{-\mc{T}}$.
\end{enumerate}

Therefore, under the null hypothesis, it holds that $\bs{Y}$ and $\left(\mathbf{X}_{\mc{T}}, \widetilde{\mathbf{X}}_{\mc{T}}^{(1)}, \cdots, \widetilde{\mathbf{X}}_{\mc{T}}^{(M)}\right)$ are conditionally independent given $\mathbf{X}_{-\mc{T}}$. 
This implies that
$\mathbf{X}_{\mc{T}}, \widetilde{\mathbf{X}}_{\mc{T}}^{(1)}, \cdots, \widetilde{\mathbf{X}}_{\mc{T}}^{(M)}$ are conditionally exchangeable given $(\mathbf{X}_{-\mc{T}}, \bs{Y})$. 

By definition of $\widetilde{\mathbf{X}}^{(j)}$, we have $\mathbf{X}, \widetilde{\mathbf{X}}^{(1)}, \cdots, \widetilde{\mathbf{X}}^{(M)} $ are conditionally exchangeable given $\bs{Y}$. 
As a result, we conclude that $T_0, T_1, \ldots, T_{M}$ are conditionally exchangeable given $\bs{Y}$. 
Hence, for any $\alpha\in (0,1)$,  $\P(\textnormal{pval}_{T}\leq \alpha \mid \bs{Y})\leq \alpha$, which implies that $\P(\textnormal{pval}_{T}\leq \alpha)\leq \alpha$.

\subsection{Proof of Proposition~\ref{prop: multi gaussian CRT valid}}\label{app: pf prop MVN CRT valid}

Let $\widetilde{\mathbf{X}}_{\mc{T}}$ be an copy generated from Algorithm~\ref{alg:multi gaussian CRT}. 

We only need to prove 
\begin{equation}\label{equal sufficient}
\mathbf{X}_{\mc{T}}^T{\mathbf{X}_{\mc{S}}}=\widetilde{\mathbf{X}}_{\mc{T}}^T{\mathbf{X}_{\mc{S}}},{\mathbf{X}}^T{\mathbf{X}_{\mc{T}}}=\widetilde{\mathbf{X}}_{\mc{T}}^T\widetilde{\mathbf{X}}_{\mc{T}}
\end{equation}
and given $\mathbf{X}_{\mc{S}}$
\begin{equation}\label{exchangeable distribution}
  ({\mathbf{X}_{\mc{T}}},\widetilde{\mathbf{X}}_{\mc{T}})\stackrel{d}{=}(\widetilde{\mathbf{X}}_{\mc{T}},{\mathbf{X}_{\mc{T}}}).
\end{equation}
\paragraph{Proof of \eqref{equal sufficient}}
Note that $$\widetilde{\mathbf{X}}_{\mc{T}}=\mathbf{X}_{\mc{S}}\left(\mathbf{X}_{\mc{S}}^T\mathbf{X}_{\mc{S}}\right)^{-1}\mathbf{X}_{\mc{S}}^T\mathbf{X}_{\mc{T}}+\mathbf{P}_{\mc{R}}^TU\mathbf{Q},$$
where $U$ follows $\mu_{n-s,r}$ on $W_{n-s,r}$ with respect to the Haar measure (see Proposition 7.2 in \citet{eaton1983}). By \eqref{construction of projection} and \eqref{construction of Q}, we have 
$$\mathbf{P}_{\mc{R}}{\mathbf{X}_{\mc{S}}}=\bs{0}, \quad \text{and}\quad \mathbf{Q}^T\mathbf{Q}={\mathbf{X}_{\mc{T}}^T}\mathbf{P}_{\mc{R}}^T\mathbf{P}_{\mc{R}}{\mathbf{X}_{\mc{T}}}={\mathbf{X}_{\mc{T}}^T}{\mathbf{X}_{\mc{T}}}-\mathbf{X}_{\mc{T}}^T\mathbf{X}_{\mc{S}}\left(\mathbf{X}_{\mc{S}}^T\mathbf{X}_{\mc{S}}\right)^{-1}\mathbf{X}_{\mc{S}}^T\mathbf{X}_{\mc{T}}.$$
Therefore, 
\begin{align*}
  \widetilde{\mathbf{X}}_{\mc{T}}^T{\mathbf{X}_{\mc{S}}}&=\mathbf{X}_{\mc{T}}^T\mathbf{X}_{\mc{S}}\left(\mathbf{X}_{\mc{S}}^T\mathbf{X}_{\mc{S}}\right)^{-1}\mathbf{X}_{\mc{S}}^T{\mathbf{X}_{\mc{S}}}=\mathbf{X}_{\mc{T}}^T\mathbf{X}_{\mc{S}}\\ 
  \widetilde{\mathbf{X}}_{\mc{T}}^T\widetilde{\mathbf{X}}_{\mc{T}}&=\mathbf{X}_{\mc{T}}^T\mathbf{X}_{\mc{S}}\left(\mathbf{X}_{\mc{S}}^T\mathbf{X}_{\mc{S}}\right)^{-1}\mathbf{X}_{\mc{S}}^T\mathbf{X}_{\mc{S}}\left(\mathbf{X}_{\mc{S}}^T\mathbf{X}_{\mc{S}}\right)^{-1}\mathbf{X}_{\mc{S}}^T\mathbf{X}_{\mc{T}} +\mathbf{Q}^TU^T\mathbf{P}_{\mc{R}}\mathbf{P}_{\mc{R}}^T U\mathbf{Q}  \\ 
  &=\mathbf{X}_{\mc{T}}^T\mathbf{X}_{\mc{S}}\left(\mathbf{X}_{\mc{S}}^T\mathbf{X}_{\mc{S}}\right)^{-1}\mathbf{X}_{\mc{S}}^T\mathbf{X}_{\mc{T}}
  + \mathbf{Q}^T \mathbf{Q}\\ 
  &=\mathbf{X}_{\mc{T}}^T {\mathbf{X}_{\mc{T}}}.
\end{align*}

\paragraph{Proof of \eqref{exchangeable distribution}} 
\textbf{Part 1.}
When $n-s>t$, we only need to prove that given $({\mathbf{X}}_{\mc{T}}^T{\mathbf{X}_{\mc{S}}},{\mathbf{X}}_{\mc{T}}^T{\mathbf{X}_{\mc{T}}})$, both $\mathbf{X}_{\mc{T}}$ and $\widetilde{\mathbf{X}}_{\mc{T}}$ distributed as 
$$\mathbf{X}_{\mc{S}}\left(\mathbf{X}_{\mc{S}}^T\mathbf{X}_{\mc{S}}\right)^{-1}\mathbf{X}_{\mc{S}}^T\mathbf{X}_{\mc{T}}+\mathbf{P}_{\mc{R}}^TU\mathbf{Q}, $$
where $U\sim \mu_{n-s,t}$. 

By properties of multivariate normal distributions, $({\mathbf{X}}_{\mc{T}}^T{\mathbf{X}_{\mc{S}}},{\mathbf{X}}_{\mc{T}}^T{\mathbf{X}}_{\mc{T}})$ is the sufficient statistic for the conditional distribution of ${\mathbf{X}}_{\mc{T}}\mid {\mathbf{X}}_{\mc{S}}$ and the conditional distribution ${\mathbf{X}}_{\mc{T}}\mid (\mathbf{X}_{\mc{S}},  {\mathbf{X}}_{\mc{T}}^T{\mathbf{X}_{\mc{S}}},
{\mathbf{X}}_{\mc{T}}^T{\mathbf{X}}_{\mc{T}})$ is the uniform distribution (proportional to the Haar measure) on the support.
Furthermore, the conditional distribution of 
$\mathbf{P}_{\mc{R}}{\mathbf{X}}_{\mc{T}}$ given $(\mathbf{X}_{\mc{S}},  {\mathbf{X}}_{\mc{T}}^T{\mathbf{X}_{\mc{S}}},
{\mathbf{X}}_{\mc{T}}^T{\mathbf{X}}_{\mc{T}})$ is the uniform distribution on its support 
$$
\mc{X}_{0}=\{\mathbf{x}\in \bbR^{(n-s)\times t}: \mathbf{x}^T \mathbf{x} = \mathbf{Q}^2\},
$$
where we have used the fact that ${\mathbf{X}}_{\mc{T}}^T\mathbf{P}_{\mc{R}}^T\mathbf{P}_{\mc{R}}{\mathbf{X}}_{\mc{T}}=\mathbf{Q}^2\in \bbR^{t\times t}$ and it is of full rank $t$ a.s. 

Since $\mathbf{Q}$ is invertible, we define $\psi(\mathbf{x})=\mathbf{x}\mathbf{Q}^{-1}$ for $\mathbf{x}\in \mc{X}_0$. 
It is easy to see $\psi: \mc{X}_{0} \mapsto W_{n-s,t}$ and is one-to-one. 
In particular, define $\mathbf{V}=\mathbf{P}_{\mc{R}}{\mathbf{X}}_{\mc{T}}$ and $U_0=\psi(\mathbf{V})$, 
which is equivalent to 
$\mathbf{V}=U_0 \mathbf{Q}$ for $U_0\in W_{n-s,t}$. 

For any fixed $\mathbf{B}\in O(n-s)$, it is routine to show that if $\mathbf{V}$ uniformly distributed on $\mc{X}_0$ then so does $\mathbf{B}\mathbf{V}$. This suggests that when conditional on $(\mathbf{X}_{\mc{S}},  {\mathbf{X}}_{\mc{T}}^T{\mathbf{X}_{\mc{S}}},
{\mathbf{X}}_{\mc{T}}^T{\mathbf{X}}_{\mc{T}})$, we have $\psi(\mathbf{V})\eqd \psi(\mathbf{B}\mathbf{V})$, which also reads as $U_0\eqd \mathbf{B}U_0$. 
By the uniqueness of (left) the Haar measure on $W_{n-s,t}$, we conclude that $U_0\sim \mu_{n-s,t}$. Therefore, we prove that $\mathbf{X}_{\mc{T}}$ and $\widetilde{\mathbf{X}}_{\mc{T}}$ have the same conditional distribution when $n-s>t$.

\textbf{Part 2.}
When $n-s\leq t$, under \eqref{null of X and Z}, we have ${\mathbf{X}}_{\mc{T}}\mid {\mathbf{X}_{\mc{S}}}\sim N_{n,t}(\mathbf{X}_{\mc{S}}\xi^T,{\bf{I}}_{n}\otimes \Sigma_{\mc{T}\mid \mc{S}})$, so conditional on $\mathbf{X}_{\mc{S}}$, $\mathbf{P}_{\mc{R}} {\mathbf{X}_{\mc{T}}}\sim N_{n-s,t}(\bs{0}_{n\times t},{\bf{I}}_{n-s}\otimes \Sigma_{\mc{T}\mid \mc{S}})$. Note that by definition of $\mathbf{Q}$, it holds that  $\mathbf{P}_{\mc{R}}\widetilde{\mathbf{X}}_{\mc{T}}=U\mathbf{P}_{\mc{R}}{\mathbf{X}}_{\mc{T}},$
with $U$ following $\mu_{n-s,n-s}$ on $O(n-s)$ with respect to Haar measure. Therefore, given $\mathbf{X}_{\mc{S}}$, we have ${\mathbf{X}}_{\mc{T}}\stackrel{d}{=}\widetilde{\mathbf{X}}_{\mc{T}}$. 
This suggests that $(\mathbf{X}_{\mc{S}}, \mathbf{X}_{\mc{T}})\eqd (\mathbf{X}_{\mc{S}}, \widetilde{\mathbf{X}}_{\mc{T}})$. 
Using the same argument but with $\widetilde{\mathbf{X}}_{\mc{T}}$ swapped with $\mathbf{X}_{\mc{T}}$ and noting that $\mathbf{P}_{\mc{R}}{\mathbf{X}}_{\mc{T}}=U^T\mathbf{P}_{\mc{R}}\widetilde{\mathbf{X}}_{\mc{T}}$, we have $(\mathbf{X}_{\mc{S}}, \mathbf{X}_{\mc{T}},\widetilde{\mathbf{X}}_{\mc{T}})\eqd (\mathbf{X}_{\mc{S}}, \widetilde{\mathbf{X}}_{\mc{T}},\mathbf{X}_{\mc{T}})$. 
 Therefore, \eqref{exchangeable distribution} has been proved.
\subsection{Proof of Theorem \ref{power analysis}}\label{app: pf power}

To ease the analysis, we first reduce the problem to the the ``infinite-\(M\)'' setting. 
Denote $T_0=T(\mathbf{Y},\mathbf{X}_{\mc{T}},\mathbf{X}_{\mc{S}})$
and $T_m=T(\mathbf{Y},\widetilde{\mathbf{X}_{\mc{T}}},\mathbf{X}_{\mc{S}})$. 
For any number $q$, we can derive 
\begin{align}
\label{eq: rejection probability general}   & \quad \mathbb P(\phi_{T,\alpha}=1)\\
   &=\mathbb P \left( \frac{1}{M+1}\left[1+\sum_{m=1}^M \mathbf{1}\left\{T_m \geq T_0\right\}\right]\leq \alpha   \right)\nonumber \\
    &\geq \mathbb P \left( T_0 > q  ,  \frac{1}{M+1}\left[1+\sum_{m=1}^M \mathbf{1}\left\{T_m \geq T_0\right\}\right]\leq \alpha  \right)\nonumber \\
    &=\mathbb P(T_0 >q )  -\mathbb P\left( T_0 > q,\frac{1}{M+1}\left[1+\sum_{m=1}^M \mathbf{1}\left\{T_m \geq T_0\right\}\right]>\alpha\right)\nonumber \\
    &\geq \mathbb P(T_0 >q) -\mathbb P \left( \frac{1}{M+1}\left[1+\sum_{m=1}^M \mathbf{1}\left\{T_m \geq q\right\}\right]>\alpha \right), \nonumber 
\end{align}
where the last inequality is because if $b>c$ then $\mathbf{1}\left\{a\geq b\right\}\leq \mathbf{1}\left\{a\geq c\right\}$. 
Then the theorem is proved if for some $q$, we have 
\begin{itemize}
    \item $\mathbb P(T_0 >q )\geq 1-(1-\beta)/2$, 
    \item $\mathbb P \left( \frac{1}{M+1}\left[1+\sum_{m=1}^M \mathbf{1}\left\{T_m \geq q\right\}\right]>\alpha \right)\leq (1-\beta)/2$. 
\end{itemize}

The following lemma is proved in \cite{lin_goodness--fit_2025}. 
\begin{lemma}\label{lem: null tail bound}
If $M>2/\alpha$ and 
if the probability $\tilde{\alpha}=P\left(T_m \geq q_n\right)$ satisfies $2(\tilde{\alpha}+\sqrt{\tilde{\alpha}})<\alpha$ and $\tilde{\alpha}<0.5$, then 
$$
P \left( \frac{1}{M+1}\left[1+\sum_{m=1}^M \mathbf{1}\left\{T_m \geq q_n\right\}\right]>\alpha \right) \leq e^{-M}. 
$$
\end{lemma}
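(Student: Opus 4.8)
The plan is to reduce the claim to a Chernoff-type upper-tail bound for the count $N:=\sum_{m=1}^{M}\mathbf{1}\{T_m\ge q_n\}$ of copies whose statistic clears the deterministic threshold $q_n$. The event in question is exactly $\{N>\alpha(M+1)-1\}$. First I would record two elementary consequences of the hypotheses: since $\sqrt{\tilde\alpha}>0$, the inequality $2(\tilde\alpha+\sqrt{\tilde\alpha})<\alpha$ yields both $2\tilde\alpha<\alpha$ and $2\sqrt{\tilde\alpha}<\alpha$; and since $M>2/\alpha$ we have $\alpha M>2$, so $\alpha(M+1)-1=\alpha M/2+\big(\alpha M/2+\alpha-1\big)>\alpha M/2>M\tilde\alpha$. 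Thus the threshold lies strictly above the mean of $N$, and we are genuinely controlling an upper deviation.

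Next I would address the dependence among $T_1,\dots,T_M$. The copies $\widetilde{\mathbf{X}}^{(m)}_{\mathcal{T}}$ are generated conditionally independently given the observed data (the i.i.d.\ Haar matrices $U^{(j)}$ in Algorithm~\ref{alg:multi gaussian CRT}, or the independent reverse sweeps from $\mathbf{X}^{(hub)}$ in Algorithm~\ref{alg: exchangeable}), so the indicators $\mathbf{1}\{T_m\ge q_n\}$ are, conditionally on the data, i.i.d.\ $\mathrm{Bernoulli}(p)$ for a (possibly random) success probability $p$ with $\mathbb{E}[p]=\tilde\alpha$; for the MVN-CRT they are in fact unconditionally i.i.d., because the conditional law of a copy's statistic is rotation-invariant and hence does not depend on the data, so $p\equiv\tilde\alpha$ and $N\sim\mathrm{Bin}(M,\tilde\alpha)$. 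By Markov's inequality $\mathbb{P}(p>\sqrt{\tilde\alpha})\le\tilde\alpha/\sqrt{\tilde\alpha}=\sqrt{\tilde\alpha}$, and on the complementary event $p\le\sqrt{\tilde\alpha}<\alpha/2$, so conditionally $N$ is stochastically dominated by $\mathrm{Bin}(M,\alpha/2)$, whose mean $\alpha M/2$ sits below the threshold $\alpha(M+1)-1$. A multiplicative Chernoff bound for $\mathrm{Bin}(M,\alpha/2)$ then makes the conditional probability $\mathbb{P}\!\left(N\ge\alpha(M+1)-1\mid\text{data}\right)$ exponentially small on that event, so that $\mathbb{P}\big(N>\alpha(M+1)-1\big)\le\sqrt{\tilde\alpha}+(\text{Chernoff term})$; the hypotheses $2(\tilde\alpha+\sqrt{\tilde\alpha})<\alpha$, $\tilde\alpha<1/2$ and $M>2/\alpha$ are what is used to bound this sum by $e^{-M}$. (In the way the lemma is invoked one has $\tilde\alpha=\tilde\alpha_n\to0$ with $\alpha,M$ fixed, in which case the dominant term is simply $\mathbb{P}(N\ge2)=O(M^2\tilde\alpha^2)$, eventually well below $e^{-M}$.)

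The main obstacle---and really the only delicate point---is the constant bookkeeping: choosing the Markov split level and the Chernoff parameter so that the two pieces add up to at most $e^{-M}$ under precisely the stated inequalities. Since this exact statement is quoted from \citet{lin_goodness--fit_2025}, I would borrow their calculation for that last step; the conceptual skeleton---pass from the event to a binomial upper tail, peel off the random success probability with Markov, then apply Chernoff---is as above.
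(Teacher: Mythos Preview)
The paper does not prove this lemma; it simply cites \citet{lin_goodness--fit_2025}, so there is no internal argument to compare against, and your decision to defer to that reference for the final step is exactly what the paper itself does.

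That said, the sketch you give before deferring has a genuine gap, not a bookkeeping one. Your Markov split on the conditional success probability $p$ leaves the term $P(p>\sqrt{\tilde\alpha})\le\sqrt{\tilde\alpha}$, which is governed only by $\tilde\alpha$, not by $M$; under the stated hypotheses $\sqrt{\tilde\alpha}$ can exceed $e^{-M}$ by many orders of magnitude. Take $\alpha=0.1$, $M=21$, $\tilde\alpha=10^{-3}$ (all three conditions hold): then $\sqrt{\tilde\alpha}\approx0.032$ while $e^{-21}\approx8\times10^{-10}$. So the split-then-Chernoff route cannot reach $e^{-M}$. More strikingly, the same example shows that even in the clean unconditional-i.i.d.\ case the conclusion fails: the event becomes $\{N\ge2\}$ with $N\sim\mathrm{Bin}(21,10^{-3})$, and $P(N\ge2)\approx\binom{21}{2}\cdot10^{-6}\approx2\times10^{-4}\gg e^{-21}$. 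The lemma as transcribed here is therefore too strong; presumably the version in the cited reference carries a weaker exponent (something like $e^{-cM}$ with $c$ depending on $\alpha,\tilde\alpha$ through a relative-entropy term) or an additional link between $\tilde\alpha$ and $M$. Your binomial--Chernoff framework is the right tool for such a corrected statement, but no argument can deliver the bare $e^{-M}$ from these hypotheses alone. (A minor aside: your unconditional-i.i.d.\ claim holds for $T_{\mathrm{dense}}$ but not quite for $T_{\mathrm{sparse}}$, since for the latter the joint law of the per-coordinate correlations depends on the Gram matrix of the normalized columns of $\mathbf{Q}$, hence on the data.)
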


We pick and fix a value of $\tilde{\alpha}$ such that $2(\tilde{\alpha}+\sqrt{\tilde{\alpha}})<\min(0.5,\alpha)$. 

Since $M \geq \max \left(2 \alpha^{-1}, \log \left(2 (1-\beta)^{-1}\right)\right)$, we can use Lemma~\ref{lem: null tail bound} to bound 
$$
\mathbb P \left( \frac{1}{M+1}\left[1+\sum_{m=1}^M \mathbf{1}\left\{T_m \geq q_n\right\}\right]>\alpha \right)$$ 
from above by $(1-\beta)/2$ if $q_n$ is chosen such that the probability $P\left(T_m \geq q_n\right)\leq \tilde{\alpha}$. 

\bigskip

With the above preparation, we will proceed with treating $M$ as infinite, replacing $\alpha$ by $\tilde{\alpha}$, and replacing $\beta$ by $1-(1-\beta)/2$. 
The rest of the proofs can be summarized as follows:
\begin{itemize}
  \item Characterize the distribution of $T(\mathbf{Y},\mathbf{\widetilde{X}}_{\mc{T}},\mathbf{X}_{\mc{S}})$ for copies $\mathbf{\widetilde{X}}_{\mc{T}}$ generated by Algorithm~\ref{alg:multi gaussian CRT} and thereby upper bound its $(1-\al)$-th quantile, which serves as $q_n$. 
  \item Characterize the distribution of $T(\mathbf{Y},\mathbf{X}_{\mc{T}},\mathbf{X}_{\mc{S}})$ under alternative points and thereby prove that it exceeds $q_n$ with probability larger than $\beta$.
\end{itemize}
\subsubsection{Auxiliary Results}
The auxiliary results primarily concern concentration tail bounds.
\begin{lemma}[{\citep[p.27]{ledoux2001concentration}}]\label{first order}
  Suppose $n\geq d$ and $A\sim \mu_{n,d}$, and $f_1(A)=M^T{\rm  vec}(A)$ for some $M\in \bbR^{nd\times 1}$. For any \( t \geq 0 \), we have
  \[\mu_{n,d} \left( | f_1 | \geq t \right)\leq 2\exp\left(-\frac{(n-1)t^2}{8\left\|M\right\|_2^2}\right).\]
\end{lemma}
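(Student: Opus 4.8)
\medskip
\noindent\emph{Proof proposal.} The plan is to read the inequality off as a direct instance of the concentration-of-measure phenomenon on the Stiefel manifold $W_{n,d}$, applied to the linear functional $f_1$. Two elementary observations make this work: first, $f_1$ has a distribution symmetric about $0$ under $\mu_{n,d}$, so its mean and a median both vanish; second, $f_1$ is $\|M\|_2$-Lipschitz on $W_{n,d}$ (being the restriction of a linear map on the ambient space $\mathbb{R}^{n\times d}$), for both the chordal and the geodesic metric. Once these are in hand, rescaling by $\|M\|_2$ reduces the claim to the standard Lipschitz-concentration bound on $W_{n,d}$.

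For the symmetry: writing $f_1(A)=\langle M,\mathrm{vec}(A)\rangle$ we have $f_1(-A)=-f_1(A)$, and since $-A=(-\mathbf{I}_n)A$ with $-\mathbf{I}_n\in O(n)$, left-invariance of the Haar measure gives $f_1(A)\eqd -f_1(A)$ under $\mu_{n,d}$; hence $\mathbb{E}_{\mu_{n,d}}[f_1]=0$ and $0$ is a median of $f_1$ (the case $M=\mathbf{0}$ being trivial). For the Lipschitz bound, for $A,A'\in W_{n,d}$,
\[
|f_1(A)-f_1(A')|=\big|\langle M,\mathrm{vec}(A)-\mathrm{vec}(A')\rangle\big|\le \|M\|_2\,\|A-A'\|_{\mathrm{F}}\le \|M\|_2\, d_{\mathrm{geo}}(A,A'),
\]
since the geodesic distance on the embedded submanifold dominates the Euclidean (Frobenius) distance. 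Thus $g:=f_1/\|M\|_2$ is $1$-Lipschitz on $(W_{n,d},d_{\mathrm{geo}})$ with median $0$.

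It then remains to invoke the concentration inequality for $1$-Lipschitz functions on $W_{n,d}$, which is exactly the statement recorded in \citet[p.~27]{ledoux2001concentration}: for $1$-Lipschitz $g$ with median $m_g$ one has $\mu_{n,d}(|g-m_g|\ge u)\le 2\exp(-(n-1)u^2/8)$. Applying this with $g=f_1/\|M\|_2$, $m_g=0$ and $u=t/\|M\|_2$ yields $\mu_{n,d}(|f_1|\ge t)\le 2\exp\!\big(-(n-1)t^2/(8\|M\|_2^2)\big)$, as claimed. The substantive content lies entirely in that concentration inequality; a self-contained derivation of it would be the only real obstacle, and the natural route is the Bakry--Émery criterion: the orthogonal group with its bi-invariant metric has a Ricci lower bound of order $n$ (hence Gaussian concentration with constant of order $n$), and the orbit map onto $W_{n,d}$ is a $1$-Lipschitz Riemannian submersion, which transports the concentration to $W_{n,d}$; the constant $8$ and the exact factor $n-1$ are not sharp but suffice.
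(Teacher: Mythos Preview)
Your proposal is correct and matches what the paper does: the paper simply cites this inequality from \citet[p.~27]{ledoux2001concentration} without proof, and your argument is precisely the standard derivation from the general Lipschitz concentration inequality stated there---use the symmetry $A\mapsto -A$ (via left-invariance of $\mu_{n,d}$) to center the median at $0$, observe that $f_1/\|M\|_2$ is $1$-Lipschitz in the geodesic metric since it is already $1$-Lipschitz in the chordal metric, and then rescale. There is nothing to add.
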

\begin{lemma}[{\citep[Theorem 7.1]{10.1214/23-EJP966}}]\label{second order}
  Suppose $n\geq d$ and $A\sim \mu_{n,d}$, and $f_2(A)={\rm vec }(A)^TM{\rm vec}(A)$ for some matrix $M\in \bbR^{nd\times nd}$. For any \( t \geq 0 \), we have
  \[
  \mu_{n,d} \left( | f_2 - \operatorname{tr}(M)/n | \geq t \right) 
  \leq 2 \exp \left( -\frac{1}{C} \min \left( \frac{(n-2)^2 t^2}{\|M\|_{{\rm HS}}^2}, \frac{(n-2) t}{\|M\|_{{\rm op}}} \right) \right),
  \]
  where we may take $C=128e^2/\log (2)$.
\end{lemma}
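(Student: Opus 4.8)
The statement is a verbatim specialization of Theorem 7.1 of \citet{10.1214/23-EJP966}, so the cleanest route is to cite that result directly; for completeness I sketch the mechanism behind it. The plan rests on the fact that the Stiefel manifold $W_{n,d}$, endowed with its canonical Riemannian metric and the Haar probability measure $\mu_{n,d}$, has Ricci curvature bounded below by a constant of order $n-2$. By the Bakry--\'Emery criterion this yields a logarithmic Sobolev inequality with constant of order $1/(n-2)$, which is the source of the $(n-2)$ factors in the exponent.

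First I would record the moment structure of $A\sim\mu_{n,d}$. By left $O(n)$-invariance of $\mu_{n,d}$, the cross-covariance of any two columns is a scalar multiple of $\ident_n$; taking traces and using that the columns are unit vectors (respectively orthogonal) gives $\E[A_{ij}A_{kl}]=\tfrac1n\delta_{ik}\delta_{jl}$, so that $\E[\mathrm{vec}(A)\,\mathrm{vec}(A)^\top]=\tfrac1n\ident_{nd}$ and hence $\E[f_2]=\operatorname{tr}(M)/n$, which explains the centering term. Next I would compute the Riemannian gradient and Hessian of the quadratic map $f_2(A)=\mathrm{vec}(A)^\top M\,\mathrm{vec}(A)$. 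The ambient gradient is $2M\,\mathrm{vec}(A)$ and the ambient Hessian is $2M$; projecting onto the tangent space of $W_{n,d}$ and adding the second-fundamental-form correction produces a Riemannian Hessian whose operator and Hilbert--Schmidt norms are controlled by $\|M\|_{\mathrm{op}}$ and $\|M\|_{\mathrm{HS}}$ respectively.

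The core of the argument is the higher-order concentration machinery of \citet{10.1214/23-EJP966}: because $f_2$ is a second-order polynomial, a single application of the Herbst argument to the logarithmic Sobolev inequality is insufficient, since $\|\nabla f_2\|$ is not uniformly bounded but is itself a fluctuating linear quantity. One must instead iterate, controlling the fluctuations of $\|\nabla f_2\|^2$ by the same inequality. This bootstrapping converts the second-derivative bounds into a Bernstein-type tail that interpolates between a sub-Gaussian regime governed by $\|M\|_{\mathrm{HS}}^2/(n-2)^2$ for small $t$ and a sub-exponential regime governed by $\|M\|_{\mathrm{op}}/(n-2)$ for large $t$, yielding exactly the stated $\min(\cdot,\cdot)$ form.

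The main obstacle, and the reason the result is nontrivial, is handling the Riemannian geometry in the Hessian computation: the curvature of $W_{n,d}$ and the second fundamental form of its embedding into $\R^{n\times d}$ generate correction terms that must be shown not to inflate the effective Hessian norms beyond $\|M\|_{\mathrm{op}}$ and $\|M\|_{\mathrm{HS}}$. Tracking these terms, together with the explicit bookkeeping needed to recover the precise constant $C=128e^2/\log 2$ through the iterated Herbst argument, is the delicate part; all of this is carried out in \citet{10.1214/23-EJP966}, from which the lemma follows upon the identifications above.
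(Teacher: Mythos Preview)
Your proposal is correct and matches the paper's approach: the paper does not prove this lemma at all but simply cites it as Theorem~7.1 of \citet{10.1214/23-EJP966}, exactly as you suggest in your first sentence. Your additional sketch of the underlying log-Sobolev/iterated-Herbst mechanism is accurate and goes beyond what the paper provides.
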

\begin{lemma}[Restated {\citep[Lemma 8.1]{birge2001alternative}}]\label{chi square concentration}
  Let \( X \) be a noncentral \(\chi^2\) variable with \( D \) degrees of freedom and noncentrality parameter \( B\). Then for all \( x > 0 \),
  \begin{equation}\label{eq: upper tail bound}
      \mathbb{P} \left[ X \geq (D + B) + 2\sqrt{(D + 2B)x} + 2x \right] \leq \exp(-x),
  \end{equation}
  and
  \begin{equation}\label{eq: lower tail bound}
      \mathbb{P} \left[ X \leq (D + B) - 2\sqrt{(D + 2B)x} \right] \leq \exp(-x).
  \end{equation}
\end{lemma}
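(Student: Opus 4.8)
The plan is to prove this classical sub-exponential tail inequality by the Cram\'er--Chernoff method applied to the explicit moment generating function of $X$, following the Laurent--Massart argument; since the statement is quoted verbatim, one may alternatively just invoke \citep[Lemma~8.1]{birge2001alternative}. The starting point is the moment generating function: for $t<1/2$,
\[
K(t):=\log \mathbb{E}[e^{tX}]= -\tfrac{D}{2}\log(1-2t)+\frac{tB}{1-2t},
\]
obtained, for instance, by writing $X\eqd (Z_1+\sqrt{B})^2+\sum_{i=2}^{D}Z_i^2$ with $Z_i\iid N(0,1)$ and evaluating Gaussian integrals; note $K(0)=0$, $K'(0)=D+B=\mathbb{E}X$, and $K''(0)=2(D+2B)=\operatorname{Var}(X)$. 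The only analytic input is the pair of elementary inequalities
\[
-\log(1-u)-u\le \frac{u^2}{2(1-u)}\quad (0\le u<1),\qquad -\log(1-u)-u\le \frac{u^2}{2}\quad (u\le 0),
\]
each proved by noting the difference of the two sides vanishes at $u=0$ and is monotone on the relevant interval. Applying these with $u=2t$ to $K(t)-(D+B)t=D\big(-\tfrac12\log(1-2t)-t\big)+\tfrac{2Bt^2}{1-2t}$, and using $\tfrac{1}{1-2t}\le 1$ when $t\le 0$, yields
\[
K(t)-(D+B)t\le \frac{(D+2B)\,t^2}{1-2t}\quad (0<t<\tfrac12),\qquad K(t)-(D+B)t\le (D+2B)\,t^2\quad (t\le 0).
\]

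For the lower tail I would apply Chernoff with $t<0$: writing $v:=D+2B$, for $u>0$ one has
\[
\mathbb{P}\big(X\le (D+B)-u\big)\le \exp\big(K(t)-t((D+B)-u)\big)\le \exp\big(vt^2+tu\big),
\]
and $t=-u/(2v)$ makes the exponent equal $-u^2/(4v)$, so $u=2\sqrt{vx}$ gives \eqref{eq: lower tail bound}. For the upper tail I would apply Chernoff with $0<t<\tfrac12$ and $K(t)-(D+B)t\le vt^2/(1-2t)$; the factor $1/(1-2t)>1$ is exactly what produces the extra linear term. Rather than invert the Legendre transform in closed form, I would substitute $t=\sqrt{x}/(\sqrt{v}+2\sqrt{x})$, equivalently $1-2t=\sqrt{v}/(\sqrt{v}+2\sqrt{x})\in(0,1)$; then $vt^2/(1-2t)=\sqrt{v}\,x/(\sqrt{v}+2\sqrt{x})$ and $t\big(2\sqrt{vx}+2x\big)=2x(\sqrt{v}+\sqrt{x})/(\sqrt{v}+2\sqrt{x})$, whose difference equals $-x$, so $\mathbb{P}\big(X\ge (D+B)+2\sqrt{vx}+2x\big)\le e^{-x}$, which is \eqref{eq: upper tail bound}.

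Once the moment generating function and the two logarithmic inequalities are in hand, the rest is bookkeeping. The one step that requires a little care is the upper tail: because the Cram\'er transform of $K$ has no clean closed form, one commits to an explicit (slightly suboptimal) value of $t$ and checks a scalar identity, which is what makes the constants come out to exactly $2\sqrt{(D+2B)x}+2x$ rather than something looser. Since the inequality is classical, in the write-up it suffices to cite \citep[Lemma~8.1]{birge2001alternative}.
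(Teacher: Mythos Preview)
Your proof is correct. The paper itself does not prove this lemma: it is stated as an auxiliary result and attributed to \citep[Lemma~8.1]{birge2001alternative} without any argument, so there is nothing to compare against beyond the citation you also give. Your self-contained Cram\'er--Chernoff derivation via the exact MGF and the Laurent--Massart logarithmic inequalities is the standard route and all the computations check out, including the explicit choice $t=\sqrt{x}/(\sqrt{v}+2\sqrt{x})$ that yields the constant $2$ in front of both $\sqrt{(D+2B)x}$ and $x$.
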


\subsubsection{Power analysis of dense alternatives}
\begin{proof}
For the copies $\widetilde{\mathbf{X}}_{\mc{T}}=\mathbf{X}_{\mc{S}}\left(\mathbf{X}_{\mc{S}}^T\mathbf{X}_{\mc{S}}\right)^{-1}\mathbf{X}_{\mc{S}}^T\mathbf{X}_{\mc{T}}+\mathbf{P}_{\mc{R}}^TU\mathbf{Q}$ generated from Algorithm \ref{alg:multi gaussian CRT}, we have 
$$ T_{\rm dense}(\mathbf{Y},\mathbf{\widetilde{X}}_{\mc{T}},\mathbf{X}_{\mc{S}})=\left\|\left(\mathbf{Y}^T\mathbf{P}_{\mc{R}}^T\mathbf{P}_{\mc{R}}\mathbf{Y}\right)^{-1/2} \cdot 
 \mathbf{Y}^T\mathbf{P}_{\mc{R}}^TU\right\|_2^2={\rm vec}(U)^T M {\rm vec}(U),$$
where $U$ follows $\mu_{n-s,t}$ on $W_{n-s,t}$ and 
$$M=\left(\mathbf{Y}^T\mathbf{P}_{\mc{R}}^T\mathbf{P}_{\mc{R}}\mathbf{Y}\right)^{-1} \cdot \mathbf{I}_t\otimes \mathbf{P}_{\mc{R}}\mathbf{Y}\mathbf{Y}^T\mathbf{P}_{\mc{R}}^T.$$
It is easy to see that the eigenvalues of $M$ are $1$ (with multiplicity $t$) and 0, which implies 
$${\rm tr}(M)=t, \left\|M\right\|_{\rm HS}^2={\rm tr}(M^TM)=t, \left\|M\right\|_{\rm op}= 1. $$
It then follows from Lemma~\ref{second order} that 
 $$\mu_{n-s,t}\left(\left|T_{\rm dense}(\mathbf{Y},\mathbf{\widetilde{X}}_{\mc{T}},\mathbf{X}_{\mc{S}})-\frac{t}{n-s}\right|\geq u\right)\leq 2\exp\left(-\frac{1}{C}\min\left(\frac{(n-s-2)^2u^2}{t},(n-s-2)u\right)\right).$$
 for constant $C=128e^2/\log (2)$.

Let $u_n = C\log(2/\al)\sqrt{t} /(n-s) $. When $n$ is large enough, we have 
 $$\mu_{n-s,t}\left(T_{\rm dense}(\mathbf{Y},\mathbf{\widetilde{X}}_{\mc{T}},\mathbf{X}_{\mc{S}})\leq \frac{t}{n-s}+ u_n \right)\geq 1-\al.$$
So we only need to prove that: there is some constant $c_1>0$ such that $\forall \theta\in {\Theta}_{\rm dense}(c_1{t^{1/4}}/{\sqrt{n}})$, we have
$$\bbP_\theta\left(T_{\rm dense}(\mathbf{Y},\mathbf{{X}}_{\mc{T}},\mathbf{X}_{\mc{S}})> \frac{t}{n-s}+u_n \right)\geq \beta + o(1). $$

For simplicity, we introduce the following notations:
\begin{align*}
S_1&=\frac{1}{\sigma^2}\left\|\mathbf{P}_{\mc{R}}{\bf{Y}}\right\|_2^2,\\ 
S_2&=\frac{1}{\sigma^2}{\bf{Y}}^T{\mathbf{P}_{\mc{R}}}^T{\mathbf{P}_{\mc{R}}}{\mathbf{X}_{\mc{T}}}\left(\mathbf{X}_{\mc{T}}^T\mathbf{P}_{\mc{R}}^T{\mathbf{P}_{\mc{R}}}\mathbf{X}_{\mc{T}}\right)^{-1}{{\mathbf{X}}_{\mc{T}}^T}{{\bf{P}}}_{\mc{R}}^T{\mathbf{P}_{\mc{R}}}{\bf{Y}},  
\end{align*}
which allows us to write $T_{\rm dense}=S_2/S_1$. Let $\epsilon =\mathbf{P}_{\mc{R}}\left({\bf{Y}}-\mathbf{X}_{\mc{T}}\beta_{\mc{T}}\right)\in \bbR^{n-s}$. 
We have
$$\epsilon\sim N({\bf{0}},\sigma^2{\bf{I}}_{n-s}),\quad 
\mathbf{P}_{\mc{R}}\mathbf{X}_{\mc{T}}\beta_{\mc{T}}\sim N\left({\bf{0}},(\beta_{\mc{T}}^T\Sigma_{\mc{T}\mid \mc{S}}\beta_{\mc{T}}) {\bf{I}}_{n-s}\right),\quad \epsilon\perp  \mathbf{P}_{\mc{R}}{\mathbf{X}}.$$
By relationship between normal and Chi-square distributions, we have the following results:
\begin{enumerate}
    \item $S_1\mid \mathbf{P}_{\mc{R}}{\mathbf{X}}_{\mc{T}}\beta_{\mc{T}}$ follows a noncentral $\chi^2$ distribution with $n-s$ degrees of freedom and noncentrality parameter $\left\|\mathbf{P}_{\mc{R}}\mathbf{X}_{\mc{T}}\beta_{\mc{T}}\right\|_2^2/\sigma^2$,
    \item $S_2\mid \mathbf{P}_{\mc{R}}{\mathbf{X}}_{\mc{T}}\beta_{\mc{T}}$ follows a noncentral $\chi^2$ distribution with $t$ degrees of freedom and noncentrality parameter $\left\|\mathbf{P}_{\mc{R}}\mathbf{X}_{\mc{T}}\beta_{\mc{T}}\right\|_2^2/\sigma^2$, and 
    \item $\left\|\mathbf{P}_{\mc{R}}\mathbf{X}_{\mc{T}}\beta_{\mc{T}}\right\|_2^2/(\beta_{\mc{T}}^T\Sigma_{\mc{T}\mid \mc{S}}\beta_{\mc{T}})$ follows a central $\chi^2$ distribution with $n-s$ degrees of freedom.
\end{enumerate}  

By the assumption on the eigenvalues of \( \Sigma \), \( \Sigma_{\mc{T} \mid \mc{S}} \) also has its eigenvalues bounded between \( M_0^{-1} \) and \( M_0 \). 
For $\left\|\Sigma_{\mc{T}\mid \mc{S}}\beta_{\mc{T}}\right\|_2\geq C_1t^{1/4}/\sqrt{n}$, we have $\beta_{\mc{T}}^T\Sigma_{\mc{T}\mid \mc{S}}\beta_{\mc{T}}\geq C_1t^{1/2}/(M_0 n)$. Since $n-s\geq (1-c_1)n$ and $\sigma\leq M_1$, by Lemma \ref{chi square concentration}, we can find some positive constant $c_3$ depending on $(M_0, M_1, M_2, c_1)$
such that the following holds for sufficiently large $n$,
\begin{equation}\label{eq: strong signal}
  P\left(\frac{1}{\sigma^2}\left\|\mathbf{P}_{\mc{R}}\mathbf{X}_{\mc{T}}\beta_{\mc{T}}\right\|_2^2\geq c_3 C_1\sqrt{t}\right)\geq \frac{1+\beta}{2}.
\end{equation}

For any $c>0$, define the event 
\begin{align*}
A(c)=\{S_1<n-s+\sigma^{-2}\left\|\mathbf{P}_{\mc{R}}\mathbf{X}_{\mc{T}}\beta_{\mc{T}}\right\|_2^2+ &c\sqrt{n-s}+c\sigma^{-1}\left\|\mathbf{P}_{\mc{R}}\mathbf{X}_{\mc{T}}\beta_{\mc{T}}\right\|_2,\\ 
S_2>t+ \sigma^{-2}&\left\|\mathbf{P}_{\mc{R}}\mathbf{X}_{\mc{T}}\beta_{\mc{T}}\right\|_2^2-c\sqrt{t}-c\sigma^{-1}\left\|\mathbf{P}_{\mc{R}}\mathbf{X}_{\mc{T}}\beta_{\mc{T}}\right\|_2\}.
\end{align*}

By Lemma \ref{chi square concentration}, there exists some constant $c_4>0$ such that when $n$ is large enough, 
$P(A(c_4))\geq {1+\beta}/{2}.$ When the event $A(c_4)$ holds, we have 
\begin{align*}
    T_{\rm dense}\geq &\frac{t+ \sigma^{-2}\left\|\mathbf{P}_{\mc{R}}\mathbf{X}_{\mc{T}}\beta_{\mc{T}}\right\|_2^2-c_4\sqrt{t}-c_4\sigma^{-1}\left\|\mathbf{P}_{\mc{R}}\mathbf{X}_{\mc{T}}\beta_{\mc{T}}\right\|_2}{n-s+\sigma^{-2}\left\|\mathbf{P}_{\mc{R}}\mathbf{X}_{\mc{T}}\beta_{\mc{T}}\right\|_2^2+ c_4\sqrt{n-s}+c_4\sigma^{-1}\left\|\mathbf{P}_{\mc{R}}\mathbf{X}_{\mc{T}}\beta_{\mc{T}}\right\|_2}.
\end{align*}
Define the function 
$$f_n(z)=\frac{t-c_4\sqrt{t} + z-c_4\sqrt{z}}{n-s+ c_4\sqrt{n-s}+z + c_4\sqrt{z}}=1-\frac{n-(s+t) + c_4\sqrt{n-s}+c_4\sqrt{t}+2c_4\sqrt{z}}{n-s+ c_4\sqrt{n-s}+z + c_4\sqrt{z}}.$$
Since \( c_4 > 0 \) and $n-(s+t)>0$, on \( [0, \infty) \), \( f_n(z) \) is either increasing or initially decreasing and then increasing.
Note that $$f_n(0)=\frac{t-c_4\sqrt{t}}{n-s+c_4\sqrt{n-s}}\leq \frac{t}{n-s}+u_n,$$
and
$$f_n(c\sqrt{t})= \frac{t+ (c-c_4)\sqrt{t}-c_4c^{1/2}t^{1/4}}{n-s+ c_4\sqrt{n-s}+c\sqrt{t} + c_4c^{1/2}t^{1/4}}\geq \frac{t}{n-s}+u_n$$
for sufficiently large $c$ and $n$, so for some positive constant $c_5>0$, we have $f_n(z)\geq t/(n-s)+u_n$ for all $z>c_5\sqrt{t}$.
Combining with \eqref{eq: strong signal}, we can choose a positive constant \( C_1 \) such that  
\[
P(T_{\rm dense} \geq \frac{t}{n-s} + u_n) \geq \beta
\]
for sufficiently large \( n \).

\end{proof}
\subsubsection{Power analysis of sparse alternatives}
\begin{proof}
Without loss of generality, we assume $\mc{T}=[t]$ and $\mc{S}=\{t+1, \ldots, t+s\}$. 
  From the construction of the copies, we have 
  $\mathbf{P}_{\mc{R}}\widetilde{\mathbf{X}}_{\mc{T}}=U\mathbf{Q},$ where $U\sim \mu(n-s,r)$ on $W(n-s,r)$.
  By Lemma \ref{first order}, we have for any $j\in [t]$, 
  $$P\left(\left|\frac{\mathbf{Y}^T\mathbf{P}_{\mc{R}}^T \mathbf{P}_{\mc{R}} \widetilde{\mathbf{X}}_j}{\left\|\mathbf{P}_{\mc{R}}\mathbf{Y}\right\|\|\mathbf{P}_{\mc{R}}\widetilde{\mathbf{X}}_j\|}\right|\geq u\right)=P\left(\left|\frac{\mathbf{Y}^T\mathbf{P}_{\mc{R}}^T \mathbf{P}_{\mc{R}} U \mathbf{Q}_j}{\left\|\mathbf{P}_{\mc{R}}\mathbf{Y}\right\|\|\mathbf{Q}_j\|}\right|\geq u\right)\leq 2\exp\left(-\frac{(n-s)u^2}{8}\right),$$
  where $\widetilde{\mathbf{X}}_j$ and $\mathbf{Q}_j$ denote the $j$-th column of $\widetilde{\mathbf{X}}_{\mc{T}}$ and $\mathbf{Q}$. Let $u = c\sqrt{\log (et)/(n-s)}$ for some constant $c>0$, by the union of events, we have 
  $$P(T_{\rm sparse}(\mathbf{Y},\widetilde{\mathbf{X}}_{\mc{T}},\mathbf{X}_{\mc{S}})\geq u)\leq 2t\exp\left(-\frac{c^2\log (et)}{8}\right).$$
  Therefore, there exists some constant $C_\alpha>0$, such that 
  $$\limsup_{n\to\infty}P(T_{\rm sparse}(\mathbf{Y},\widetilde{\mathbf{X}}_{\mc{T}},\mathbf{X}_{\mc{S}})\leq C_\alpha\sqrt{\frac{\log (et)}{n}})\geq 1-\al. $$
Suppose the $j_0$-th element of $\Sigma_{\mc{T}\mid \mc{S}}\beta_{\mc{T}}$, denoted as $\nu_{j_0}$, is larger than $c_2\sqrt{\log (et)/n}$ where $c_2>0$ is a constant whose value will be determined in the following derivation. 

Let $\sigma_{j_0}$ be the $j_0$-th diagonal entry of $\Sigma_{\mc{T}\mid \mc{S}}$, let $\bar{\Sigma}$ be the submatrix of $\Sigma_{\mc{T}\mid \mc{S}}$ that excludes the $j_0$-th rows and columns, 
let $\bar{\gamma}$ be the $j_0$ column of $\Sigma_{\mc{T}\mid \mc{S}}$ that excludes the $j_0$-th entry. 
Then we have
\begin{align*}
\mathbf{P}_{\mc{R}}\mathbf{X}_{j_0} & \sim N\left(0,\sigma_{j_0}\mathbf{I}_{n-s}\right),\\
\text{ and } \quad 
\mathbf{P}_{\mc{R}}\mathbf{X}_{\mc{T}\setminus\{j_0\}}\mid \mathbf{P}_{\mc{R}}\mathbf{X}_{j_0} & \sim N\left(\mathbf{P}_{\mc{R}}\mathbf{X}_{j_0}\frac{\bar{\gamma}^T }{\sigma_{j_0}^2},\mathbf{I}_{n-s} \otimes \left( \bar{\Sigma}- \bar{\gamma} \bar{\gamma}^T/ \sigma_{j_0} \right) \right), 
\end{align*}
Let $\bar{\beta}$ be the subvector of $\beta_{\mc{T}}$ that excludes the $j_0$-th entry. Using the relationship that $\nu_{j_0}=\bar{\gamma}^T \bar{\beta}+\sigma_{j_0}\beta_{j_0}$, we can show 
$$
\mathbf{P}_{\mc{R}}\mathbf{Y}\mid \mathbf{P}_{\mc{R}}\mathbf{X}_{j_0}  \sim N\left(\frac{\nu_{j_0}}{\sigma_{j_0}}\mathbf{P}_{\mc{R}}\mathbf{X}_{j_0}, 
 \bar{\sigma}^2 \mathbf{I}_{n-s}\right),$$
where $\bar{\sigma}^2=\left(\sigma^2+\beta^T\Sigma_{\mc{T}\mid \mc{S}}\beta-\nu_{j_0}^2\right)$. 
By assumption that $\left\|\beta\right\|_2^2$ is bounded and $\Sigma_{\mc{T}\mid \mc{S}}$ has bounded eigenvalues, $\bar{\sigma}^2$ is also bounded. 

Furthermore, since $\|\Sigma\|\leq M_0$, the marginal distribution of $\|\mathbf{P}_{\mc{R}}\mathbf{Y}\|$ is scaled $\chi^2$ with a bounded scale. 
Therefore, we can find some positive constant $c_3$ depending on $(M_0, M_1,M_2, \beta)$ such that the following holds for sufficiently large $n$: 
\begin{equation}\label{eq:sparse-norm-bound}
P\left(\left\|\mathbf{P}_{\mc{R}}\mathbf{X}_{j_0}\right\|_2^2\in \left(\frac{\sigma_{j_0}}{2}(n-s),2\sigma_{j_0}(n-s)\right),\left\|\mathbf{P}_{\mc{R}}\mathbf{Y}\right\|_2^2\leq c_3(n-s)\right)\geq \frac{1+\beta}{2}.
\end{equation}
Furthermore, $\mathbf{Y}^T\mathbf{P}_{\mc{R}}^T\mathbf{P}_{\mc{R}}\mathbf{X}_{j_0}\eqd \frac{\nu_{j_0}}{\sigma_{j_0}}\|\mathbf{P}_{\mc{R}}\mathbf{X}_{j_0}\|^2+\bar{\sigma}
\|\mathbf{P}_{\mc{R}}\mathbf{X}_{j_0}\| Z$, where $Z$ is an independent standard normal r.v.
We obtain
\begin{equation}\label{eq:sparse-innerproduct-bound}
\begin{aligned}
& \quad~ 
P\left(\mathbf{Y}^T\mathbf{P}_{\mc{R}}^T\mathbf{P}_{\mc{R}}\mathbf{X}_{j_0}\leq \frac{\nu_{j_0}}{2\sigma_{j_0}}\|\mathbf{P}_{\mc{R}}\mathbf{X}_{j_0}\|^2\right)\\
& = 
P\left(
\frac{\nu_{j_0}}{2\sigma_{j_0}}\|\mathbf{P}_{\mc{R}}\mathbf{X}_{j_0}\|+\bar{\sigma} Z < 0
\right)\\
& \leq \bbE \exp( -  \frac{\nu_{j_0}^2}{8\sigma_{j_0}^2 \bar{\sigma}^2}\|\mathbf{P}_{\mc{R}}\mathbf{X}_{j_0}\|^2 )\\
& =  \bbE \exp( -  \frac{\nu_{j_0}^2}{8 \bar{\sigma}^2} \xi_{n-s})\\
& = \left(1+\frac{\nu_{j_0}^2}{4 \bar{\sigma}^2}\right)^{-(n-s)/2}\\
& \leq \left(1+\frac{c_2^2\log(et)/n}{4 \bar{\sigma}^2}\right)^{-(1-c_1)n/2}\\
& \leq \left(1+\frac{c_2^2 (1-c_1)\log(et)}{8 \bar{\sigma}^2}\right)^{-1},
\end{aligned}
\end{equation}
where $\xi_{n-s}$ denotes a $\chi^2_{n-s}$ r.v. and the second last inequality is due to $s<c_1 n$ and $\nu_{j_0}\geq c_2\sqrt{\log(et)/n}$. 
Therefore, we can pick a constant $c_2$ depending on $(c_1,M_0,M_1,M_2, C_\alpha, c_3)$ so that $c_2\geq C_{\alpha} \sqrt{2\sigma_{j_0}c_3 }$ and the right hand side of \eqref{eq:sparse-innerproduct-bound} is smaller than $(1-\beta)/2$. We then have 
$$
P\left(\mathbf{Y}^T\mathbf{P}_{\mc{R}}^T\mathbf{P}_{\mc{R}}\mathbf{X}_{j_0}\geq \frac{\nu_{j_0}}{2\sigma_{j_0}}\|\mathbf{P}_{\mc{R}}\mathbf{X}_{j_0}\|^2\right)\geq \frac{1+\beta}{2}. 
$$
Combining the last inequality with \eqref{eq:sparse-norm-bound}, we have
$$P\left(\left|\frac{\mathbf{Y}^T\mathbf{P}_{\mc{R}}^T \mathbf{P}_{\mc{R}} {\mathbf{X}}_{j_0}}{\left\|\mathbf{P}_{\mc{R}}\mathbf{Y}\right\|\|\mathbf{P}_{\mc{R}}{\mathbf{X}}_{j_0}\|}\right|\geq C_{\alpha} \sqrt{\frac{\log (et)}{n}}\right)\geq \beta.
$$
\end{proof}

\subsection{Proof of Theorem \ref{lower bound}}\label{app: pf lower bound}
When \( t = 1 \), \cite[Theorem 3]{bradic2022testability} establishes that the parametric rate \( 1/\sqrt{n} \) is a fundamental boundary for inference. 
This implies that when \( t \) is bounded, Theorem~\ref{lower bound} holds. 
To complete the proof, we focus on the case where $t$ diverges along with $n$. 


Following a similar approach to that of \cite{10.1214/16-AOS1461,bradic2022testability}, the proof of Theorem \ref{lower bound} can be summarized as follows: Fix some interior point \( \theta_* = (\mathbf{0},\beta_{\mc{S}}^*,\mathbf{I}_d,\sigma_*) \) satisfying \( \sigma_* > \kappa \) and \( \left\|\beta_{\mc{S}}^*\right\|_2^2 \leq \xi M_2 \) for some constants \( \kappa \in (0,\sigma_*) \) and \( \xi \in (0,1) \).
We need to construct a large collection of points in the alternative space, each of which has a corresponding population that is close to the population of \( \theta_* \) in terms of some statistical distance for sufficiently large $t$.

\subsubsection{Auxiliary Results}
The auxiliary results presented here have mostly been established in \cite{10.1214/16-AOS1461, bradic2022testability}. However, for the sake of completeness, we provide detailed proofs for each of them.
\begin{lemma}\label{chi sqaure inequality}
  For any test $\psi_*$ and point $\theta_*,\theta_1,\cdots,\theta_N$, we have 
  $$\left|N^{-1}\sum_{j=1}^N \bbE_{\theta_j} \psi_*-\bbE_{\theta_*}\psi_*\right|\leq \left[\bbE_{\theta_*}\left(N^{-1}\sum_{j=1}^N\frac{\rmd \bbP_{\theta_j}}{\rmd \bbP_{\theta_*}}-1\right)^2\right]^{1/2}$$
\end{lemma}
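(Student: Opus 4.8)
The plan is to recognize the statement as the standard "single versus mixture" reduction from the Le Cam / Ingster toolkit: the left-hand side measures how well one test can separate the population at $\theta_*$ from the uniform mixture over $\{\theta_1,\dots,\theta_N\}$, while the right-hand side is the square root of the $\chi^2$-divergence between that mixture and $\bbP_{\theta_*}$. Accordingly, the proof should be three short moves: rewrite the averaged power as an expectation under $\bbP_{\theta_*}$ against a likelihood ratio, apply Cauchy--Schwarz, and use that a test is bounded by $1$.

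First I would introduce the mixture measure $\bar{\bbP}=N^{-1}\sum_{j=1}^{N}\bbP_{\theta_j}$ and write $L=N^{-1}\sum_{j=1}^{N}\frac{\rmd\bbP_{\theta_j}}{\rmd\bbP_{\theta_*}}$ for its likelihood ratio against $\bbP_{\theta_*}$, noting that in our construction each $\bbP_{\theta_j}$ is absolutely continuous with respect to $\bbP_{\theta_*}$ (all populations are nondegenerate $p$-variate Gaussians on the same space), so $L$ is well defined and $\bbE_{\theta_*}[L]=1$. By linearity of expectation, $N^{-1}\sum_{j=1}^{N}\bbE_{\theta_j}\psi_*=\bbE_{\bar{\bbP}}\psi_*=\bbE_{\theta_*}[\psi_* L]$, and therefore the left-hand side of the claim equals $\big|\bbE_{\theta_*}[\psi_*(L-1)]\big|$.

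Next I would apply the Cauchy--Schwarz inequality to the pairing $\bbE_{\theta_*}[\psi_*(L-1)]$, obtaining the bound $(\bbE_{\theta_*}\psi_*^2)^{1/2}\,(\bbE_{\theta_*}(L-1)^2)^{1/2}$. Since any test satisfies $0\le\psi_*\le1$, we have $\psi_*^2\le1$ and hence $\bbE_{\theta_*}\psi_*^2\le1$; substituting $L-1=N^{-1}\sum_{j=1}^{N}\frac{\rmd\bbP_{\theta_j}}{\rmd\bbP_{\theta_*}}-1$ yields precisely the stated inequality.

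There is no substantial obstacle in this argument; the only point demanding a little care is the absolute-continuity condition needed for the Radon--Nikodym derivatives to exist, which is automatic in the intended application, and the fact that boundedness of $\psi_*$ (rather than any variance assumption on the test) is what makes the $\bbE_{\theta_*}\psi_*^2\le1$ step go through. The remainder is just linearity of expectation and Cauchy--Schwarz.
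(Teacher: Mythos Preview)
Your proposal is correct and follows essentially the same approach as the paper: rewrite the difference as $\bbE_{\theta_*}[\psi_*(L-1)]$, then exploit $0\le\psi_*\le1$ together with Cauchy--Schwarz. The only cosmetic difference is that the paper first bounds $\psi_*$ by $1$ and then applies Lyapunov's inequality to pass from the $L^1$ to the $L^2$ norm of $L-1$, whereas you apply Cauchy--Schwarz directly and then bound $\bbE_{\theta_*}\psi_*^2\le1$; the two orderings are equivalent.
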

\begin{proof}
  
\begin{equation*}
  \begin{aligned}
    \left| N^{-1} \sum_{j=1}^N \mathbb{E}_{\theta_j} \psi_* - \mathbb{E}_{\theta_*} \psi_* \right|
    &= \left| N^{-1} \sum_{j=1}^N \left( \mathbb{E}_{\theta_*} \psi_* \frac{\rmd\mathbb{P}_{\theta_j}}{\rmd\mathbb{P}_{\theta_*}} - \mathbb{E}_{\theta_*} \psi_* \right) \right|
    = \mathbb{E}_{\theta_*} \psi_* \left| N^{-1} \sum_{j=1}^N \frac{\rmd\mathbb{P}_{\theta_j}}{\rmd\mathbb{P}_{\theta_*}} - 1 \right|\\
    &\stackrel{(i)}{\leq} \mathbb{E}_{\theta_*} \left| N^{-1} \sum_{j=1}^N \frac{\rmd\mathbb{P}_{\theta_j}}{\rmd\mathbb{P}_{\theta_*}} - 1 \right|
    \stackrel{(ii)}{\leq} \left[ \mathbb{E}_{\theta_*} \left( N^{-1} \sum_{j=1}^N \frac{\rmd\mathbb{P}_{\theta_j}}{\rmd \mathbb{P}_{\theta_*}} - 1 \right)^2 \right]^{1/2},
  \end{aligned}
\end{equation*}

where (i) holds by $|\psi_*|\leq 1$ almost surely and (ii) holds by Lyapunov's inequality.
\end{proof}

For any integer $k$, let $\mc{M}_{k,t}$ be the set of all $k$-sparse vectors in $\left\{0,1\right\}^{t}$, and express it as
\begin{equation}\label{eq:k-sparse indicators}
\mc{M}_{k,t}=\left\{\delta\in \left\{0,1\right\}^t:\left\|\delta\right\|_0=k\right\}= \left\{\delta^{(1)}_{k,t}, \ldots, \delta^{(N_{k,t})}_{k,t}\right\}, 
\end{equation}
where \(N_{k,t} = \binom{t}{k}\) is denote the cardinality of \(\mc{M}_{k,t}\). 

\begin{lemma}\label{calculation for alternatives}
  Consider $\theta_*=(\mathbf{0},\beta_{\mc{S}}^*,\mathbf{I}_d,\sigma_*)$ and a fixed sequence of $\left\{\delta_j\in \bbR^t\right\}_{i=1}^N$ satisfying that $\left\|\delta_j\right\|=r$ for some $r>0$. Define alternative points $\theta_j$ as 
  $$\theta_j=\left(\rho\delta_j,\beta_{\mc{S}}^*,\mathbf{I}_d,\sqrt{\sigma_*^2-\rho^2 r^2}\right)$$
  for some $\rho\in (0, \sigma_*/r)$. Then we have 
  $$\bbE_{\theta_*}\left(N^{-1}\sum_{j=1}^{N}\frac{\rmd \bbP_{\theta_j}}{\rmd \bbP_{\theta_*}}-1\right)^2=N^{-2}\sum_{j_1=1}^N\sum_{j_2=1}^N\left[1-\frac{\rho^2}{\sigma^2}\delta_{j_1}^T\delta_{j_2}\right]^{-n}-1.$$
  Specially, for some integer $1\leq k\leq t$ such that $0<k\rho^2/\sigma_*^2 \leq {1}/{2}$, if $N=N_{k,t}$ and $\delta_j=\delta_{k,t}^{(j)},j=1,\cdots,N$ 
 as defined in \eqref{eq:k-sparse indicators}, then we have 
  $$\bbE_{\theta_*}\left(N^{-1}\sum_{j=1}^{N}\frac{\rmd \bbP_{\theta_j}}{\rmd \bbP_{\theta_*}}-1\right)^2\leq e^{\frac{k^2}{t-k}}\left(1-\frac{k}{t}+\frac{k}{t}\exp\left(\frac{2n\rho^2}{\sigma_*^2}\right)\right)^k-1.$$
\end{lemma}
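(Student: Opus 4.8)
The plan is to compute the second moment $\bbE_{\theta_*}\bigl(N^{-1}\sum_{j}\frac{\rmd\bbP_{\theta_j}}{\rmd\bbP_{\theta_*}}-1\bigr)^2$ exactly by expanding the square, reducing everything to pairwise likelihood-ratio integrals, and then specializing to the $k$-sparse indicator family. Under $\theta_*=(\mathbf{0},\beta_{\mc{S}}^*,\mathbf{I}_d,\sigma_*)$, the marginal law of $\mathbf{X}$ is the same across all the $\theta_j$'s (they share $\Sigma=\mathbf{I}_d$), so the likelihood ratio $\frac{\rmd\bbP_{\theta_j}}{\rmd\bbP_{\theta_*}}$ depends only on the conditional law of $\mathbf{Y}\mid\mathbf{X}$, which under $\theta_j$ is $N(\mathbf{X}_{\mc{T}}(\rho\delta_j)+\mathbf{X}_{\mc{S}}\beta_{\mc{S}}^*,\,(\sigma_*^2-\rho^2 r^2)\ident_n)$ and under $\theta_*$ is $N(\mathbf{X}_{\mc{S}}\beta_{\mc{S}}^*,\sigma_*^2\ident_n)$. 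First I would write out $\bbE_{\theta_*}\bigl(N^{-1}\sum_j \frac{\rmd\bbP_{\theta_j}}{\rmd\bbP_{\theta_*}}-1\bigr)^2 = N^{-2}\sum_{j_1,j_2}\bbE_{\theta_*}\bigl[\frac{\rmd\bbP_{\theta_{j_1}}}{\rmd\bbP_{\theta_*}}\frac{\rmd\bbP_{\theta_{j_2}}}{\rmd\bbP_{\theta_*}}\bigr]-1$, using that $\bbE_{\theta_*}\frac{\rmd\bbP_{\theta_j}}{\rmd\bbP_{\theta_*}}=1$.

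The core computation is the cross term $\bbE_{\theta_*}\bigl[\frac{\rmd\bbP_{\theta_{j_1}}}{\rmd\bbP_{\theta_*}}\frac{\rmd\bbP_{\theta_{j_2}}}{\rmd\bbP_{\theta_*}}\bigr]$. Conditioning on $\mathbf{X}$, each row $Y_i\mid X_i$ is independent, so this factors over the $n$ rows, and the per-row integral is a Gaussian moment-generating-function computation of the form $\int \frac{g_{\theta_{j_1}}(y)g_{\theta_{j_2}}(y)}{g_{\theta_*}(y)}\,dy$ where $g$'s are the conditional densities. A standard completion-of-squares (or the known formula $\int p_1 p_2/p_0$ for Gaussians with common variance in the relevant regime) gives a factor depending on $\delta_{j_1}^\top\Sigma_X\delta_{j_2}$; taking expectation over $X_{\mc{T}}$ (whose rows are $N(\mathbf{0},\ident_t)$ given $X_{\mc{S}}$, after adjusting for the regression on $X_{\mc{S}}$) produces, for each row, the factor $[1-\tfrac{\rho^2}{\sigma^2}\delta_{j_1}^\top\delta_{j_2}]^{-1}$, and the $n$ i.i.d. rows raise this to the $n$-th power. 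This yields the first displayed identity $N^{-2}\sum_{j_1,j_2}[1-\tfrac{\rho^2}{\sigma^2}\delta_{j_1}^\top\delta_{j_2}]^{-n}-1$. I expect this Gaussian integral, and in particular getting the variance bookkeeping right (which $\sigma^2$ appears — note $\sigma_*^2-\rho^2 r^2$ in the alternative cancels appropriately), to be the main obstacle; it is exactly the kind of computation that is easy to botch by a factor.

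For the specialization, take $\delta_j=\delta_{k,t}^{(j)}$, the $k$-sparse $0$–$1$ indicators, so $\|\delta_j\|_2^2=k=r^2$ and $\delta_{j_1}^\top\delta_{j_2}=|{\rm supp}(\delta_{j_1})\cap{\rm supp}(\delta_{j_2})|=:\ell$, an integer in $\{0,1,\dots,k\}$. If $\mathbf{m}$ is chosen uniformly at random from $\mc{M}_{k,t}$ and $\mathbf{m}'$ is an independent uniform copy, then $N^{-2}\sum_{j_1,j_2}[1-\tfrac{\rho^2}{\sigma^2}\ell]^{-n} = \bbE\bigl[(1-\tfrac{\rho^2}{\sigma_*^2}L)^{-n}\bigr]$ where $L=|{\rm supp}(\mathbf{m})\cap{\rm supp}(\mathbf{m}')|$ is hypergeometric. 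Using $(1-x)^{-n}\le e^{2nx}$ valid for $0\le x\le 1/2$ (which holds since $\ell\le k$ and $k\rho^2/\sigma_*^2\le 1/2$), this is at most $\bbE[\exp(\tfrac{2n\rho^2}{\sigma_*^2}L)]$. Finally I would bound the hypergeometric MGF: condition on $\mathbf{m}$ and write $L=\sum_{a\in{\rm supp}(\mathbf{m})}\mathbbm{1}\{a\in{\rm supp}(\mathbf{m}')\}$; the indicators are negatively associated, and each has mean $k/t$, so $\bbE[\exp(\tfrac{2n\rho^2}{\sigma_*^2}L)]\le \prod_{a}\bbE[\exp(\tfrac{2n\rho^2}{\sigma_*^2}\mathbbm{1}\{\cdot\})] = \bigl(1-\tfrac kt+\tfrac kt e^{2n\rho^2/\sigma_*^2}\bigr)^k$ up to a correction, and the factor $e^{k^2/(t-k)}$ accounts for the slight dependence (it bounds the ratio between sampling with and without replacement, i.e., it absorbs the negative-association correction cleanly). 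Subtracting the $1$ gives the stated bound $e^{k^2/(t-k)}\bigl(1-\tfrac kt+\tfrac kt\exp(\tfrac{2n\rho^2}{\sigma_*^2})\bigr)^k-1$.
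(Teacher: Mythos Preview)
Your proposal is correct and would go through, but it takes a different route from the paper in both parts.

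For the first identity, the paper does not condition on $\mathbf{X}$ and integrate out $\mathbf{Y}$ row by row. Instead it works directly with the joint Gaussian law of $(Y,X_{\mc{T}},X_{\mc{S}})$: under $\theta_*$ and $\theta_j$ these are centered normals with covariance matrices $\widetilde{\Sigma}$ and $\widetilde{\Sigma}_j$ that differ only in the $Y\text{--}X_{\mc{T}}$ block (the choice $\sigma_j^2=\sigma_*^2-\rho^2 r^2$ is exactly what keeps $\mathrm{Var}(Y)$ unchanged). Then the paper invokes Lemma~11 of Cai--Guo (2017), which gives
\[
\bbE_{\theta_*}\Bigl[\tfrac{\rmd\bbP_{\theta_{j_1}}}{\rmd\bbP_{\theta_*}}\tfrac{\rmd\bbP_{\theta_{j_2}}}{\rmd\bbP_{\theta_*}}\Bigr]
=\bigl(\det(\ident-\widetilde{\Sigma}^{-1}(\widetilde{\Sigma}_{j_1}-\widetilde{\Sigma})\widetilde{\Sigma}^{-1}(\widetilde{\Sigma}_{j_2}-\widetilde{\Sigma}))\bigr)^{-n/2},
\]
and a short matrix computation shows the matrix inside has exactly two nonzero eigenvalues, both equal to $\rho^2\delta_{j_1}^{\top}\delta_{j_2}/\sigma_*^2$, yielding $[1-\rho^2\delta_{j_1}^{\top}\delta_{j_2}/\sigma_*^2]^{-n}$ in one line. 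Your conditioning approach arrives at the same answer but requires a two-stage Gaussian integral (first over $y$, then over $x_{\mc{T}}$) with the variance bookkeeping you flagged; it is more elementary but considerably longer.

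For the specialization, both routes use $(1-x)^{-n}\le e^{2nx}$ on $[0,1/2]$ and reduce to the MGF of the hypergeometric overlap $L$. The paper simply cites Lemma~3 of Cai--Guo (2017) for the bound $\bbE e^{\lambda L}\le e^{k^2/(t-k)}(1-k/t+(k/t)e^{\lambda})^k$. Your negative-association idea is fine, but your description of the $e^{k^2/(t-k)}$ factor is backwards: since the indicators $\{\mathbbm{1}\{a\in\mathrm{supp}(\mathbf{m}')\}\}_{a\in\mathrm{supp}(\mathbf{m})}$ are negatively associated, one already has $\bbE e^{\lambda L}\le (1-k/t+(k/t)e^{\lambda})^k$ \emph{without} any correction; the factor $e^{k^2/(t-k)}$ only loosens the bound and is an artifact of the Cai--Guo argument, not something needed to ``absorb dependence''. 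Either way the stated inequality follows.
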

\begin{proof}
  Let \(\widetilde{\Sigma}\) and \(\widetilde{\Sigma}_j, \, j = 1, \dots, N\), denote the covariance matrices of \((Y, X_{\mc{T}}, Z_{\mc{S}})\) corresponding to the parameters \(\theta_*\) and \(\theta_j\), respectively. These matrices are computed as follows:
  \begin{equation*}
      \widetilde{\Sigma}=\left(
        \begin{array}{ccc}
          \|\beta_{\mc{S}}^*\|^2+\sigma_*^2&\mathbf{0}_{1\times t}&\left(\beta_{\mc{S}}^*\right)^T\\
          \mathbf{0}_{t\times 1}&\mathbf{I}_{t\times t}&\mathbf{0}_{t\times s}\\
          \beta_{\mc{S}}^*&\mathbf{0}_{s\times t}&\mathbf{I}_{s\times s}
        \end{array}
      \right),\quad 
      \widetilde{\Sigma}_j=\left(
        \begin{array}{ccc}
          \|\beta_{\mc{S}}^*\|^2+\sigma_*^2&\rho \delta_j^T&\left(\beta_{\mc{S}}^*\right)^T\\
          \rho \delta_j&\mathbf{I}_{t\times t}&\mathbf{0}_{t\times s}\\
          \beta_{\mc{S}}^*&\mathbf{0}_{s\times t}&\mathbf{I}_{s\times s}
        \end{array}
      \right).
  \end{equation*}
  By Lemma 11 in \cite{10.1214/16-AOS1461}, we have 
\begin{equation}\label{eq:lem11CaiGuo2017}
\bbE_{\theta_*}\left(\frac{\rmd \bbP_{\theta_{j_1}}}{\rmd \bbP_{\theta_*}}\times \frac{\rmd \bbP_{\theta_{j_2}}}{\rmd \bbP_{\theta_*}}\right)=\left(\det\left(\mathbf{I}-\widetilde{\Sigma}^{-1}\left(\widetilde{\Sigma}_{j_1}-\widetilde{\Sigma}\right)\widetilde{\Sigma}^{-1}\left(\widetilde{\Sigma}_{j_2 }-\widetilde{\Sigma}\right)\right)\right)^{-n/2}.
\end{equation}
  Note that
  $$\widetilde{\Sigma}^{-1}=\left(\begin{array}{ccc}
    \ds \frac{1}{\sigma_*^2}&\mathbf{0}_{1\times t}&\ds -\frac{1}{\sigma_*^2}\left(\beta_{\mc{S}}^*\right)^T\\
    \mathbf{0}_{t\times 1}&\mathbf{I}_{t\times t}&\mathbf{0}_{t\times s}\\
    \ds -\frac{1}{\sigma_*^2}\beta_{\mc{S}}^*&\mathbf{0}_{s\times t}&\ds \mathbf{I}_{s\times s}+\frac{1}{\sigma_*^2}\beta_{\mc{S}}^*\left(\beta_{\mc{S}}^*\right)^T
  \end{array}\right)$$
  and 
  $$\widetilde{\Sigma}^{-1}\left(\widetilde{\Sigma}_{j}-\widetilde{\Sigma}\right)=\left(\begin{array}{ccc}
    0&\ds \frac{\rho}{\sigma_*^2}\delta_j^T&\mathbf{0}_{1\times s}\\
    \ds \rho\delta_j&\mathbf{0}_{t\times t}&\mathbf{0}_{t\times s}\\
    \mathbf{0}_{s\times 1}&\ds -\frac{\rho}{\sigma_*^2}\beta_{\mc{S}}^*\delta_j^T&\mathbf{0}_{s\times s}
  \end{array}\right).$$
  Hence, we have
  $$\widetilde{\Sigma}^{-1}\left(\widetilde{\Sigma}_{j_1}-\widetilde{\Sigma}\right)\widetilde{\Sigma}^{-1}\left(\widetilde{\Sigma}_{j_2 }-\widetilde{\Sigma}\right)=\left(\begin{array}{ccc}
    \ds \frac{\rho^2}{\sigma_*^2}\delta_{j_1}^T\delta_{j_2}&\mathbf{0}_{1\times t}&\mathbf{0}_{1\times s}\\
    \mathbf{0}_{t\times 1}&\ds \frac{\rho^2}{\sigma_*^2}\delta_{j_1}\delta_{j_2}^T&\mathbf{0}_{t\times s}\\
    \ds -\frac{\rho^2}{\sigma_*^2}\beta_{\mc{S}}^*\delta_{j_1}^T\delta_{j_2}&\mathbf{0}_{s\times t}&\mathbf{0}_{s\times s}
  \end{array}\right).$$
  Since the matrix $\widetilde{\Sigma}^{-1}\left(\widetilde{\Sigma}_{j_1}-\widetilde{\Sigma}\right)\widetilde{\Sigma}^{-1}\left(\widetilde{\Sigma}_{j_2 }-\widetilde{\Sigma}\right)$ is of two equal eigenvalues $\ds \frac{\rho^2}{\sigma_*^2}\delta_{j_1}^T\delta_{j_2}$, it follows from \eqref{eq:lem11CaiGuo2017} that
  $$\bbE_{\theta_*}\left(\frac{\rmd \bbP_{\theta_{j_1}}}{\rmd \bbP_{\theta_*}}\times \frac{\rmd \bbP_{\theta_{j_2}}}{\rmd \bbP_{\theta_*}}\right)=\left(1-\frac{\rho^2}{\sigma_*^2}\delta_{j_1}^T\delta_{j_2}\right)^{-n}.$$
  Therefore, 
  $$\begin{aligned}
    \bbE_{\theta_*}\left(N^{-1}\sum_{j=1}^{N}\frac{\rmd \bbP_{\theta_j}}{\rmd \bbP_{\theta_*}}-1\right)^2
    &=N^{-2}\sum_{j_1=1}^N\sum_{j_2=1}^N\bbE_{\theta_*}\left(\frac{\rmd \bbP_{\theta_{j_1}}}{\rmd \bbP_{\theta_*}}\times \frac{\rmd \bbP_{\theta_{j_2}}}{\rmd \bbP_{\theta_*}}\right)-1\\
    &=N^{-2}\sum_{j_1=1}^N\sum_{j_2=1}^N\left[1-\frac{\rho^2}{\sigma_*^2}\delta_{j_1}^T\delta_{j_2}\right]^{-n}-1.
  \end{aligned}$$
  This proved the first part of the lemma.

  Now consider the special case where $N=N_{k,t}$  and $\delta_j=\delta_{k,t}^{(j)}$ for $j=1,\cdots, N_{k,t}$ as defined in \eqref{eq:k-sparse indicators}. 
  For any fixed $j_1$, if $j_2$ is uniformly sampled from $\{1, 2, \ldots, N_{k,t}\}$, then  $\delta_{j_1}^T\delta_{j_2}\stackrel{d}{=}J$, where $J$ follows a Hypergeometric $(t,k,k)$ distribution. Note that $1/(1-x)\leq \exp(2x)$ for $x\in [0,1/2]$ and $\ds \frac{\rho^2}{\sigma_*^2}\delta_{j_1}^T\delta_{j_2}\leq k\rho^2/\sigma_*^2\leq 1/2$, then 
  $$\begin{aligned}
    \bbE_{\theta_*}\left(N^{-1}\sum_{j=1}^{N}\frac{\rmd \bbP_{\theta_j}}{\rmd \bbP_{\theta_*}}-1\right)^2
    &=N^{-2}\sum_{j_1=1}^N\sum_{j_2=1}^N\left[1-\frac{\rho^2}{\sigma^2}\delta_{(j_1)}^T\delta_{(j_2)}\right]^{-n}-1\\
    &\leq\bbE \exp\left(\frac{2n\rho^2}{\sigma_*^2}J\right)-1\\
    &\stackrel{(*)}{\leq} e^{\frac{k^2}{t-k}}\left(1-\frac{k}{t}+\frac{k}{t}\exp\left(\frac{2n\rho^2}{\sigma_*^2}\right)\right)^k-1,
  \end{aligned}$$
  where inequality (*) holds following Lemma 3 in \cite{10.1214/16-AOS1461}.
\end{proof}

\subsubsection{Proof of Theorem \ref{lower bound}}
\begin{proof}
The proof can be divided into two parts:
\paragraph{$\ell_2$ norm with dense alternatives:}
Let constant $ r\in (0,1)$ with ${r}/{(1-\sqrt{r})}\leq \ln (1+(\beta-\al)^2)/2$ and $k = [\sqrt{rt}]$. For $\theta_*=(\mathbf{0},\beta_{\mc{S}}^*,\mathbf{I}_d,\sigma_*)\in {\Theta}_0$, satisfying $\sigma_*>\kappa, \left\|\beta_{\mc{S}}^*\right\|_2^2\leq \xi M_2$ for some constants $\kappa\in (0,\sigma_*)$ and $\xi \in (0,1)$.
Recall the definition of $\mc{M}_{k,t}$ in \eqref{eq:k-sparse indicators} and 
define alternative points $\theta_j=(\beta_{\mc{T}}^{(j)},\beta_{\mc{S}}^*,\mathbf{I}_d,\sigma_1), j\in\left\{1,\cdots,{N}_{k,t}\right\}$ with 
    \begin{equation*}
      \begin{aligned}
    \beta_{\mc{T}}^{(j)}&=\rho_1\delta^{(j)}_{k,t}/\sqrt{n}, \quad  \text{ and }\quad 
        \sigma_1=\sqrt{\sigma_*^2-k\rho_1^2/n},   
      \end{aligned}
    \end{equation*}
    \\
where $$\rho_1=\frac{1}{C_1^{1/4}}\min\left\{\kappa,\sqrt{(1-\xi)M_2},\min\left\{1,C_1^{1/4}\right\}\sigma_*\sqrt{\frac{\log 2}{2}}\right\}.$$

Recall that $t\leq C_1 n^2$. We have
$$ \left\|\beta_{\mc{S}}^*\right\|^2+\left\|\beta_{\mc{T}}^{(j)}\right\|^2\leq \xi M_2 + \frac{(1-\xi) M_2}{\sqrt{C_1}}\frac{\sqrt{t}}{n}\leq M_2,
\quad\sigma_*^2\geq \kappa^2\geq k\rho_1^2/n,
\quad \left\|\beta_{\mc{T}}^{(j)}\right\|_2 = \rho_1 \frac{\sqrt{k}}{\sqrt{n}}\geq c\frac{t^{1/4}}{\sqrt{n}}$$
for some constant $c>0$. Therefore, we have verified that $\theta_j\in \Theta_{\rm dense}(c\frac{t^{1/4}}{\sqrt{n}})$.

  By Lemma \ref{chi sqaure inequality}, we only need to show that: 
    \begin{equation}\label{bound dense}
      \limsup_{n\to\infty}\mathbb{E}_{\theta_*} \left( N^{-1} \sum_{j=1}^N \frac{\rmd\mathbb{P}_{\theta_j}}{\rmd \mathbb{P}_{\theta_*}} - 1 \right)^2\leq (\beta-\alpha)^2
    \end{equation}

    Note that $\ds \frac{k\rho_1^2}{n\sigma_*^2}\leq \frac{\sqrt{t}\rho_1^2}{n\sigma_*^2}\leq\frac{1}{2}$, following Lemma \ref{calculation for alternatives}, we have 
    $$\begin{aligned}
      \mathbb{E}_{\theta_*} \left( N^{-1} \sum_{j=1}^N \frac{\rmd\mathbb{P}_{\theta_j}}{\rmd \mathbb{P}_{\theta_*}} - 1 \right)^2&\leq e^{\frac{k^2}{t-k}}\left(1-\frac{k}{t}+\frac{k}{t}\exp\left(\frac{2\rho_1^2}{\sigma_*^2}\right)\right)^k-1
    \end{aligned},$$
    where $$e^{\frac{k^2}{t-k}}\leq e^{\frac{rt}{t-[\sqrt{rt}]}}\leq e^{r/(1-\sqrt{r})}\leq \sqrt{1+(\beta-\al)^2}$$
    and $$\begin{aligned}
      \left(1-\frac{k}{t}+\frac{k}{t}\exp\left(\frac{2n\rho^2}{\sigma_*^2}\right)\right)^k&\stackrel{(i)}{\leq }\exp\left\{\frac{k^2}{t}\left[\exp\left(\frac{2\rho_1^2}{\sigma_*^2}\right)-1\right]\right\}\\
      &\leq e^r\leq \sqrt{1+(\beta-\al)^2}
    \end{aligned},$$
    where (i) holds since $(1+a)^{1/a}\leq e,\forall a>0$ and the second inequality is due to the definition of $\rho_1$.
    Therefore, we have proved \eqref{bound dense}.
    \paragraph{$\ell_\infty$ norm with sparse alternatives:}
    
    For $\theta_*=(\mathbf{0},\beta_{\mc{S}}^*,\mathbf{I}_d,\sigma_*)\in {\Theta}_0$, satisfying $\sigma_*>\kappa, \left\|\beta_{\mc{S}}^*\right\|_2^2\leq \xi M_2$ for some constants $\kappa\in (0,\sigma_*)$ and $\xi \in (0,1)$.
    Recall the definition of $\mc{M}_{1,t}$ in \eqref{eq:k-sparse indicators} and define alternative points 
    $\theta_{j}=(\beta_{\mc{T}}^{(j)},\beta_{\mc{S}}^*,\mathbf{I}_d,\sigma_1), j=1,\cdots,N_{1,t}$
    with 
    \begin{equation*}
      \begin{aligned}
\beta_{\mc{T}}^{(j)}&=\delta^{(j)}_{1,t} \rho_2 \sqrt{\log t/n},\quad 
        \sigma_1=\sqrt{\sigma_*^2-\rho_2^2\log t/n}, \\
\text{ where }  \rho_2&=\frac{1}{\sqrt{C_2}}\min\left\{\kappa,\sqrt{(1-\xi)M_2},  \sigma_*\sqrt{\frac{\log 2}{2}},\sqrt{\frac{C_2}{2}}\sigma_*\ln(1+(\beta-\al)^2)\right\}. 
      \end{aligned}
    \end{equation*}
    Recall that $\log t \leq C_2 n$. By definition of $\rho_2$, we have $$ \left\|\beta_{\mc{S}}^*\right\|^2+\left\|\beta_{\mc{T}}^{(j)}\right\|^2\leq \xi M_2 + \frac{(1-\xi) M_2}{C_2}\frac{\log t}{n}\leq M_2,\,\sigma_*^2\geq \kappa^2\geq \rho_2^2\frac{\log t}{n},\, \left\|\beta_{\mc{T}}^{(j)}\right\|_\infty = \rho_2 \sqrt{\frac{\log t}{n}}.$$
    we have $\theta_j\in \Theta_{\rm sparse}(c\frac{1}{\sqrt{n}})$ for some constant $c>0$.
    
    By Lemma \ref{chi sqaure inequality}, we only need to show that: 
    $$\limsup_{n\to\infty}\mathbb{E}_{\theta_*} \left( N^{-1} \sum_{j=1}^N \frac{\rmd\mathbb{P}_{\theta_j}}{\rmd \mathbb{P}_{\theta_*}} - 1 \right)^2\leq (\beta-\alpha)^2.$$

    Note that $\ds \frac{\rho_2^2\log t}{\sigma_*^2}\leq\frac{\log 2}{2}$, following Lemma \ref{calculation for alternatives} with $k=1$, we have 
    $$\begin{aligned}
      \mathbb{E}_{\theta_*} \left( N^{-1} \sum_{j=1}^N \frac{\rmd\mathbb{P}_{\theta_j}}{\rmd \mathbb{P}_{\theta_*}} - 1 \right)^2&\leq e^{\frac{1}{t-1}}\left(1-\frac{1}{t}+\frac{1}{t}\exp\left(\frac{2\rho_2^2\log t}{\sigma_*^2}\right)\right)-1\\
      &= e^{\frac{1}{t-1}}\left(1+\exp\left[\left(\frac{2\rho_2^2}{\sigma_*^2}-1\right)\log t\right]-\frac{1}{t}\right)-1. 
    \end{aligned}$$
Therefore, we have 
    $$\mathbb{E}_{\theta_*} \left( N^{-1} \sum_{j=1}^N \frac{\rmd\mathbb{P}_{\theta_j}}{\rmd \mathbb{P}_{\theta_*}} - 1 \right)^2\leq e^{\frac{1}{t-1}}\left(1-\frac{1}{t}\right)-1\leq (\beta-\al)^2$$
    for sufficiently large $t$.
\end{proof}


\section{Details of Simulation Studies}
\label{sec: Appendix Simulation Details}
In this section, we provide details for the simulation studies in Section~\ref{sec: Simulation} in the main text. 
For completeness and clarity, figures are reproduced to allow detailed discussion without cross-referencing.
Throughout this section, we fix the procedure parameters $M=100$ and $L=3$ for conducting Algorithm~\ref{alg: exchangeable}.

We evaluate the power of the $G$-CRT for GGMs with the statistic functions discussed in Section~\ref{sec: CRT statistic} for testing multivariate conditional independence. 
We consider a range of regression models for $\mc{L}(Y\mid X)$, including linear regression, nonlinear regression, logistic regression, and nonlinear binary regression. 
For each regression model, we consider a low-dimension setting ($p=20$ and $n=50$) and a high-dimension setting ($p=120$ and $n=80$), and we compare our tests with several existing testing methods.

In each experiment, we randomly permute the nodes of a band graph (along with the coordinates of its precision matrix) with precision matrix $ \boldsymbol{\Omega} $ satisfies that for $i,j \in \left\{ 1, \cdots, p \right\}$:
$$ \omega_{i, j}=\left\{
\begin{array}{rcl}
1       &      & \text{if ~~} i=j, \\
s     &      & \text{if ~~} 1\leq |i-j|\leq K,\\
0       &      & \text{if ~~} |i-j|> K,
\end{array} \right. $$
where $K=6$ is the bandwidth and $s=0.2$ is the signal magnitude.

Without loss of generality,  we assume  $\bs{\Omega}$ has been standardized so that the diagonal elements of $\bs{\Omega}^{-1}$ all equal to 1. 
The covariate vectors are i.i.d. samples drawn from $\mathbf{N}_p( \bs{0}, \boldsymbol{\Omega}^{-1})$. 
The response variable $Y$, depending on the simulation setting, will be generated under linear or non-linear models, where a constant factor $\theta$ is used to control the magnitude of the dependence of $Y$ on $X$. 

We consider testing the null hypothesis $H_0: Y\indp X_{\mc{T}}\mid  X_{-\mc{T}}$ for $\mc{T}=\{1,2,\ldots, 8\}$. 
Note that choosing $\mc{T}$ to consist of the first 8 indices is not particular to the band graph structure because the nodes of $G$ have been permuted. 
Our $G$-CRT procedure will make use of the graph $G$ but will know neither $\bs{\Omega}$ nor the mean parameters.
For any testing procedure that computes a p-value for individual hypothesis $H_{0,i}:Y\indp X_{i}\mid X_{-i}$, we adapt the Bonferroni correction to obtain a valid p-value. 
We repeat each experiment at least 400 times and estimate the power of a method by calculating the fraction of times that it rejects the null hypothesis at the significance level $\alpha=0.05$. In each figure of this subsection, each error bar corresponds to one standard error. 

\subsection{Gaussian linear regression}\label{sec: simulation CRT linear regression}
In the first experiment, the response variable $Y$ given the covariates $X\in \mathbb{R}^{p}$ is generated under a Gaussian linear model such that $Y=X\tp \beta + \varepsilon$ with some independent noise $\varepsilon\sim \mathbf{N}(0,1)$. 
The regression coefficient $\beta$ is determined randomly at the beginning of each replication, whose first 20 coefficients are independently drawn from a uniform distribution on the interval $(1/\sqrt{20},2/\sqrt{20})$ and all remaining coefficients are set to zero. 
The first 8 coordinates of $\beta$ corresponding to $X_{\mc{T}}$ (recall that $\mc{T}=\{1,2,\ldots, 8\}$) are then scaled by a factor $\theta$, which controls the signal strength for the conditional independence testing problem. 

For a Gaussian linear model, the testing problem is equivalent to testing
\begin{equation}\label{eq: linear model hypothesis}
H_0: \beta_{\mc{T}}=0 \quad \text{versus} \quad H_1: \beta_{\mc{T}} \neq 0. 
\end{equation}
It is known that if $p<n$ and $|\mc{T}|>1$, a uniformly most powerful unbiased test does not exist yet the F-test is a uniformly most power invariant (UMPI) test under a certain group of transformation; see \citet[Chapter 6.3.2]{shaoMathematicalStatistics2003}. 
Therefore, we will use F-test as a baseline for comparison in our low-dimensional experiments.

In a high-dimensional setting, \citet{verzelen_goodness--fit_2010}  consider the same hypothesis as in \eqref{eq: linear model hypothesis} with Gaussian random designs, but their procedure is feasible only if the size of $\mc{T}^{c}$ (i.e., $p-|\mc{T}|$) is smaller than $n$, which is not satisfied in our high-dimensional simulations. 
To choose a baseline method for comparison in high-dimensional settings, we consider the debiasing Lasso or de-sparsifying Lasso technique, which is developed for constructing confidence intervals for low-dimensional projections of $\beta$ \citep{zhang2014confidence,van_de_geer_asymptotically_2014,javanmard2014confidence}.  
Although this technique can be used to construct hypothesis tests \citep{buhlmann2013}, the asymptotic validity requires a sparsity condition that the number of the true nonzero coefficients is of order $o(\sqrt{n})$, which is not met in our simulation settings. 
Therefore, in our high dimensional experiments, only methods based on the CRT are theoretically guaranteed to control the Type I error.

Figure~\ref{fig:llw} presents the power curves for several testing methods in the low-dimensional experiment with $p=20$ and $n=50$, where power is plotted against varying values of $\theta$. 
We examine our $G$-CRT procedure with the statistics LM-SST and LM-SSR by comparison to the classical F-test for testing \eqref{eq: linear model hypothesis} and the Bonferroni-adjusted dCRT using Gaussian Lasso models. 
The dCRT is implemented using the first author's source code with a properly chosen model and other default arguments. 
We note that while our $G$-CRTs and F-test control the Type I error at the nominal level, the dCRT does not, as its estimated power significantly exceeds the nominal level $\alpha=0.05$. 
This exceedance is likely due to the error of estimating $\mc{L}(X_{i}\mid X_{-i})$ by the default procedure of dCRT. 
The dCRT also suffers from a loss of power relative to the other methods. 
In addition, the $G$-CRTs with LM-SSR and LM-SST perform comparably and achieve high power on par with the F-test. 
As $\theta$ increases, the power of both $G$-CRTs converges to 1, demonstrating superior performance.

Figure~\ref{fig:hlw} presents the power curves in the high-dimensional experiment with $p=120$ and $n=80$. We consider the test statistics L1-R-SST and L1-R-SSR for the $G$-CRT. In addition to the dCRT, we also consider the Bonferroni-adjusted de-sparsified Lasso \citep{van_de_geer_asymptotically_2014}, labelled as De-Lasso, which can be implemented using the function \textbf{lasso.proj()} in the \texttt{R} package \textbf{hdi}. 
Figure~\ref{fig:hlw} shows that all methods control the Type I error at the nominal level, and their power increases as $\theta$ grows. 
Our $G$-CRT with both statistic functions can achieve nearly power 1 when $\theta=1.25$, but L1-R-SSR seems to gain a slight edge. 
The superior power of our approach over the other two Bonferroni-adjusted tests demonstrates the effectiveness of $G$-CRT in capturing the joint signal for testing conditional independence in high dimensions.

 \begin{figure}[t]
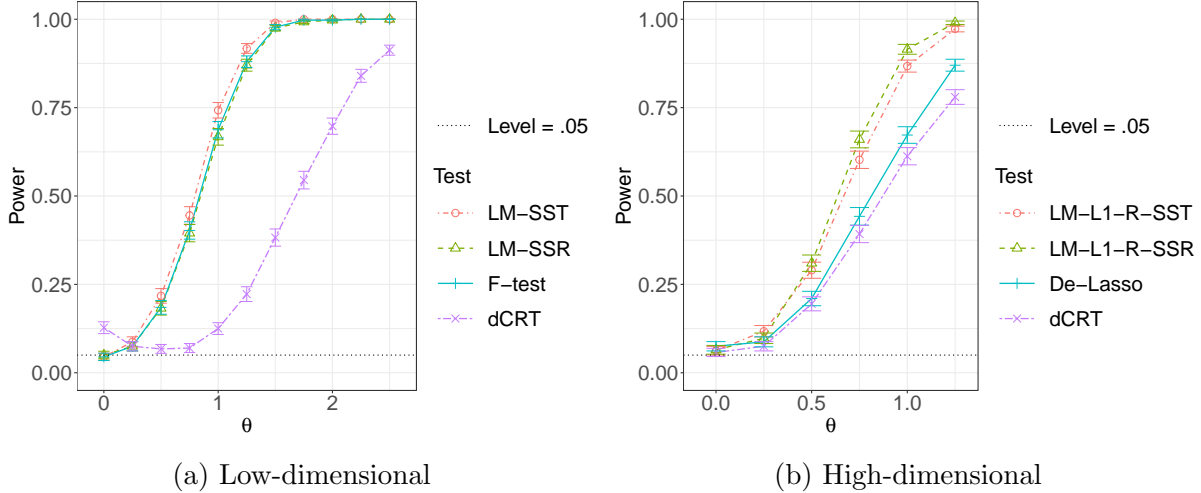

    \centering
    \begin{subfigure}[b]{0.48\textwidth}
\includegraphics[width=1\textwidth]{Simulation/CRT/CRT-l_l_w.pdf}
\caption{Low-dimensional}
\label{fig:llw}
\end{subfigure}
    \begin{subfigure}[b]{0.48\textwidth}
\includegraphics[width=1\textwidth]{Simulation/CRT/CRT-h_l_w.pdf}
\caption{High-dimensional}
\label{fig:hlw}
    \end{subfigure}
    \caption{Power comparison for testing conditional independence in linear regression.}\label{fig: linear regression}
\end{figure}

\subsection{Nonlinear regression}\label{sec: simulation CRT nonlinear regression}

In this subsection, we consider a non-linear dependence between the response and the covariates. Specifically, we perform a basis expansion of $X$ incorporating non-linear transformations (such as squares, reciprocals, trigonometric functions, and interactions) and $Y$ is generated as a linear combination of these basis functions with added standard Gaussian noise. Specifically, the expanded basis $\mathbf{B}(\mathbf{x})$ for a given $\mathbf{x}=(x_1, \ldots, x_p)$ is defined as 
\begin{equation}\label{eq: simulation nonlinear basis}
\mathbf{B}(\mathbf{x}) = \left[
\underbrace{\frac{x_{1}^2}{2}, \frac{1}{1+x_{2}^2}, \cos(\pi x_{3}), \sin(\pi x_{4}), x_{5}x_{6}, \sin(x_{7}x_{8}), \frac{\sin(\pi x_{9})}{4+x_{10}^2}}_{\text{I: 7 terms of nonlinear functions}}, \underbrace{ \frac{x_{11}^2}{2}, \ldots, \frac{\sin(\pi x_{19})}{4+x_{20}^2} }_{\text{II: 7 terms in the same form as I}}
\right]. 
\end{equation}
Let $\beta$ be a coefficient vector of dimension 20 so that its first 7 entries (corresponding to the functions involving $x_1, \ldots, x_{10}$) are all equal to $\theta$ and its next 7 entries (corresponding to the functions involving $x_{11}, \ldots, x_{20}$)  are all equal to 1. 
The signal strength of $X_{\mc{T}}$ will be controlled by the factor $\theta$. 
Then the response is defined as 
\begin{equation}\label{eq: simulation nonlinear regression}
Y=\mathbf{B}(X) \beta + \varepsilon
\end{equation}
with some independent noise $\varepsilon\sim \mathbf{N}(0,1)$. 
For this nonlinear regression model, the tests based on Gaussian linear models in Section~\ref{sec: simulation CRT linear regression} may not be powerful any more because the linear dependence between the response and the covariates can be very weak if not void. 
To capture the nonlinear dependence, we consider the $G$-CRT with the statistic RF and its two-step variants, RF-D and RF-RR. 
For comparison, we consider the same baseline tests as in Section~\ref{sec: simulation CRT linear regression} but change the models in the dCRT to random forests. 

Figures~\ref{fig:llm} and \ref{fig:hlm} represent the results in low-dimensional and high-dimensional settings, respectively. 
In both settings, our $G$-CRT with either of the two-step statistics (RF-D and RF-RR) exhibits superior performance. 
The advantages of these statistics against the direct RF statistic are significant in the high-dimensional setting, which demonstrates the potential of the distillation strategy in boosting power.

Besides, the F-test, which is UMPI in low-dimensional Gaussian linear regression, has slight Type I error inflation, and it does not have competitive power. 
Furthermore, the Bonferroni-adjusted methods, including the dCRT using random forests and the de-sparsified Lasso, fail to reach the same power as the $G$-CRTs. 
These comparisons highlight the need for incorporating flexible machine learning methods to detect joint signals in conditional independence testing problems.

\begin{figure}[t]
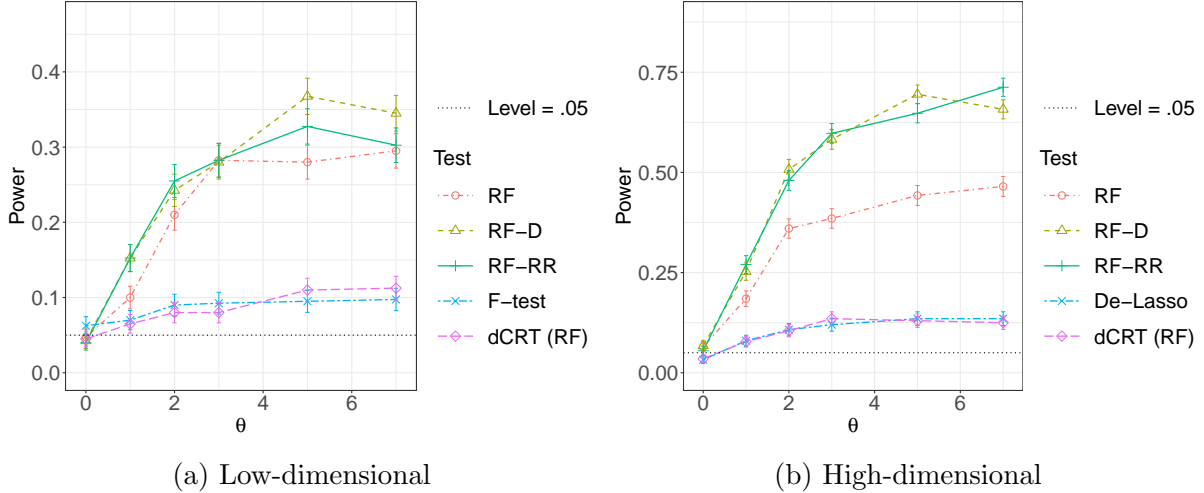

    \centering
    \begin{subfigure}[b]{0.48\textwidth}
\includegraphics[width=1\textwidth]{Simulation/CRT/CRT-l_l_m.pdf}
\caption{Low-dimensional}
\label{fig:llm}
\end{subfigure}
    \begin{subfigure}[b]{0.48\textwidth}
\includegraphics[width=1\textwidth]{Simulation/CRT/CRT-h_l_m.pdf}
\caption{High-dimensional}
\label{fig:hlm}
    \end{subfigure}
    \caption{Power comparison for testing conditional independence in nonlinear regression}\label{fig: nonlinear regression}
\end{figure}

\subsection{Logistic regression}\label{sec: simulation CRT logistic regression}
We employ the same generation of random regression coefficients $\beta$ as in Section~\ref{sec: simulation CRT linear regression}, but the response is sampled from the Bernoulli distribution with probability $\sigma( X\tp \beta)$, where $\sigma(t)=e^{t}/(1+e^{t})$ is the sigmoid function (i.e. the inverse of the logit function). 

In the low-dimensional setting with $p=20$ and $n=50$, we evaluate the power of our $G$-CRT with the statistics GLM-Dev and RF. 
For comparison, we also consider the Bonferroni-adjusted dCRT using logistic Lasso models and the classical Chi-squared test based on the deviance of nested models (see \citet[Chapter 4.3.4]{mccullagh1989generalized} and \citet{HastiePregibon1992} for more details). 

Figure~\ref{fig:lglw} presents the power curves of these tests for the low-dimensional logistic regression experiment. 
The Chi-squared test fails to control the Type I error at the nominal level in this experiment, which is because its asymptotic validity does not hold with the limited sample size $n$. 
Therefore, we should omit the Chi-squared test in the power comparison. 
The other three tests control the Type I error at the nominal level and have power increasing as $\theta$ increases.  
Both the dCRT and the $G$-CRT with the RF statistics have low power when $\theta=1$, but as $\theta$ increases, the power of the dCRT increases faster. 
This suggests that using the sum of importance scores from random forests may be inefficient in detecting signals in logistic regression. 
Notably, our $G$-CRT with the GLM-Dev statistic consistently achieves the highest power among the three tests.

In the high-dimensional setting with $p=120$ and $n=80$, we evaluate the $G$-CRT using two additional statistics besides RF: GLM-L1-D and GLM-L1-R-SST. 
For comparison, we also consider a bias-correction method for high-dimensional logistic Lasso regression proposed by \citet{cai2023statistical}, which can be implemented using the \textbf{LF()} function in the \texttt{R} package \textbf{SIHR}. 
The Bonferroni-adjusted test based on the asymptotic normality of this bias-corrected estimator will be referred to as the CGM test. 

Figure~\ref{fig:hglw} shows the power curves for the high-dimensional logistic regression experiment. All considered tests control the Type I error at the nominal level and their power increases as $\theta$ increases. 
It seems that the CGM test is too conservative and powerless for $\theta\in (0,1)$. 
The comparison between the dCRT and the $G$-CRT with the statistic RF is similar to that in the low-dimensional setting. 
The two $G$-CRTs with statistic functions GLM-L1-D and GLM-L1-R-SST perform the best, yet GLM-L1-R-SST seems to be the most powerful. 
This is surprising because the second step of the computation of GLM-L1-R-SST does not make use of the original response or the likelihood function of the logistic regression. 
This observation highlights the potential to boost power when using a flexible test statistic function in a CRT method.

\begin{figure}[t]
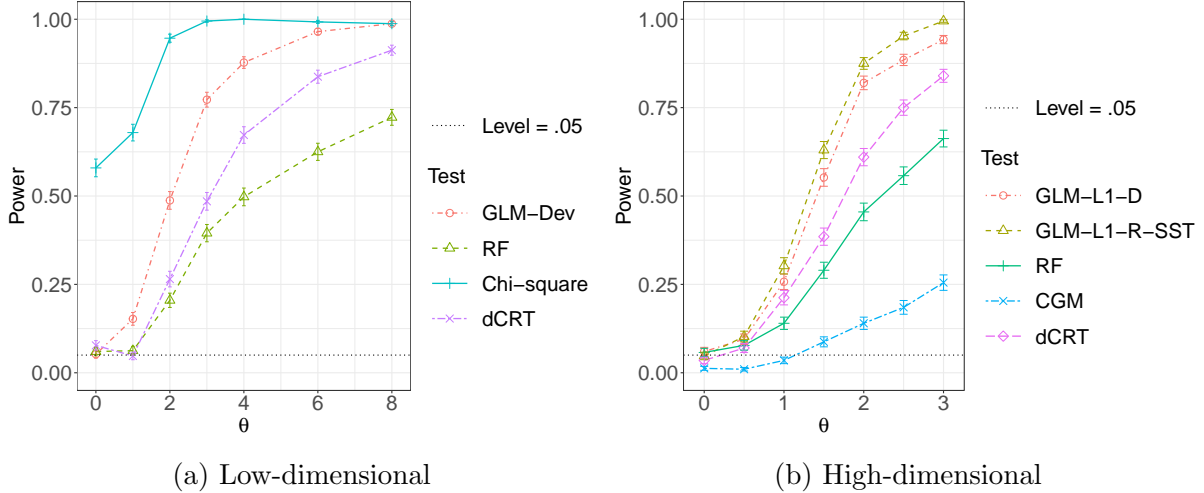

    \centering
    \begin{subfigure}[b]{0.48\textwidth}
\includegraphics[width=1\textwidth]{Simulation/CRT/CRT-l_gl_w.pdf}
\caption{Low-dimensional}
\label{fig:lglw}
\end{subfigure}
    \begin{subfigure}[b]{0.48\textwidth}
\includegraphics[width=1\textwidth]{Simulation/CRT/CRT-h_gl_w.pdf}
\caption{High-dimensional}
\label{fig:hglw}
    \end{subfigure}
    \caption{Power comparison for testing conditional independence in logistic regression. }\label{fig: logistic regression}
\end{figure}

\subsection{Nonlinear binary regression}

In this subsection, we consider a binary response variable that depends on the covariates through a nonlinear link function. Specifically, we transform the first 20 coordinates of any covariate vector $X$ component-wise through a step function $$\mathbf{b}(x)= \left({1}_{\left\{x< l\right\} }-2\cdot  {1}_{\left\{ l \leq x \leq u\right\} } + {1}_{\left\{x > u\right\} } \right) / 4,$$ where $l$ and $u$ are the $1/3$ and $2/3$ quantiles of the standard normal distribution, and obtain a vector $Z=\mathbf{b}(X_{[20]})$. 
We then sample $Y$ from the Bernoulli distribution with probability $\sigma( Z \tp \beta)$, where the first 10 coordinates of $\beta$ are equal to $\theta$ and the rest are equal to 1. 
The signal strength of $X_{\mc{T}}$ will be controlled by the factor $\theta$. 
In each experiment, we will evaluate our $G$-CRT procedure with the statistic RF and its variants, RF-D and RF-RR. 
Note that the statistic RF-RR is based on regression models that treat the response variable as continuous, but it is compatible with the CRT framework. 
For comparison, we consider the same baseline tests as in Section~\ref{sec: simulation CRT logistic regression} but change the models in the dCRT to random forests. 
Due to the high variability in estimating power in the context of nonlinear binary regression, we increase the number of replications for each experiment from 400 to 800.

Figure~\ref{fig:lglm} presents the power curves for low-dimensional nonlinear binary regression. The Chi-square test again fails to control the Type I error at the nominal level when $\theta=0$ and thus it should not be considered in the comparison. 
In addition, the Bonferroni-adjusted dCRT using random forests is powerless even when $\theta$ is as large as 5. 
Our $G$-CRTs using all three test statistic functions, RF, RF-D, and RF-RR, perform reasonably well, with power increasing in $\theta$.

Figure~\ref{fig:hglm} shows the results for high-dimensional nonlinear binary regressions. 
Similar to the results in Figure~\ref{fig:hglw}, the CGM test is too conservative and lacks power across the considered range of $\theta$. 
In addition, the Bonferroni-adjusted dCRT falls short of achieving high power. 
Notably, for the $G$-CRT with any of the three statistics, there is an increasing trend in power as $\theta$ increases. However, the statistics RF-D and RF-RR exhibit significantly better performance than the direct RF statistic. 
This discrepancy in performance corroborates the finding from Figure~\ref{fig:hlm} that the distillation strategy can boost the power of the $G$-CRT, especially in high-dimensional problems.

\begin{figure}[hbtp]
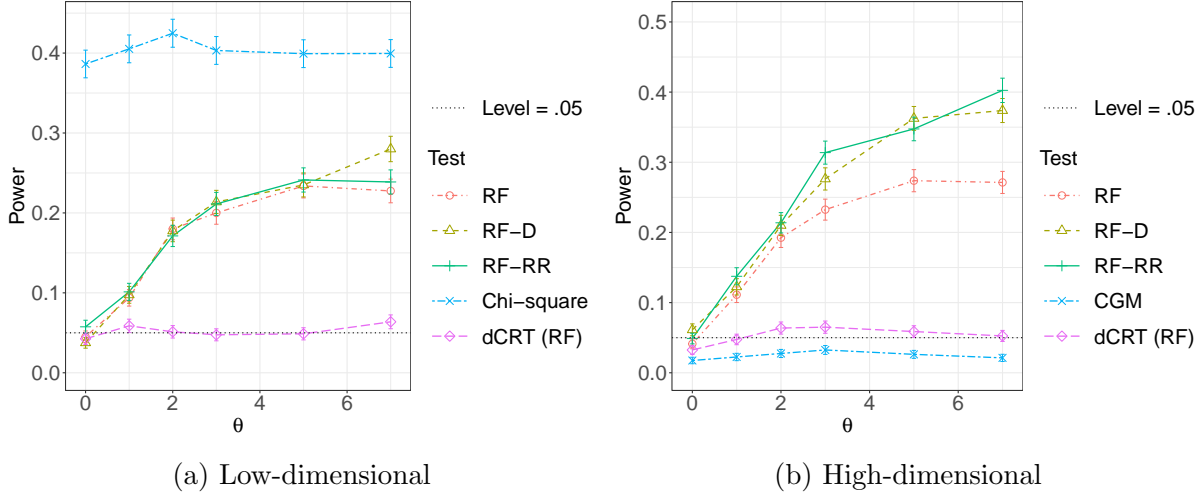

    \centering
    \begin{subfigure}[b]{0.48\textwidth}
\includegraphics[width=1\textwidth]{Simulation/CRT/CRT-l_gl_m.pdf}
\caption{Low-dimensional}
\label{fig:lglm}
\end{subfigure}
    \begin{subfigure}[b]{0.48\textwidth}
\includegraphics[width=1\textwidth]{Simulation/CRT/CRT-h_gl_m.pdf}
\caption{High-dimensional}
\label{fig:hglm}
    \end{subfigure}
    \caption{Power comparison for testing MCI in nonlinear binary regressions. The first three tests are the $G$-CRT with different statistics, and the other two are existing methods.}\label{fig: nonlinear binary}
\end{figure}

Overall, the experimental results demonstrate that our proposed $G$-CRT is effective in both low and high-dimensional settings. By incorporating either parametric models or machine learning algorithms, $G$-CRT can adapt to detect complex dependence between a response variable and multivariate covariates, whether linear or nonlinear, categorical or continuous. 
The $G$-CRT is not only on par with the UMPI test in low-dimensional linear regression but also outperforms the Bonferroni-adjusted competitors in more complex scenarios.

\section{Additional Simulations}
\label{sec: Additional Simulations}

This section provides additional simulation studies to examine several aspects of the $G$-CRT. These aspects are (1) using expanded supergraphs, (2) testing univariate $X_{\mc{T}}$, and (3) robustness to violation of the normality assumption.  

\subsection{Supergraph in $G$-CRT}\label{app: CRT super}

The $G$-CRT only requires a supergraph of the true graph (the faithful graph). 
However, when using a supergraph containing numerous additional edges, the conditional distribution given the sufficient statistic becomes restricted, which may potentially result in reduced statistical power.
Therefore, it is interesting to investigate how the power of the $G$-CRT is affected when using the true graph versus various supergraphs.

To examine this, we extend the high-dimensional linear regression experiment from Section~\ref{sec: simulation MCI testing} by implementing the $G$-CRT with graphs that include more edges than the true graph. 
The true graph $G_0$ for the covariate is a banded graph with bandwidth $K=6$.
Specifically, the true graph $G_0$ for the covariates is a banded graph with bandwidth $K=6$. We consider two larger, nested banded graphs: $G_1$ with $K=12$ and $G_2$ with $K=18$. For simplicity, we focus on the statistic \textbf{LM-L1-R-SSR}.

The results of this extended experiment are shown in Figure~\ref{fig: varying graphsize}. We observe that while there is indeed a decline in power when using supergraphs with additional edges, this reduction is relatively mild. 
Notably, the oversized graph $G_2$ is significantly larger than the true graph $G_0$: most nodes in $G_2$ have up to 36 neighbors, whereas in $G_0$, nodes have at most 12 neighbors. 
Despite this substantial increase in graph density, the power of the $G$-CRT using $G_2$ remains close to that achieved with $G_0$. We note that the degree of $G_2$ is not too large compared to the sample size $n=80$, so the $G$-CRT remains applicable.

Overall, these findings suggest that moderate expansions of the true graph cause only slight power losses, and the $G$-CRT with a supergraph not too large compared to the sample size continues to exhibit performance close to that achieved when using the true graph.

\begin{figure}[htbp]
    \centering
    \includegraphics[width=0.6\linewidth]{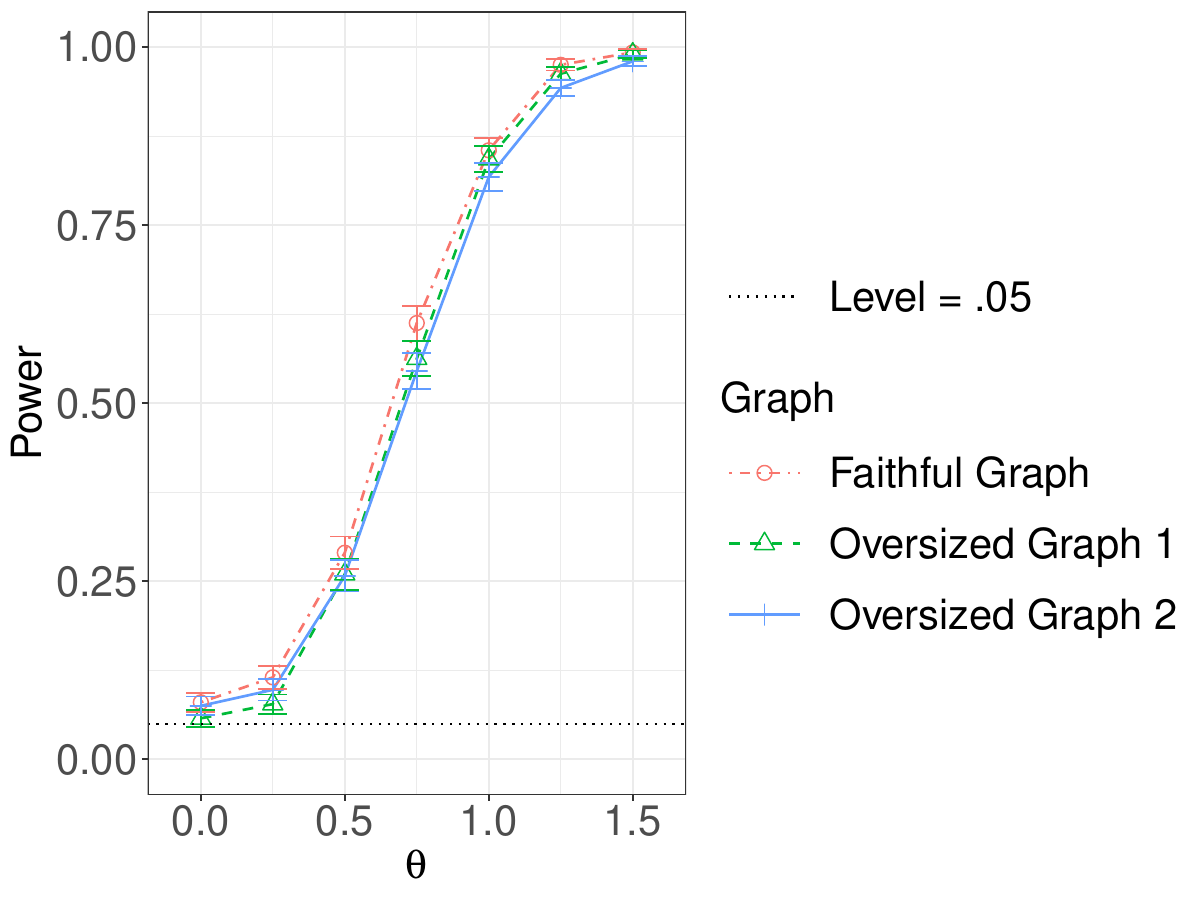}
    \caption{Power of $G$-CRT with supergraphs in linear regression with $n=80$ and $p=120$. The faithful graph has bandwidth $K=6$, $G_1$ has $K=12$, and $G_2$ has $K=18$. }
    \label{fig: varying graphsize}
\end{figure}

\subsection{Univariate CRT}

As discussed in Section~\ref{sec: MS-CRT}, our method improves over the Bonferroni-adjusted univariate CRT by sampling $X_{\mc{T}}$ jointly from the conditional distribution given the sufficient statistic. A natural concern is whether this joint sampling leads to a loss of power when $\mc{T}$ contains only a single variable. 
To examine this question, we revisit the experiment on linear regression in Section~\ref{sec: simulation MCI testing} but consider only testing $\mc{T}=\{1\}$. The results are shown in Figure~\ref{fig: linear regression-univariate}. 

In the low-dimensional setting, our methods perform comparably to the classical $F$-test. In the high-dimensional setting, our methods achieve similar power to the debiased Lasso and are only slightly less powerful than the dCRT. These results suggest that the joint conditional sampling strategy used by our method does not hinder performance in the univariate case and remains competitive across both low- and high-dimensional regimes.

\begin{figure}[ht]
    \centering

    \begin{minipage}{0.48\textwidth}
        \centering
        \includegraphics[width=\linewidth]{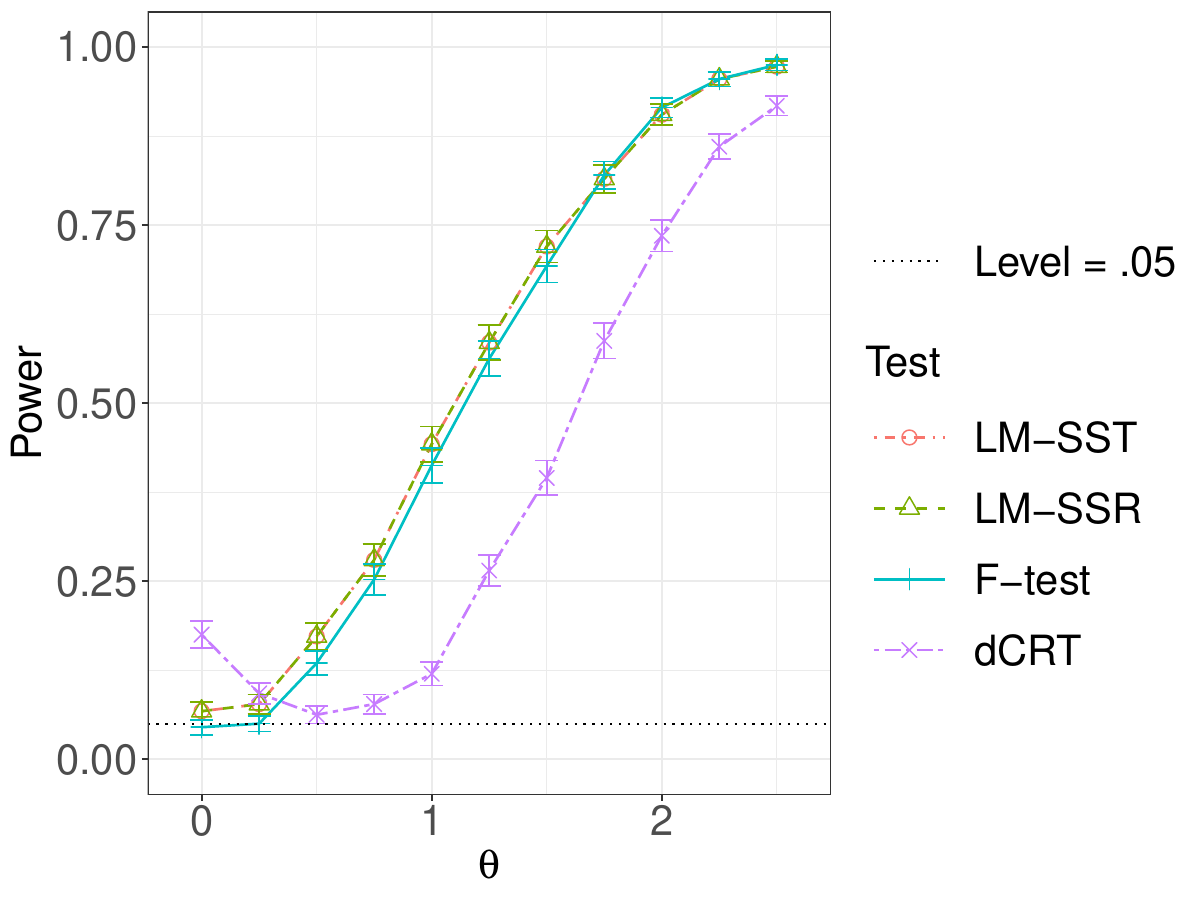}
        \textbf{(a)} Low-dimensional
    \end{minipage}
    \hfill
    \begin{minipage}{0.48\textwidth}
        \centering
        \includegraphics[width=\linewidth]{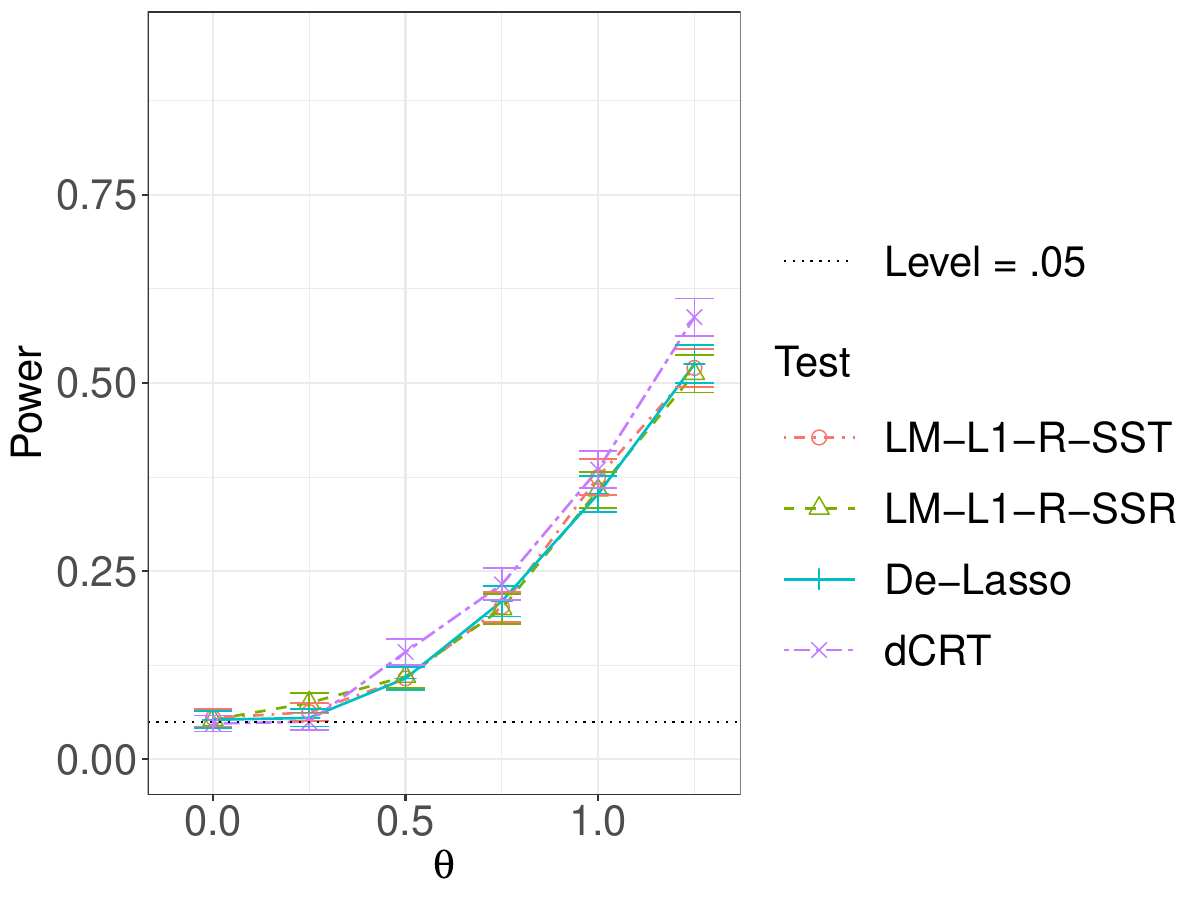}
        \textbf{(b)} High-dimensional
    \end{minipage}

    \caption{Power comparison for testing univariate conditional independence in linear regression.}
    \label{fig: linear regression-univariate}
\end{figure}

\subsection{G-CRT with Violation of the Gaussian Assumption}\label{app: CRT vio}

While the $G$-CRT method is developed under the assumption that the covariates follow a multivariate Gaussian distribution, it is of practical interest to understand how the method behaves when this assumption is violated.

To assess the robustness of the proposed $G$-CRT under non-Gaussian designs, we revisit the linear regression experiment from Section~\ref{sec: simulation MCI testing} and modify the distribution used to generate the covariates. 
Specifically, we implement the following sampling procedure:
\begin{enumerate}
    \item We begin with sampling an initial design matrix $\mathbf{X}^{(0)}$ from an AR(6) model. The first column $\mathbf{X}_{:,1}^{(0)}$ is sampled i.i.d. from the standard $t$-distribution with 3
degrees of freedom (denoted as $t_3$). 
For each subsequent column $j = 2, \ldots, p$, we define
$$
\mathbf{X}_{:,j}^{(0)} \;=\; 
\sum_{\ell=1}^{k_j} \gamma_{\ell}\,\mathbf{X}_{:,\,j-\ell}^{(0)} \;+\; \sigma_{\epsilon}\boldsymbol{\varepsilon}_j,
$$
where $k_j = \min(j - 1, k)$, and $\boldsymbol{\varepsilon}_j$ is an $n$-dimensional noise vector with 
i.i.d. entries drawn from $t_3$. 
The constants are fixed as follows: $k=6$, $\sigma_{\epsilon}=0.7$, and $\boldsymbol{\gamma}=(1,1,-1,-1,1,1)/5$.

\item Define a graph $G_0$ such that the pair $(i,j)$ is connected if and only if $|i-j| \leq 6$. We then randomly permute the columns of $\mathbf{X}^{(0)}$ and the nodes of $G_0$ to obtain the final covariate matrix $\mathbf{X}$ and graph $G$ to be used in testing.

\end{enumerate}

This procedure ensures that $\mathbf{X}$ satisfies the Markov property with respect to $G$, while the marginal distributions are heavy-tailed due to the use of $t_3$ noise.
In this experiment, the graphical assumption still holds, but the Gaussian assumption is violated because the distribution of the covariate variables is now heavy-tailed.

\begin{figure}[ht]
    \centering

    \begin{minipage}{0.48\textwidth}
        \centering
        \includegraphics[width=\linewidth]{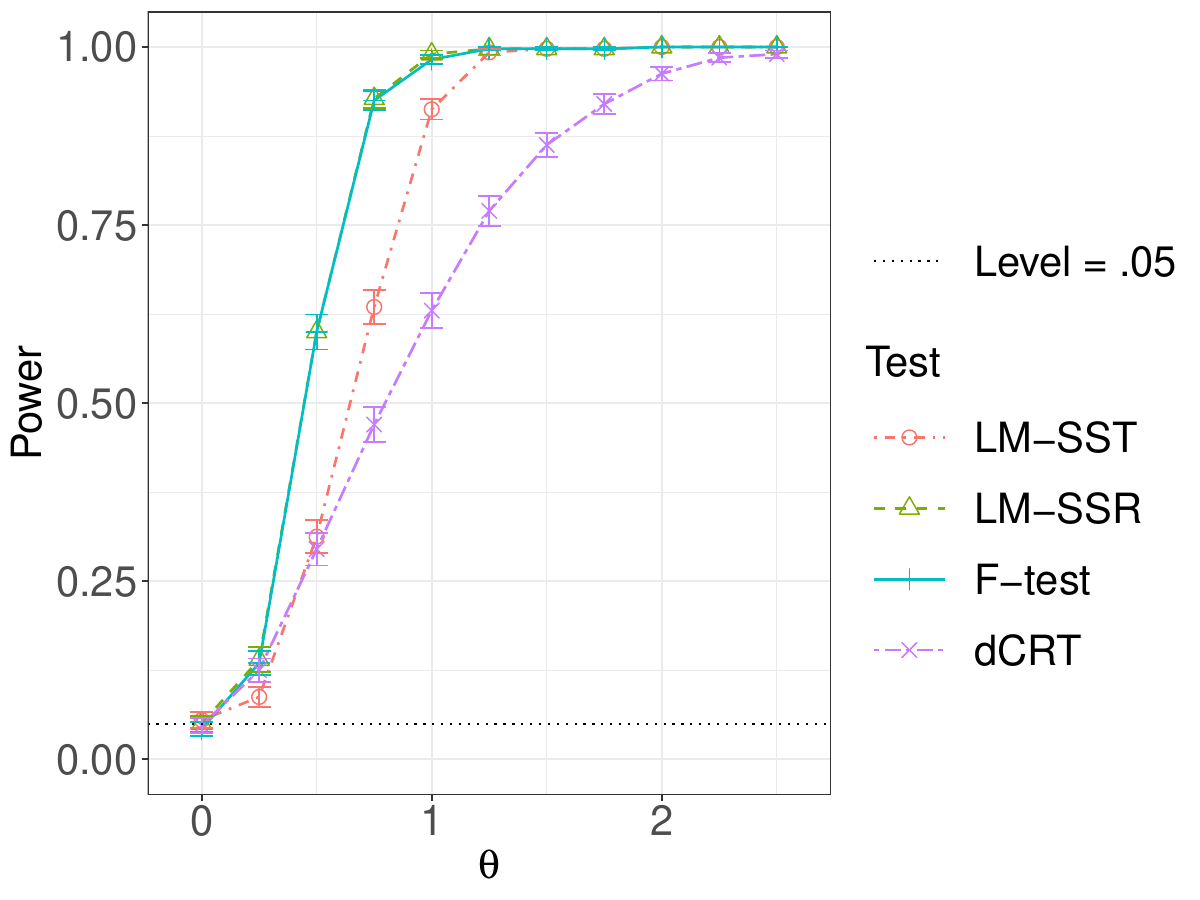}
        \textbf{(a)} Low-dimensional
    \end{minipage}
    \hfill
    \begin{minipage}{0.48\textwidth}
        \centering
        \includegraphics[width=\linewidth]{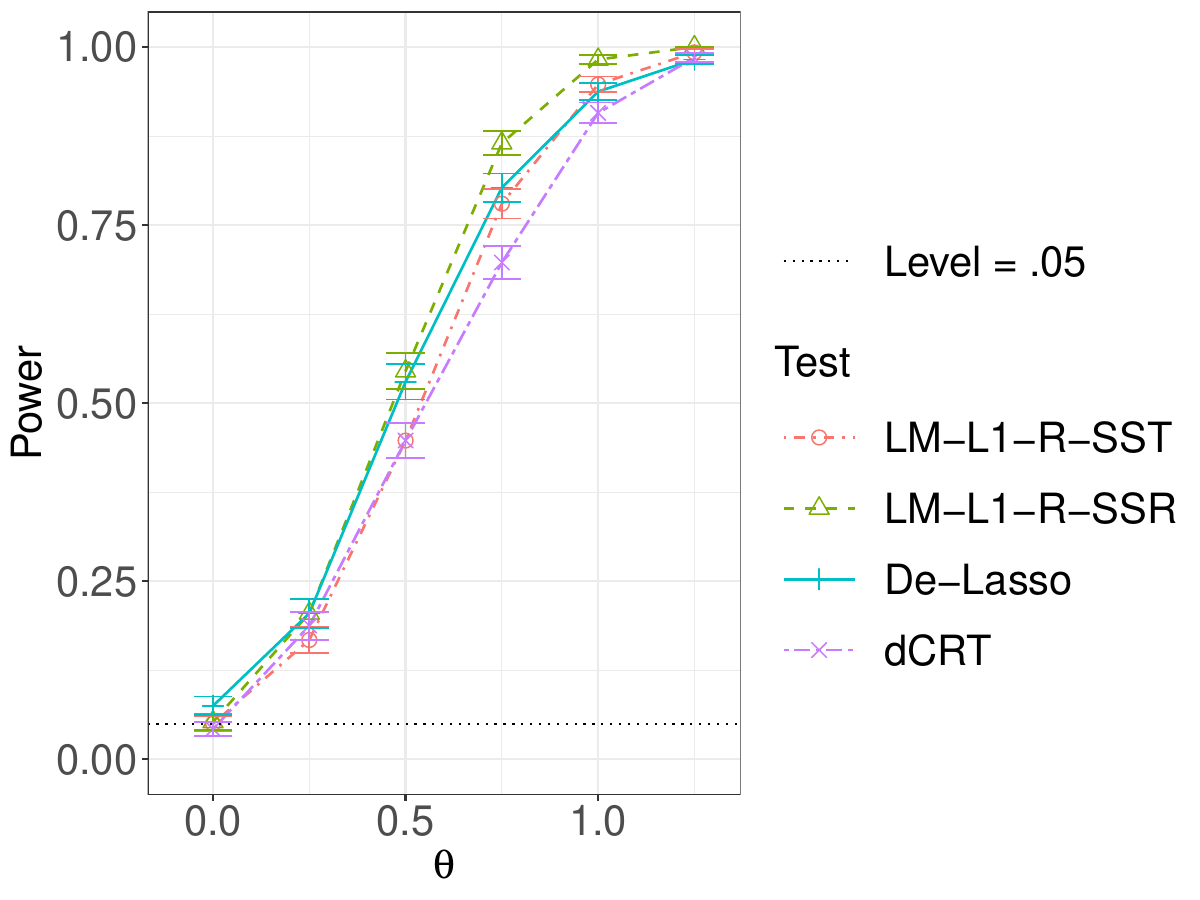}
        \textbf{(b)} High-dimensional
    \end{minipage}

    \caption{Power comparison for testing conditional independence in linear regression with non-Gaussian covariates.}
    \label{fig: linear regression-t3}
\end{figure}

The results of this experiment are shown in Figure~\ref{fig: linear regression-t3}. We observe that the $G$-CRT remains valid in both low- and high-dimensional settings, maintaining proper Type I error control.

In the low-dimensional case, the power of the $G$-CRT using the LM-SSR statistic is comparable to that of the $F$-test, which is the UMPI test in the low-dimensional linear regression model. In the high-dimensional case, the $G$-CRT using the LM-L1-R-SSR statistic achieves substantially higher power than both the Bonferroni-adjusted debiased Lasso and the dCRT when the signal is moderately large ($\theta\geq 0.75$).

We conclude that the $G$-CRT remains robust to moderate violations of the Gaussian assumption: it continues to control the Type I error numerically and retains high power even under heavy-tailed covariate distributions.

\section{Details of Real-World Examples}\label{sec: Appendix Application}
In this section, we provide detailed information about the applications of our testing methods to real-world datasets in Section~\ref{sec: Application} in the main paper.

\subsection{Dependence of fund return}\label{app: stock detail}

\subsubsection{Information on the dataset}
An exchange-traded fund (ETF) is a basket of securities that tracks a specific index. 
The SPDR Dow Jones Industrial Average ETF Trust (DIA) was introduced in January 1998 and has been proven popular in the market \citep{HEGDE20041043}. 
DIA tracks the performance of the Dow Jones Industrial Average (DJIA, also known as Dow 30). It is a famous stock market index that measures the stock performance of 30 well-established companies listed on stock exchanges in the United States. 
Created by Charles Dow and first calculated in 1896, the DJIA has been one of the most informative indexes for investors who are interested in stable returns based on well-established blue-chip stocks. This index is maintained and reviewed by a committee and the composition of the index has changed over time to reflect the evolving U.S. economy. 
For this problem, we consider the period from September 1st, 2020 to Jun 30th, 2022, and consider the stock prices of 103 large-cap U.S. stocks, including all the stocks either from the Standard \& Poor's 100 Index or from DJIA. 
These stocks are assumed to be influential on the DIA. 
We obtain the daily adjusted closing prices of each stock from Yahoo Finance (\url{finance.yahoo.com}) and compute the 5-day (weekly) average of its returns. The return is defined as $(P_{t}-P_{t-1})/P_{t-1}$, where $P_{t}$ and $P_{t-1}$ are the adjusted closing prices of the current day and the last trading day on the stock market, respectively.

The Standard and Poor's 100 Index (SP100) includes 101 stocks (there are two classes of stock corresponding to one of the component companies). The holdings of the SP100 can be found on the homepage of the index-linked product iShares S\&P 100 ETF (\url{https://www.ishares.com/us/products/239723/ishares-sp-100-etf}). We used the version of the holdings on September 29, 2023 for our analysis. We define sectors of these stocks almost the same as shown on the webpage but put the three stocks \verb|MA|, \verb|PYPL|, and \verb|V| of payment technology companies in the sector of Information Technology rather than Financial, and put \verb|TGT| in Consumer Discretionary rather than Consumer Staples. In our analysis, we have also included the two stocks \verb|TRV| and \verb|WBA| that are included by DJIA but not by the SP100. In total, we consider 103 stocks and they are divided into 11 sectors: 
\begin{enumerate}
\item Communication Services (10): CHTR, CMCSA, DIS, GOOG, GOOGL, META, NFLX, T, TMUS, VZ
\item Consumer Discretionary (12): AMZN, BKNG, F, GM, HD, LOW, MCD, NKE, SBUX, TGT, TRV, TSLA
\item Consumer Staples (10): CL, COST, KHC, KO, MDLZ, MO, PEP, PG, PM, WMT
\item Energy (3): COP, CVX, XOM
\item Financials (15): AIG, AXP, BAC, BK, BLK, BRK-B, C, COF, GS, JPM, MET, MS, SCHW, USB, WFC
\item Health Care (15): ABBV, ABT, AMGN, BMY, CVS, DHR, GILD, JNJ, LLY, MDT, MRK, PFE, TMO, UNH, WBA
\item Industrials (13): BA, CAT, DE, EMR, FDX, GD, GE, HON, LMT, MMM, RTX, UNP, UPS
\item Information Technology (17): AAPL, ACN, ADBE, AMD, AVGO, CRM, CSCO, IBM, INTC, MA, MSFT, NVDA, ORCL, PYPL, QCOM, TXN, V
\item Materials (2): DOW, LIN
\item Real Estate (2): AMT, SPG
\item Utilities (4): DUK, EXC, NEE, SO
\end{enumerate}

For any stock, denoted by $Z_{i}$'s the 5-day averages of returns. To enhance the normality shown in the data, we then consider transformation using a piece-wise function 
\begin{equation}
\psi(z) = 
\begin{cases}
    \text{sign}(z) \cdot |z|^\lambda, & \text{if } |z| < c, \\
    \text{sign}(z) \cdot ( a  + 0.01 \cdot \log(|z|)), & \text{otherwise},
\end{cases}
\end{equation}
where $a = c^\lambda - 0.01 \cdot \log(c)$. The parameter $\lambda$ is tuned over a grid from 0.1 to 2, and the parameter $c$ is tuned over a grid between the 80\% quantile and the maximum of the observed values of $Z_i$. The criteria to be maximized is the p-value of the Shapiro–Wilk test on the transformed data. 
After this transformation, only two stocks have p-values from the Shapiro–Wilk tests smaller than 0.05, and we alternatively use the Box-Cox transformation: 
\begin{equation}
\varphi(z) = 
\begin{cases}
    \frac{z^\lambda - 1}{\lambda}, & \text{if } \lambda \neq 0, \\
    \ln(z), & \text{if } \lambda = 0,
\end{cases}
\end{equation}
where the parameter $\lambda$ is tuned by using the \texttt{boxcox()} function in the \textbf{R} package \texttt{MASS}.

\subsubsection{Preliminary analysis}

Our analysis begins with a graphical modeling for the stock returns $X$. Suppose the observations of $X$ are i.i.d. sampled from a normal distribution, and we want to find a graph $G$ so that the distribution belongs to a GGM w.r.t. $G$.  
As discussed in Section~\ref{sec: G-CRT assumption}, all we need to implement the $G$-CRT is one particular supergraph of the true graph. 
To facilitate the selection of such a super-graph, we incorporate the prior knowledge of the stock sectors (based on their industries). 
We estimate $G$ using GLasso with regularization parameter $\lambda=0.15$ and with no penalty on edges between stocks in the same sector. 
This approach yields a graph $\widehat{G}$, whose maximal degree is $30$, the median degree is 19, and around $18.4\%$ pairs of nodes are connected. 
The graph $\widehat{G}$ we estimated using GLasso is shown in Figure \ref{fig: sp100_G}, where nodes are colored according to their sectors. 
\begin{figure}[t]
\centering
    
    \includegraphics[scale=0.8]{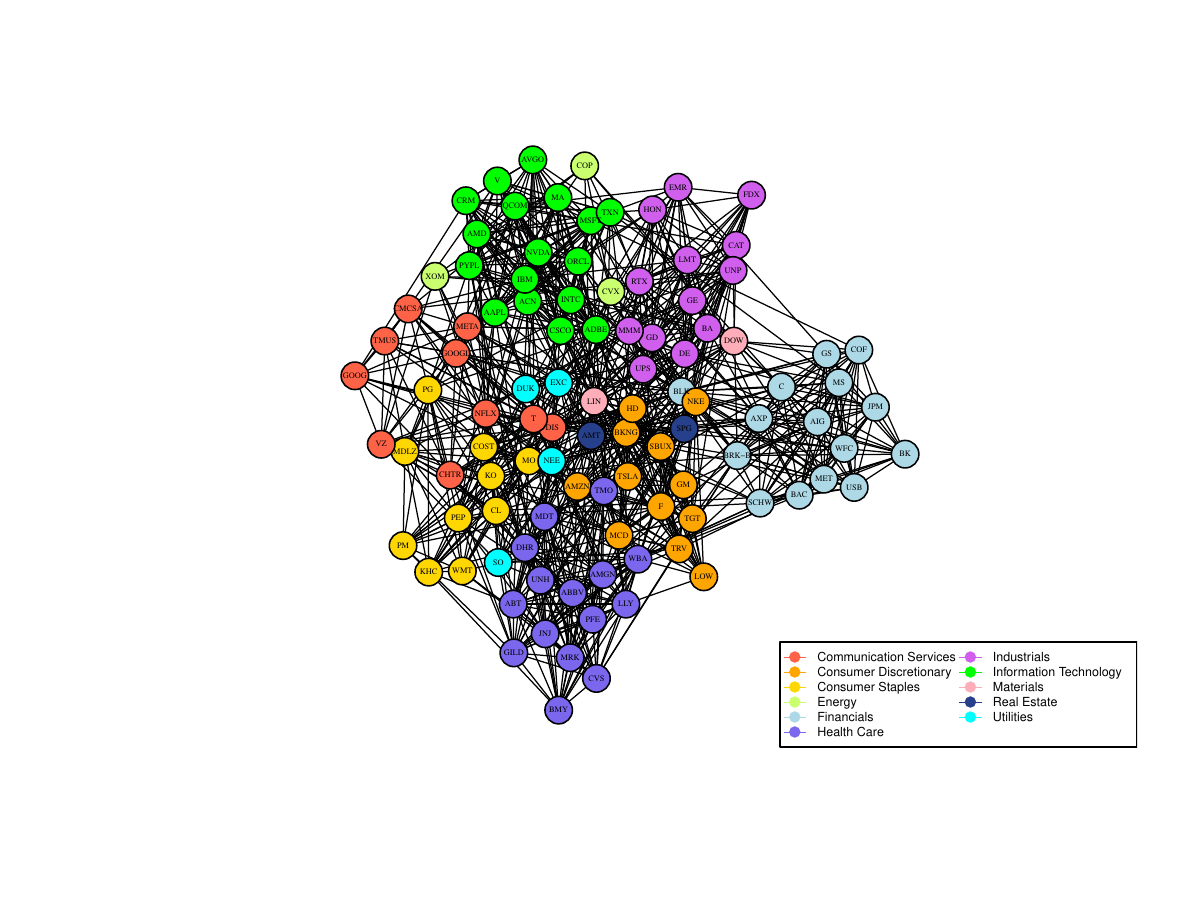}
    \caption{Estimated Graph with Color for Sector}\label{fig: sp100_G}
\end{figure}

We apply the MC-GoF test in Appendix B.1 with statistics F$_{\Sigma}$ as well as the two existing benchmarks \nameVV and \nameDP to test the goodness-of-fit of the GGM with $\widehat{G}$. 
The p-values are listed in Table~\ref{tab: GoF stock} and are all above 0.3, indicating the GGM with $\widehat{G}$ is sufficient for modeling the stock data.

\begin{table}[t]
    \centering
    \caption{Results of GoF tests for modeling the stock returns by the GGM with the estimated graph. }
    \label{tab: GoF stock}
    \begin{tabular}{l|ccc}
\hline
 & \multicolumn{3}{c}{Method} \\ 
  & \nameVV & \nameDP & \multicolumn{1}{c}{F$_{\Sigma}$} \\ 
\hline
\hline p-value  & 1.000 & 0.352 & 0.321 \\
\hline 
\end{tabular}

\end{table}

The relationship between $Y$ and $X$ can be modeled either by a linear model or by a nonparametric regression model. In addition, the dimension of data ($p=103$) is larger than the number of observations ($n=92$). 
Therefore, we should consider the testing methods that have been numerically studied in Sections~\ref{sec: simulation CRT linear regression} and \ref{sec: simulation CRT nonlinear regression}. 
For the $G$-CRT, we choose the winning statistic functions L1-R-SSR in high-dimensional linear regression and RF-RR in high-dimensional nonlinear regression. We also consider the Bonferroni-adjusted methods including the de-sparsified Lasso and the dCRT with either Gaussian Lasso models (L1) or random forests (RF). 
Before we apply these tests, we conduct a simulation study to assess the Type I error control of various statistical tests to test the conditional independence of $Y \indp X_{\mc{T}} \mid X_{\mc{S}}$. First, we fit a lasso linear regression model of $Y$ on $X_{\mc{S}}$, with penalty parameter tuned by 10-fold cross-validation. 
Subsequently, in each replication of the simulation, a new covariate matrix $\tilde{\textbf{X}}$ is generated via running Algorithm~\ref{alg: exchangeable} with $\mc{I}=[p]$, $M=1$, and $L=20$. This guarantees that if the distribution of the input $\textbf{X}$ is in $\mc{M}_{G}$, so is the distribution of $\tilde{\textbf{X}}$. $L=20$ is large enough to reduce the correlation between the output $\tilde{\textbf{X}}$ and the input $\textbf{X}$. 
Then, a new response vector $\tilde{\bs{Y}}$ is generated using the fitted lasso linear regression model with the generated covariate matrix $\tilde{\textbf{X}}$. The p-values of the tests for conditional independence are then computed based on the simulated $\tilde{\textbf{X}}$ and $\tilde{\bs{Y}}$.

We repeat the experiment 400 times and report the rejection proportions of these tests at the significance level $\alpha=0.05$ in Table~\ref{tab: fund error check}. Only the estimated size of the de-sparsified Lasso exceeds the nominal level $\alpha=0.05$.

\begin{table}[t]
    \centering
        \caption{Estimated sizes of various conditional independence tests on simulated returns ($\alpha=0.05$). }
    \label{tab: fund error check}
   \begin{tabular}{l|ccccc}
\hline
 & \multicolumn{5}{c}{Method} \\ 
$\mc{T}$  & LM-L1-R-SSR & RF-RR & De-Lasso & dCRT (L1) & \multicolumn{1}{c}{dCRT (RF)} \\ 
\hline
\hline $\mc{T}_1$  & 0.045 (0.014) & 0.065 (0.015) & 0.265 (0.020) & 0.060 (0.018) & 0.022 (0.017) \\
$\mc{T}_2$  & 0.052 (0.015) & 0.048 (0.015) & 0.158 (0.020) & 0.080 (0.018) & 0.003 (0.015) \\
\hline 
\end{tabular}

\end{table}

\subsubsection{Analysis on the fund dependence}

We focus on the tests that control the Type I error and the p-values are reported in Table~\ref{tab: fund p-value} in  the main text. 

For $\mc{T}_1$, both $G$-CRTs reject the null hypothesis at the significance level $\alpha=0.05$. In contrast, the dCRTs accept the null hypothesis, which may be because each of the stocks in $\mc{T}_1$ has a relatively small influence on the fund return and none of them can be detected by the dCRTs, after adjusting for multiple comparisons. 
Nonetheless, the five stocks in $\mc{T}_1$ together as a whole are influential on the fund return, and our $G$-CRTs successfully detect this signal.

For $\mc{T}_2$, all the considered tests reject the null hypothesis with small p-values. This suggests that the stocks in $\mc{T}_2$ are influential on the fund return. We subsequently consider a situation where the response variables are contaminated by additive noises and thus the ``signal-to-noise" ratio becomes smaller. 
We want to investigate whether the rejection decision of these tests would remain unchanged as the noise level increases. Specifically, we repeatedly set $\tilde{Y}=Y+\sigma \epsilon$, where $\epsilon$ is a standard normal random variable and $\sigma$ is a parameter to be controlled, and apply all the testing methods to the dataset with $\tilde{Y}$. 

The rejection rates of the tests across different values of $\sigma$ are presented in Figure~\ref{fig: stock noise}. We can see that the rejection rate of each method decreases as the noise level $\sigma$ increases. This is expected because as the ``signal-to-noise" ratio becomes smaller, it becomes harder for any test to detect the conditional dependence of $\tilde{Y}$ on $X_{\mc{T}_2}$. 
The two tests based on random forests, i.e., the $G$-CRT with the statistic RF-RR and the dCRT with random forests, have lower rejection rates than the other two tests that are based on linear models. 
This is not surprising because test statistic functions based on random forests are more effective in nonlinear regression, and are not as efficient as model-based statistic functions in this problem, where the linear model provides a rather good fit to the data.  
Among the two tests based on linear models, the rejection rate of the dCRT with Gaussian Lasso models drops down to $71\%$ as the noise level increases to 0.1, while the $G$-CRT with the LM-L1-R-SSR statistic still maintains its rejection decision more often than $94\%$. 
Overall, the $G$-CRT with the LM-L1-R-SSR statistic has uniformly higher rejection rates, exhibiting robustness to additive noises.

\begin{figure}[htbp]
    \centering
    \includegraphics[scale=0.5]{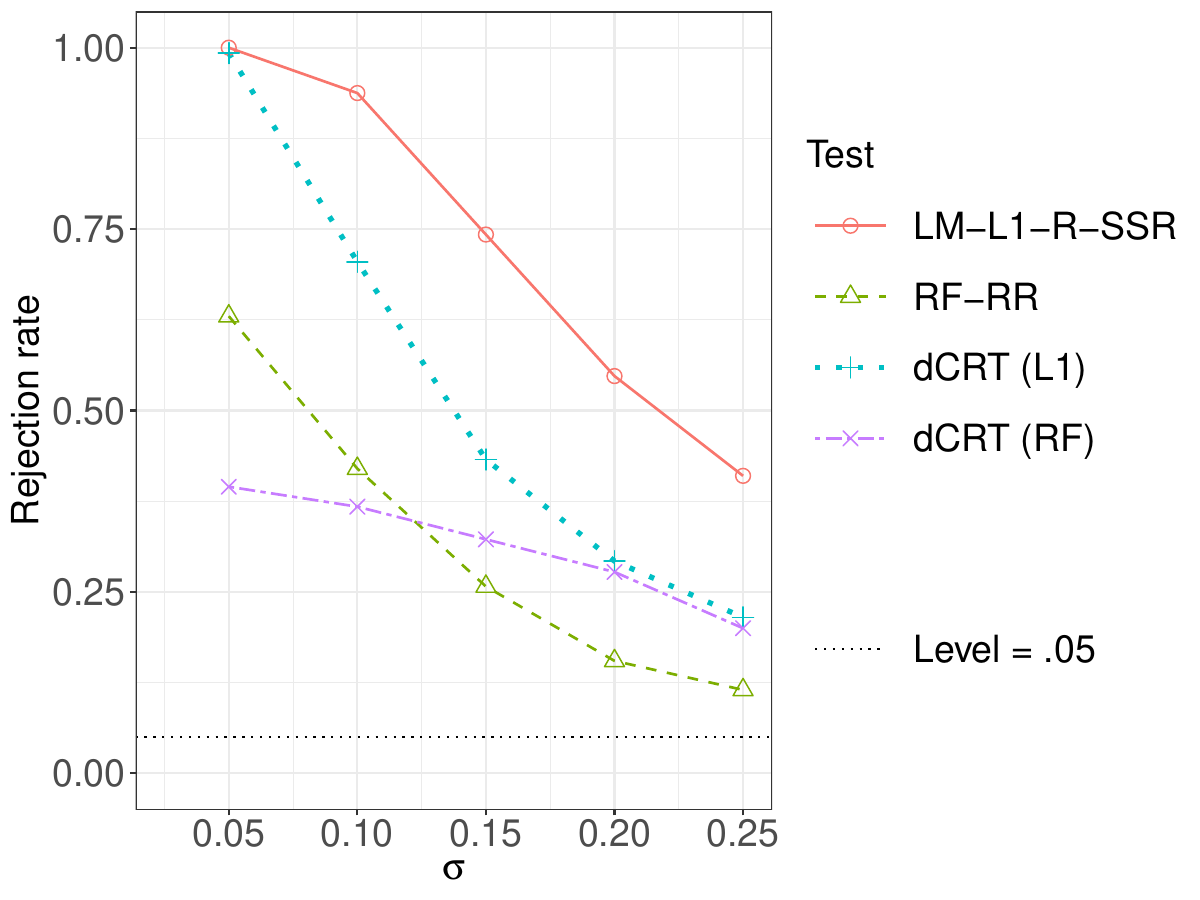}
    \caption{Rejection rates of various conditional independence tests for $\mc{T}_2$ based on noisy observations of the DIA return. $\sigma$ is the standard deviation of the additive noises. }\label{fig: stock noise}
\end{figure}

In conclusion, the $G$-CRT is shown powerful in detecting the dependence of the DIA return on some specific sets of stocks. 
When the dependence on each stock is weak, as in $\mc{T}_1$, the $G$-CRT can efficiently detect joint dependence while others can not. 
When the response is contaminated by additive noises and the dependence on the stocks in $\mc{T}_2$ is weakened, the $G$-CRT with the LM-L1-R-SSR statistic maintains relatively high rejection rates. 
In both cases, the $G$-CRT outperforms the Bonferroni-adjusted dCRT in terms of making the correct rejection decision, which demonstrates its potential for unveiling undisclosed holdings of funds.

\subsection{Breast Cancer Relapse}\label{app: BCR details}

This example illustrates the application of the $G$-CRT in a situation where the classical likelihood ratio test is not applicable. 

\subsubsection{Information on the dataset}
The breast cancer relapse (BCR) dataset was collected by \citet{wang_gene-expression_2005}, which aimed to improve risk assessment in patients with lymph-node-negative breast cancer by identifying genome-wide measures of gene expression associated with developments of distant metastases. 

We obtain the breast cancer relapse (BCR) dataset from the NCBI GEO database with accession GSE2034 \citep{barrett2012ncbi} at 
(\url{https://www.ncbi.nlm.nih.gov/geo/query/acc.cgi?acc=gse2034}). 
The original dataset contains the expression of 22,000 transcripts from the total RNA of frozen tumour samples from 180 lymph-node negative relapse-free patients and 106 lymph-node negate patients that developed distant metastasis \citep{wang_gene-expression_2005}. 
The dataset comes along with patients' clinical information, including ER status,  time to relapse or last follow-up (months), relapse status, and Brain relapse status. 
Patients were divided into two groups based on oestrogen receptor (ER) levels: those with more than 10 fmol per mg protein (ER-positive) and those with less. 
We focus on the ER-positive group comprising 209 patients and illustrate the application of the $G$-CRT in the determination of important genes associated with distant metastases. The list of 76 identified genes can be found in \citet[Table 3]{wang_gene-expression_2005}, among which we focus on the 60 genes associated with the ER-positive group.

The variable $Y$ can be derived from the clinical information. 
The covariates are the gene expression levels, after applying the rank-based Inverse Normal Transformation (INT) \citep{fisher1953statistical}. 
The INT transform of a $n$-vector of observations $Z_{i}$ is defined as follows: 
$$
Z_{i}^{(\text{INT})} = \Phi^{-1}\left( \frac{r_{i}-0.5}{n} \right),
$$
where $r_{i}$ is the rank of $Z_{i}$ and $\Phi^{-1}$ is the inverse CDF of the standard normal distribution. See \citet{beasley2009rank, mccaw2020operating} for more details about INT. 

\subsubsection{Preliminary analysis}

Now we consider modeling $\mc{L}(Y\mid X)$ by a $K$-class symmetric multinomial logistic regression with a lasso penalty \citep{friedman2010}. 
This model assumes that 
\begin{equation}\label{eq: sym multinomial}
\P( Y = k \mid X=x)=\frac{\exp\left( \beta_{0, k}+x^{\top} \beta_{k}\right) }{ \sum_{l=1}^K \exp\left( \beta_{0,l}+x^{\top} \beta_{l}\right) }, \quad k=1, \ldots, K, 
\end{equation}
where $K=3$ for the current example, and $\beta_{0,k}\in \mathbb{R}$ and $\beta_{k}\in \mathbb{R}^{p}$ are unknown regression parameters.

The lasso estimator is the maximizer of the following penalized log-likelihood 
\begin{equation}\label{eq: multi lasso}
\max_{\left\{\beta_{0, k}, \beta_{k}\right\}_{k=1}^K \in \mathbb{R}^{K(p+1)}}\left[\frac{1}{N} \sum_{i=1}^N \log \P\left( Y = Y_{i} \mid X=X_i \right) -\lambda \sum_{k=1}^{K} \| \beta_{k}\|_{1} \right],
\end{equation}
where $\lambda$ is a tuning parameter. 
If for any $k\leq K$, the $j$th entry of the estimated $\beta_{k}$ is nonzero, then the $j$th covariate is viewed as important and will be selected. 
With different values of the tuning parameter $\lambda$, we may select different sets of important variables. 
See \citet{hastie2009elements} for a general introduction to cross-validation for selecting tuning parameters.

We conduct the lasso-multinomial logistic regression with cross-validation using the function \texttt{cv.glmnet()} in the celebrated \textbf{R} package \texttt{glmnet} with the argument \texttt{family} set to multinomial. 
For the BCR dataset, if we apply 5-fold cross-validation and compute the mean deviance on the left-out data, the minimizer $\hat{\lambda}_{min}$ of the cross-validation error suggests selecting 13 genes. 
The minimizer $\hat{\lambda}_{min}$ is expected to select the model with a good prediction performance. 
On the other hand, the so-called ``one-standard-error" rule -- selecting the largest value of $\lambda$ such that the cross-validation error is within one standard error of the minimum -- is often applied when selecting the best model to achieve model parsimony.

The two genes selected by the ``one-standard-error" rule are \verb|209835_x_at| and \verb|218478_s_at|; they form the set $\mc{S}$. Furthermore,  $\mc{T}$ is defined to be the 11 genes selected by $\hat{\lambda}_{\min}$ but not by the ``one-standard-error" rule, which includes \verb|201091_s_at|, \verb|204540_at|, \verb|205034_at|, \verb|207118_s_at|, \verb|209524_at|, \verb|209835_x_at|, \verb|210028_s_at|, \verb|214919_s_at|, \verb|215633_x_at|, \verb|217471_at|, \verb|218430_s_at|, \verb|218478_s_at|, \verb|218533_s_at|. 

It is then of interest to determine whether the genes selected by the model tuned by the ``one-standard-error" rule sufficiently determine the response variable. 
The inference problem is to test 
\begin{equation}\label{eq: cancer CI}
H_{0}: Y\indp X_{\mc{T}}\mid X_{\mc{S}}. 
\end{equation}

A natural idea for testing \eqref{eq: cancer CI} would be to fit two nested (ordinary) multinomial logistic models, one with $X_{\mc{S}}$ only and the other with both $X_{\mc{S}}$ and $X_{\mc{T}}$, and perform the likelihood ratio test. The test statistic may then be converted to an asymptotic p-value by comparing it with a Chi-squared distribution. 
This approach results in a p-value of $5\times 10^{-5}$, suggesting that $H_{0}$ should be rejected. 
However, the theoretical guarantee of this asymptotic p-value requires the sample size $n$ to diverge while the models are fixed. In the current problem, $n$ is not large enough for the approximate validity of this p-value. Instead, we turn to the $G$-CRT for its wide applicability instead. 
When implementing the $G$-CRT, 
we use a complete graph for $G$ because the dimension of covariates $X_{\mc{T}\cup \mc{S}}$ is $13$, much smaller than the sample size $n$. 
Consequently, the modeling assumption is that $X_{\mc{T}\cup \mc{S}}$ follows a multivariate Gaussian distribution. 
Given that $Y$ is a multi-class response, we can conduct the $G$-CRT with the RF statistic. 
For comparison, we also conduct the Bonferroni-adjusted dCRT with either Gaussian Lasso models or random forests by treating the response as a continuous variable. 

\subsubsection{Verifying Type I error control}

To check the Type I error control of all considered tests,  we perform a simulation study to check the Type I error control of the considered tests. 
Specifically, we first fit the lasso-multinomial logistic regression of $Y$ on $X_{\mc{S}}$ with the penalty parameter tuned by 5-fold cross-validation. This results in the regression coefficient estimates $\hat{\beta}_{\mc{S},k}$ for $k=1,\ldots, K$ (recall that $K=3$ for this model). 
In the experiment, we repeatedly sample new covariates $\tilde{X}_{\mc{T}\cup \mc{S}}$ from a multivariate normal distribution using the sample mean and the sample covariance matrix of the observed $X_{\mc{T}\cup \mc{S}}$, and subsequently generate a new response $\tilde{Y}$ from the multinomial logistic regression model using $\tilde{X}_{\mc{S}}$ and the estimated regression coefficients $\hat{\beta}_{\mc{S},k}$ ($k=1,\ldots, K$). We then compute the p-values of the considered tests, including the asymptotic Chi-square p-value of the likelihood ratio test (LRT), the graphical conditional randomization test ($G$-CRT) with the \textbf{RF} statistic, and the two variants of the dCRT that treat the response variable as a continuous variable.

We repeat the experiment 400 times, and report the rejection proportions of these tests at the significance level $\alpha=0.05$ in Table~\ref{tab:appendix BCR type 1}. It shows that the asymptotic likelihood ratio test is not valid and will lead to an inflated Type I error. This is further illustrated by the histogram in Figure~\ref{fig: BCR pvalue lrt}, which shows that the asymptotic p-value of the LRT is far from being uniformly distributed. On the other hand, the $G$-CRT and the two dCRTs control the Type I error, as predicted by the theory.

\begin{table}[t]
    \centering
    \caption{Sizes of various conditional independence tests on simulated BCR data based on 400 replications (significance level $\alpha=0.05$). Standard errors are placed in parentheses}
    \label{tab:appendix BCR type 1}
    \begin{tabular}{lcccc}
\hline
 & \multicolumn{4}{c}{Method} \\ 
  & LRT & $G$-CRT (RF) & dCRT (L1) & \multicolumn{1}{c}{dCRT (RF)} \\ 
\hline
\hline    & 0.189 (0.016) & 0.055 (0.015) & 0.048 (0.018) & 0.050 (0.018) \\
\hline 
\end{tabular}

\end{table}

\begin{figure}[hbtp]
    \centering
    \includegraphics[scale=0.5]{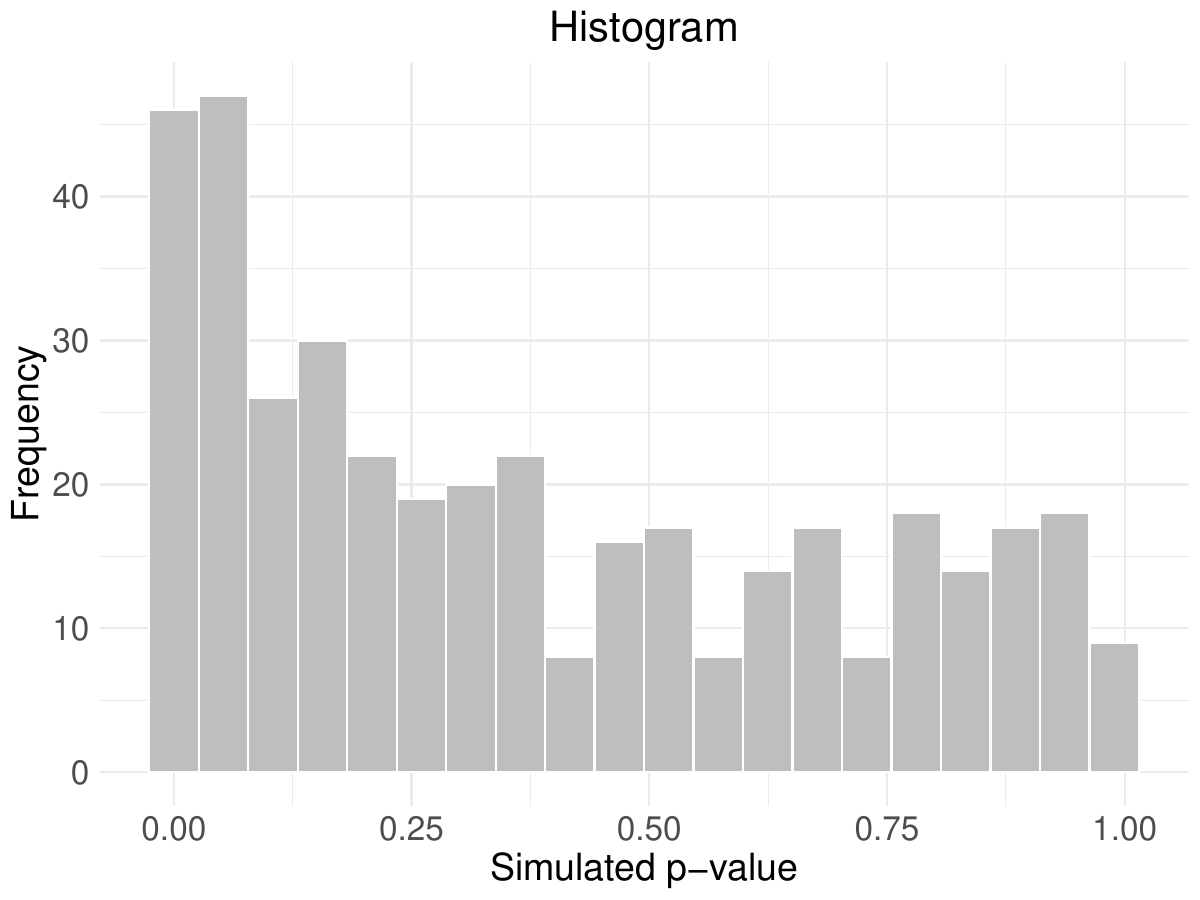}
    \caption{Histogram of the p-value of the asymptotic likelihood ratio test on simulated BCR data}
    \label{fig: BCR pvalue lrt}
\end{figure}

\subsubsection{Results of tests}

While the applicability of the likelihood ratio test is refuted for the BCR dataset, the $G$-CRT results in a p-value of $0.002$. In contrast, the dCRT with Gaussian Lasso models yields a p-value of $0.225$, and the dCRT with random forests yields $0.242$. 

In conclusion, the $G$-CRT rejects $H_0$ at the significance level $\alpha=0.05$, whereas the two dCRTs do not. The rejection decision from the $G$-CRT is more reliable than that from the asymptotic likelihood ratio test, since the latter does not control the Type I error given the sample size and models in the current context. 
Furthermore, the Bonferroni-adjusted dCRTs fail to detect the conditional dependence of the response $Y$ on the covariates $X_{\mc{T}}$, while the $G$-CRT effectively captures the joint signal. 
This example illustrates the efficiency and the broad applicability of our proposed method in detecting joint signals for testing conditional independence.


\begin{thebibliography}{77}
\providecommand{\natexlab}[1]{#1}
\providecommand{\url}[1]{\texttt{#1}}
\expandafter\ifx\csname urlstyle\endcsname\relax
  \providecommand{\doi}[1]{doi: #1}\else
  \providecommand{\doi}{doi: \begingroup \urlstyle{rm}\Url}\fi

\bibitem[Cand{\`e}s et~al.(2018)Cand{\`e}s, Fan, Janson, and Lv]{candes_panning_2017}
Emmanuel Cand{\`e}s, Yingying Fan, Lucas Janson, and Jinchi Lv.
\newblock Panning for gold: `model-{X}' knockoffs for high dimensional controlled variable selection.
\newblock \emph{Journal of the Royal Statistical Society Series B: Statistical Methodology}, 80\penalty0 (3):\penalty0 551--577, 01 2018.

\bibitem[Imai and Van~Dyk(2004)]{imai2004causal}
Kosuke Imai and David~A Van~Dyk.
\newblock Causal inference with general treatment regimes: Generalizing the propensity score.
\newblock \emph{Journal of the American Statistical Association}, 99\penalty0 (467):\penalty0 854--866, 2004.

\bibitem[Goeman et~al.(2004)Goeman, {van de Geer}, {de Kort}, and {van Houwelingen}]{goemanGlobalTestGroups2004}
Jelle~J. Goeman, Sara~A. {van de Geer}, Floor {de Kort}, and Hans~C. {van Houwelingen}.
\newblock A global test for groups of genes: {Testing} association with a clinical outcome.
\newblock \emph{Bioinformatics}, 20\penalty0 (1):\penalty0 93--99, January 2004.

\bibitem[Dose and Cincotti(2005)]{dose_clustering_2005}
Christian Dose and Silvano Cincotti.
\newblock Clustering of financial time series with application to index and enhanced index tracking portfolio.
\newblock \emph{Physica A: Statistical Mechanics and its Applications}, 355\penalty0 (1):\penalty0 145--151, 2005.

\bibitem[Shah and Peters(2020)]{shah2020}
Rajen~D. Shah and Jonas Peters.
\newblock {The hardness of conditional independence testing and the generalised covariance measure}.
\newblock \emph{The Annals of Statistics}, 48\penalty0 (3):\penalty0 1514 -- 1538, 2020.
\newblock \doi{10.1214/19-AOS1857}.

\bibitem[Bates et~al.(2020)Bates, Sesia, Sabatti, and Cand{\`e}s]{bates2020causal}
Stephen Bates, Matteo Sesia, Chiara Sabatti, and Emmanuel Cand{\`e}s.
\newblock Causal inference in genetic trio studies.
\newblock \emph{Proceedings of the National Academy of Sciences}, 117\penalty0 (39):\penalty0 24117--24126, 2020.

\bibitem[Shaer et~al.(2023)Shaer, Maman, and Romano]{shaer_model-x_2023}
Shalev Shaer, Gal Maman, and Yaniv Romano.
\newblock Model-{X} {Sequential} {Testing} for {Conditional} {Independence} via {Testing} by {Betting}.
\newblock In \emph{Proceedings of {The} 26th {International} {Conference} on {Artificial} {Intelligence} and {Statistics}}, pages 2054--2086. PMLR, April 2023.
\newblock URL \url{https://proceedings.mlr.press/v206/shaer23a.html}.

\bibitem[Tansey et~al.(2022)Tansey, Veitch, Zhang, Rabadan, and Blei]{tansey_holdout_2022}
Wesley Tansey, Victor Veitch, Haoran Zhang, Raul Rabadan, and David~M. Blei.
\newblock The holdout randomization test for feature selection in black box models.
\newblock \emph{Journal of Computational and Graphical Statistics}, 31\penalty0 (1):\penalty0 151--162, 2022.

\bibitem[Liu et~al.(2022)Liu, Katsevich, Janson, and Ramdas]{liu_fast_2022}
Molei Liu, Eugene Katsevich, Lucas Janson, and Aaditya Ramdas.
\newblock Fast and powerful conditional randomization testing via distillation.
\newblock \emph{Biometrika}, 109\penalty0 (2):\penalty0 277--293, 2022.

\bibitem[Niu et~al.(2024{\natexlab{a}})Niu, Choudhury, and Katsevich]{niu2024spa}
Ziang Niu, Jyotishka~Ray Choudhury, and Eugene Katsevich.
\newblock Computationally efficient and statistically accurate conditional independence testing with spacrt.
\newblock \emph{arXiv preprint arXiv:2407.08911}, 2024{\natexlab{a}}.

\bibitem[Bellot and van~der Schaar(2019)]{bellot_conditional_2019}
Alexis Bellot and Mihaela van~der Schaar.
\newblock Conditional {Independence} {Testing} using {Generative} {Adversarial} {Networks}.
\newblock In \emph{Advances in {Neural} {Information} {Processing} {Systems}}, volume~32. Curran Associates, Inc., 2019.
\newblock URL \url{https://papers.nips.cc/paper_files/paper/2019/hash/dc87c13749315c7217cdc4ac692e704c-Abstract.html}.

\bibitem[Berrett et~al.(2020)Berrett, Wang, Barber, and Samworth]{berrett_conditional_2020}
Thomas~B. Berrett, Yi~Wang, Rina~Foygel Barber, and Richard~J. Samworth.
\newblock The conditional permutation test for independence while controlling for confounders.
\newblock \emph{Journal of the Royal Statistical Society Series B: Statistical Methodology}, 82\penalty0 (1):\penalty0 175--197, 2020.

\bibitem[Niu et~al.(2024{\natexlab{b}})Niu, Chakraborty, Dukes, and Katsevich]{niu_reconciling_2024}
Ziang Niu, Abhinav Chakraborty, Oliver Dukes, and Eugene Katsevich.
\newblock Reconciling model-{X} and doubly robust approaches to conditional independence testing.
\newblock \emph{The Annals of Statistics}, 52\penalty0 (3):\penalty0 895--921, June 2024{\natexlab{b}}.
\newblock ISSN 0090-5364, 2168-8966.
\newblock \doi{10.1214/24-AOS2372}.
\newblock URL \url{https://projecteuclid.org/journals/annals-of-statistics/volume-52/issue-3/Reconciling-model-X-and-doubly-robust-approaches-to-conditional-independence/10.1214/24-AOS2372.full}.

\bibitem[Li and Liu(2023)]{li_maxway_2023}
Shuangning Li and Molei Liu.
\newblock {Maxway CRT: improving the robustness of the model-X inference}.
\newblock \emph{Journal of the Royal Statistical Society Series B: Statistical Methodology}, page qkad081, 08 2023.

\bibitem[Barber and Janson(2022)]{barber_testing_2021}
Rina~Foygel Barber and Lucas Janson.
\newblock {Testing goodness-of-fit and conditional independence with approximate co-sufficient sampling}.
\newblock \emph{The Annals of Statistics}, 50\penalty0 (5):\penalty0 2514 -- 2544, 2022.

\bibitem[Wang and Janson(2022)]{wang2022high}
Wenshuo Wang and Lucas Janson.
\newblock A high-dimensional power analysis of the conditional randomization test and knockoffs.
\newblock \emph{Biometrika}, 109\penalty0 (3):\penalty0 631--645, 2022.

\bibitem[Katsevich and Ramdas(2022)]{katsevich_power_2022}
Eugene Katsevich and Aaditya Ramdas.
\newblock On the power of conditional independence testing under model-{X}.
\newblock \emph{Electronic Journal of Statistics}, 16\penalty0 (2):\penalty0 6348--6394, January 2022.
\newblock ISSN 1935-7524, 1935-7524.
\newblock \doi{10.1214/22-EJS2085}.
\newblock URL \url{https://projecteuclid.org/journals/electronic-journal-of-statistics/volume-16/issue-2/On-the-power-of-conditional-independence-testing-under-model-X/10.1214/22-EJS2085.full}.

\bibitem[Rosenbaum and Rubin(1983)]{rosenbaum_central_1983}
Paul~R. Rosenbaum and Donald~B. Rubin.
\newblock The central role of the propensity score in observational studies for causal effects.
\newblock \emph{Biometrika}, 70\penalty0 (1):\penalty0 41--55, 1983.

\bibitem[Huang and Janson(2020)]{huang_relaxing_2020}
Dongming Huang and Lucas Janson.
\newblock Relaxing the assumptions of knockoffs by conditioning.
\newblock \emph{The Annals of Statistics}, 48\penalty0 (5):\penalty0 3021--3042, 2020.

\bibitem[Cai and Guo(2017)]{10.1214/16-AOS1461}
T.~Tony Cai and Zijian Guo.
\newblock {Confidence intervals for high-dimensional linear regression: Minimax rates and adaptivity}.
\newblock \emph{The Annals of Statistics}, 45\penalty0 (2):\penalty0 615 -- 646, 2017.
\newblock \doi{10.1214/16-AOS1461}.
\newblock URL \url{https://doi.org/10.1214/16-AOS1461}.

\bibitem[Cai and Guo(2018)]{10.1214/17-AOS1604}
T.~Tony Cai and Zijian Guo.
\newblock {Accuracy assessment for high-dimensional linear regression}.
\newblock \emph{The Annals of Statistics}, 46\penalty0 (4):\penalty0 1807 -- 1836, 2018.
\newblock \doi{10.1214/17-AOS1604}.
\newblock URL \url{https://doi.org/10.1214/17-AOS1604}.

\bibitem[Bradic et~al.(2022)Bradic, Fan, and Zhu]{bradic2022testability}
Jelena Bradic, Jianqing Fan, and Yinchu Zhu.
\newblock Testability of high-dimensional linear models with nonsparse structures.
\newblock \emph{Annals of statistics}, 50\penalty0 (2):\penalty0 615, 2022.

\bibitem[Wen et~al.(2022)Wen, Wang, and Wang]{wen2022residual}
Kaiyue Wen, Tengyao Wang, and Yuhao Wang.
\newblock Residual permutation test for high-dimensional regression coefficient testing.
\newblock \emph{arXiv preprint arXiv:2211.16182}, 2022.

\bibitem[Celentano and Montanari(2024)]{10.1093/jrsssb/qkae039}
Michael Celentano and Andrea Montanari.
\newblock Correlation adjusted debiased lasso: debiasing the lasso with inaccurate covariate model.
\newblock \emph{Journal of the Royal Statistical Society Series B: Statistical Methodology}, 86\penalty0 (5):\penalty0 1455--1482, 06 2024.
\newblock ISSN 1369-7412.
\newblock \doi{10.1093/jrsssb/qkae039}.
\newblock URL \url{https://doi.org/10.1093/jrsssb/qkae039}.

\bibitem[Gao and Wang(2022)]{10.1214/22-AOS2216}
Fengnan Gao and Tengyao Wang.
\newblock {Two-sample testing of high-dimensional linear regression coefficients via complementary sketching}.
\newblock \emph{The Annals of Statistics}, 50\penalty0 (5):\penalty0 2950 -- 2972, 2022.
\newblock \doi{10.1214/22-AOS2216}.
\newblock URL \url{https://doi.org/10.1214/22-AOS2216}.

\bibitem[Verzelen and Villers(2010)]{verzelen_goodness--fit_2010}
Nicolas Verzelen and Fanny Villers.
\newblock Goodness-of-fit tests for high-dimensional gaussian linear models.
\newblock \emph{The Annals of Statistics}, 38\penalty0 (2):\penalty0 704--752, 2010.

\bibitem[Dempster(1972)]{Dempster1972}
Arthur~P. Dempster.
\newblock Covariance selection.
\newblock \emph{Biometrics}, 28\penalty0 (1):\penalty0 157--175, 1972.

\bibitem[Lin et~al.(2025)Lin, Li, Tian, and Huang]{lin_goodness--fit_2025}
Xiaotong Lin, Weihao Li, Fangqiao Tian, and Dongming Huang.
\newblock Goodness-of-fit tests for high-dimensional gaussian graphical models via exchangeable sampling, 2025.
\newblock URL \url{https://arxiv.org/abs/2312.01815}.

\bibitem[Benjamini and Hochberg(1995)]{benjamini1995controlling}
Yoav Benjamini and Yosef Hochberg.
\newblock Controlling the false discovery rate: a practical and powerful approach to multiple testing.
\newblock \emph{Journal of the Royal statistical society: series B (Methodological)}, 57\penalty0 (1):\penalty0 289--300, 1995.

\bibitem[Benjamini and Yekutieli(2001)]{benjaminiControl2001}
Yoav Benjamini and Daniel Yekutieli.
\newblock {The control of the false discovery rate in multiple testing under dependency}.
\newblock \emph{The Annals of Statistics}, 29\penalty0 (4):\penalty0 1165 -- 1188, 2001.
\newblock \doi{10.1214/aos/1013699998}.

\bibitem[Wang and Ramdas(2022)]{wang_false_2022}
Ruodu Wang and Aaditya Ramdas.
\newblock False discovery rate control with e-values.
\newblock \emph{Journal of the Royal Statistical Society: Series B (Statistical Methodology)}, 84\penalty0 (3):\penalty0 822--852, 2022.
\newblock ISSN 1467-9868.
\newblock \doi{10.1111/rssb.12489}.
\newblock URL \url{https://onlinelibrary.wiley.com/doi/abs/10.1111/rssb.12489}.

\bibitem[van~de Geer et~al.(2014)van~de Geer, B{\"u}hlmann, Ritov, and Dezeure]{van_de_geer_asymptotically_2014}
Sara van~de Geer, Peter B{\"u}hlmann, Ya’acov Ritov, and Ruben Dezeure.
\newblock {On asymptotically optimal confidence regions and tests for high-dimensional models}.
\newblock \emph{The Annals of Statistics}, 42\penalty0 (3):\penalty0 1166 -- 1202, 2014.

\bibitem[Cai et~al.(2023)Cai, Guo, and Ma]{cai2023statistical}
T~Tony Cai, Zijian Guo, and Rong Ma.
\newblock Statistical inference for high-dimensional generalized linear models with binary outcomes.
\newblock \emph{Journal of the American Statistical Association}, 118\penalty0 (542):\penalty0 1319--1332, 2023.

\bibitem[Chu et~al.(2024)Chu, Gu, Chen, Morrison, Candes, He, and Sabatti]{chu_second-order_2024}
Benjamin~B. Chu, Jiaqi Gu, Zhaomeng Chen, Tim Morrison, Emmanuel Candes, Zihuai He, and Chiara Sabatti.
\newblock Second-order group knockoffs with applications to {GWAS}, March 2024.
\newblock URL \url{http://arxiv.org/abs/2310.15069}.
\newblock arXiv:2310.15069 [q-bio, stat].

\bibitem[Agarwal et~al.(2013)Agarwal, Jiang, Tang, and Yang]{agarwal2013uncovering}
Vikas Agarwal, Wei Jiang, Yuehua Tang, and Baozhong Yang.
\newblock Uncovering hedge fund skill from the portfolio holdings they hide.
\newblock \emph{The Journal of Finance}, 68\penalty0 (2):\penalty0 739--783, 2013.

\bibitem[Wang et~al.(2005)Wang, Klijn, Zhang, Sieuwerts, Look, Yang, Talantov, Timmermans, Meijer-van Gelder, Yu, Jatkoe, Berns, Atkins, and Foekens]{wang_gene-expression_2005}
Yixin Wang, Jan G.~M. Klijn, Yi~Zhang, Anieta~M. Sieuwerts, Maxime~P. Look, Fei Yang, Dmitri Talantov, Mieke Timmermans, Marion~E. Meijer-van Gelder, Jack Yu, Tim Jatkoe, Els M. J.~J. Berns, David Atkins, and John~A. Foekens.
\newblock Gene-expression profiles to predict distant metastasis of lymph-node-negative primary breast cancer.
\newblock \emph{Lancet (London, England)}, 365\penalty0 (9460):\penalty0 671--679, 2005.

\bibitem[Barrett et~al.(2012)Barrett, Wilhite, Ledoux, Evangelista, Kim, Tomashevsky, Marshall, Phillippy, Sherman, Holko, et~al.]{barrett2012ncbi}
Tanya Barrett, Stephen~E Wilhite, Pierre Ledoux, Carlos Evangelista, Irene~F Kim, Maxim Tomashevsky, Kimberly~A Marshall, Katherine~H Phillippy, Patti~M Sherman, Michelle Holko, et~al.
\newblock {NCBI GEO}: archive for functional genomics data sets---update.
\newblock \emph{Nucleic acids research}, 41\penalty0 (D1):\penalty0 D991--D995, 2012.

\bibitem[Wainwright et~al.(2008)Wainwright, Jordan, et~al.]{wainwright2008graphical}
Martin~J Wainwright, Michael~I Jordan, et~al.
\newblock Graphical models, exponential families, and variational inference.
\newblock \emph{Foundations and Trends{\textregistered} in Machine Learning}, 1\penalty0 (1--2):\penalty0 1--305, 2008.

\bibitem[Ham et~al.(2022)Ham, Imai, and Janson]{ham_using_2022}
Dae~Woong Ham, Kosuke Imai, and Lucas Janson.
\newblock Using machine learning to test causal hypotheses in conjoint analysis.
\newblock \emph{{arXiv}.org}, 2022.

\bibitem[Zhang and Zhang(2014)]{zhang2014confidence}
Cun-Hui Zhang and Stephanie Zhang.
\newblock Confidence intervals for low dimensional parameters in high dimensional linear models.
\newblock \emph{Journal of the Royal Statistical Society Series B: Statistical Methodology}, 76\penalty0 (1):\penalty0 217--242, 2014.

\bibitem[Javanmard and Montanari(2014{\natexlab{a}})]{javanmard_confidence_2013}
Adel Javanmard and Andrea Montanari.
\newblock Confidence intervals and hypothesis testing for high-dimensional regression.
\newblock \emph{Journal of Machine Learning Research}, 15\penalty0 (82):\penalty0 2869--2909, 2014{\natexlab{a}}.

\bibitem[DiCiccio and Romano(2017)]{diciccio2017robust}
Cyrus~J DiCiccio and Joseph~P Romano.
\newblock Robust permutation tests for correlation and regression coefficients.
\newblock \emph{Journal of the American Statistical Association}, 112\penalty0 (519):\penalty0 1211--1220, 2017.
\newblock \doi{10.1080/01621459.2016.1202117}.

\bibitem[Lei and Bickel(2021)]{lei2021assumption}
Lihua Lei and Peter~J Bickel.
\newblock An assumption-free exact test for fixed-design linear models with exchangeable errors.
\newblock \emph{Biometrika}, 108\penalty0 (2):\penalty0 397--412, 2021.
\newblock \doi{10.1093/biomet/asaa079}.

\bibitem[Wen et~al.(2024)Wen, Wang, and Wang]{wen_residual_2024}
Kaiyue Wen, Tengyao Wang, and Yuhao Wang.
\newblock Residual permutation test for regression coefficient testing.
\newblock \emph{arXiv.org}, December 2024.
\newblock URL \url{https://www.proquest.com/publiccontent/docview/2742868090?pq-origsite=summon&sourcetype=Working%20Papers}.

\bibitem[Zhu and Bradic(2018)]{zhu2018linear}
Yinchu Zhu and Jelena Bradic.
\newblock Linear hypothesis testing in dense high-dimensional linear models.
\newblock \emph{Journal of the American Statistical Association}, 113\penalty0 (524):\penalty0 1583--1600, 2018.

\bibitem[Fukumizu et~al.(2007)Fukumizu, Gretton, Sun, and Sch{\"o}lkopf]{fukumizu2007kernel}
Kenji Fukumizu, Arthur Gretton, Xiaohui Sun, and Bernhard Sch{\"o}lkopf.
\newblock Kernel measures of conditional dependence.
\newblock In \emph{Advances in Neural Information Processing Systems 20 (NIPS)}, pages 489--496, Vancouver, Canada, 2007. Curran Associates, Inc.
\newblock URL \url{https://papers.nips.cc/paper/3340-kernel-measures-of-conditional-dependence}.

\bibitem[Zhang et~al.(2011)Zhang, Peters, Janzing, and Sch{\"o}lkopf]{zhang2011kernel}
Kun Zhang, Jonas Peters, Dominik Janzing, and Bernhard Sch{\"o}lkopf.
\newblock Kernel-based conditional independence test and application in causal discovery.
\newblock In \emph{Proceedings of the 27th Conference on Uncertainty in Artificial Intelligence (UAI)}, pages 804--813, Barcelona, Spain, 2011. AUAI Press.
\newblock URL \url{https://dl.acm.org/doi/10.5555/3020548.3020641}.

\bibitem[Doran et~al.(2014)Doran, Liang, Rogers, and Rubin]{doran2014permutation}
George Doran, David Liang, Tim Rogers, and Daniel~L. Rubin.
\newblock Permutation-based kernel conditional independence test.
\newblock In \emph{Proceedings of the 30th Conference on Uncertainty in Artificial Intelligence (UAI)}, pages 132--141, Quebec City, Canada, 2014. AUAI Press.
\newblock URL \url{https://dl.acm.org/doi/10.5555/3020751.3020766}.

\bibitem[Strobl et~al.(2019)Strobl, Zhang, and Visweswaran]{strobl2019approximate}
Eric~V Strobl, Kun Zhang, and Shyam Visweswaran.
\newblock Approximate kernel-based conditional independence tests for fast non-parametric causal discovery.
\newblock \emph{Journal of Causal Inference}, 7\penalty0 (1):\penalty0 20180017, 2019.

\bibitem[Scetbon et~al.(2022)Scetbon, Meunier, and Romano]{scetbon2022asymptotic}
Meyer Scetbon, Laurent Meunier, and Yaniv Romano.
\newblock An asymptotic test for conditional independence using analytic kernel embeddings.
\newblock In \emph{International Conference on Machine Learning}, pages 19328--19346. PMLR, 2022.

\bibitem[Sejdinovic et~al.(2013)Sejdinovic, Sriperumbudur, Gretton, and Fukumizu]{sejdinovic2013equivalence}
Dino Sejdinovic, Bharath~K. Sriperumbudur, Arthur Gretton, and Kenji Fukumizu.
\newblock Equivalence of distance-based and rkhs-based statistics in hypothesis testing.
\newblock \emph{The Annals of Statistics}, 41\penalty0 (5):\penalty0 2263--2291, 2013.
\newblock URL \url{https://projecteuclid.org/euclid.aos/1386857414}.

\bibitem[Wang et~al.(2015)Wang, Pan, Hu, Tian, and Zhang]{wang2015conditional}
Xueqin Wang, Wenliang Pan, Wenhao Hu, Yuan Tian, and Heping Zhang.
\newblock Conditional distance correlation.
\newblock \emph{Journal of the American Statistical Association}, 110\penalty0 (512):\penalty0 1726--1734, 2015.
\newblock URL \url{https://www.tandfonline.com/doi/full/10.1080/01621459.2014.993081}.

\bibitem[Kubkowski et~al.(2021)Kubkowski, Mielniczuk, and Teisseyre]{kubkowski2021gain}
Mariusz Kubkowski, Jan Mielniczuk, and Pawe{\l} Teisseyre.
\newblock How to gain on power: novel conditional independence tests based on short expansion of conditional mutual information.
\newblock \emph{Journal of Machine Learning Research}, 22\penalty0 (62):\penalty0 1--57, 2021.

\bibitem[Runge(2018)]{runge_conditional_2018}
Jakob Runge.
\newblock Conditional independence testing based on a nearest-neighbor estimator of conditional mutual information.
\newblock In \emph{Proceedings of the {Twenty}-{First} {International} {Conference} on {Artificial} {Intelligence} and {Statistics}}, pages 938--947. PMLR, March 2018.
\newblock URL \url{https://proceedings.mlr.press/v84/runge18a.html}.

\bibitem[Sen et~al.(2017)Sen, Suresh, Shanmugam, Dimakis, and Shakkettai]{sen_model-powered_2017}
Rajat Sen, Ananda~Theertha Suresh, Karthikeyan Shanmugam, Alexandres~G. Dimakis, and Sanjay Shakkettai.
\newblock Model-powered {Conditional} {Independence} test.
\newblock In \emph{Proceedings of the 31st {International} {Conference} on {Neural} {Information} {Processing} {Systems}}, {NIPS}'17, pages 2955--2965, Red Hook, NY, USA, 2017. Curran Associates Inc.
\newblock ISBN 9781510860964.

\bibitem[Dave et~al.(2004)Dave, Wright, Tan, Rosenwald, Gascoyne, Chan, Fisher, Braziel, Rimsza, Grogan, Miller, LeBlanc, Greiner, Weisenburger, Lynch, Vose, Armitage, Smeland, Kvaloy, Holte, Delabie, Connors, Lansdorp, Ouyang, Lister, Davies, Norton, {Muller-Hermelink}, Ott, Campo, Montserrat, Wilson, Jaffe, Simon, Yang, Powell, Zhao, Goldschmidt, Chiorazzi, and Staudt]{davePredictionSurvivalFollicular2004}
Sandeep~S. Dave, George Wright, Bruce Tan, Andreas Rosenwald, Randy~D. Gascoyne, Wing~C. Chan, Richard~I. Fisher, Rita~M. Braziel, Lisa~M. Rimsza, Thomas~M. Grogan, Thomas~P. Miller, Michael LeBlanc, Timothy~C. Greiner, Dennis~D. Weisenburger, James~C. Lynch, Julie Vose, James~O. Armitage, Erlend~B. Smeland, Stein Kvaloy, Harald Holte, Jan Delabie, Joseph~M. Connors, Peter~M. Lansdorp, Qin Ouyang, T.~Andrew Lister, Andrew~J. Davies, Andrew~J. Norton, H.~Konrad {Muller-Hermelink}, German Ott, Elias Campo, Emilio Montserrat, Wyndham~H. Wilson, Elaine~S. Jaffe, Richard Simon, Liming Yang, John Powell, Hong Zhao, Neta Goldschmidt, Michael Chiorazzi, and Louis~M. Staudt.
\newblock Prediction of survival in follicular lymphoma based on molecular features of tumor-infiltrating immune cells.
\newblock \emph{The New England Journal of Medicine}, 351\penalty0 (21):\penalty0 2159--2169, November 2004.
\newblock ISSN 1533-4406.
\newblock \doi{10.1056/NEJMoa041869}.

\bibitem[Ma et~al.(2007)Ma, Song, and Huang]{maSupervisedGroupLasso2007}
Shuangge Ma, Xiao Song, and Jian Huang.
\newblock Supervised group lasso with applications to microarray data analysis.
\newblock \emph{BMC bioinformatics}, 8\penalty0 (60):\penalty0 1--17, 2007.

\bibitem[Yuan and Lin(2006)]{yuanModelSelectionEstimation2006}
Ming Yuan and Yi~Lin.
\newblock Model selection and estimation in regression with grouped variables.
\newblock \emph{Journal of the Royal Statistical Society. Series B. Statistical Methodology}, 68\penalty0 (1):\penalty0 49--67, January 2006.

\bibitem[Meier et~al.(2008)Meier, {van de Geer}, and B{\"u}hlmann]{meierGroupLassoLogistic2008}
Lukas Meier, Sara {van de Geer}, and Peter B{\"u}hlmann.
\newblock The group {Lasso} for logistic regression.
\newblock \emph{Journal of the Royal Statistical Society. Series B. Statistical Methodology}, 70\penalty0 (1):\penalty0 53--71, February 2008.

\bibitem[Dai and Barber(2016)]{dai2016knockoff}
Ran Dai and Rina Barber.
\newblock The knockoff filter for fdr control in group-sparse and multitask regression.
\newblock In \emph{International conference on machine learning}, pages 1851--1859. PMLR, 2016.

\bibitem[Spector and Janson(2022)]{spector2022powerful}
Asher Spector and Lucas Janson.
\newblock Powerful knockoffs via minimizing reconstructability.
\newblock \emph{The Annals of Statistics}, 50\penalty0 (1):\penalty0 252--276, 2022.

\bibitem[Barber and Candès(2015)]{barber_controlling_2015}
Rina~Foygel Barber and Emmanuel~J. Candès.
\newblock Controlling the false discovery rate via knockoffs.
\newblock \emph{The Annals of Statistics}, 43\penalty0 (5), October 2015.
\newblock ISSN 0090-5364.
\newblock \doi{10.1214/15-AOS1337}.
\newblock URL \url{http://arxiv.org/abs/1404.5609}.
\newblock arXiv:1404.5609 [math, stat].

\bibitem[Eaton(1983)]{eaton1983}
Morris~L. Eaton.
\newblock \emph{Multivariate statistics: a vector space approach}.
\newblock Wiley, New York, 1983.
\newblock ISBN 0471027766;9780471027768;.

\bibitem[Ledoux(2001)]{ledoux2001concentration}
Michel Ledoux.
\newblock \emph{The Concentration of Measure Phenomenon}, volume~89 of \emph{Mathematical Surveys and Monographs}.
\newblock American Mathematical Society, Providence, RI, 2001.
\newblock ISBN 978-0-8218-3792-4.
\newblock URL \url{https://bookstore.ams.org/surv-89-s}.

\bibitem[G{\"o}tze and Sambale(2023)]{10.1214/23-EJP966}
Friedrich G{\"o}tze and Holger Sambale.
\newblock {Higher order concentration on Stiefel and Grassmann manifolds}.
\newblock \emph{Electronic Journal of Probability}, 28\penalty0 (none):\penalty0 1 -- 30, 2023.
\newblock \doi{10.1214/23-EJP966}.
\newblock URL \url{https://doi.org/10.1214/23-EJP966}.

\bibitem[Birg{\'e}(2001)]{birge2001alternative}
Lucien Birg{\'e}.
\newblock An alternative point of view on lepski's method.
\newblock \emph{Lecture Notes-Monograph Series}, pages 113--133, 2001.

\bibitem[Shao(2003)]{shaoMathematicalStatistics2003}
Jun Shao.
\newblock \emph{Mathematical {{Statistics}}}.
\newblock Springer {{Texts}} in {{Statistics}}. {Springer}, {New York, NY}, 2003.

\bibitem[Javanmard and Montanari(2014{\natexlab{b}})]{javanmard2014confidence}
Adel Javanmard and Andrea Montanari.
\newblock Confidence intervals and hypothesis testing for high-dimensional regression.
\newblock \emph{The Journal of Machine Learning Research}, 15\penalty0 (1):\penalty0 2869--2909, 2014{\natexlab{b}}.

\bibitem[B{\"u}hlmann(2013)]{buhlmann2013}
Peter B{\"u}hlmann.
\newblock {Statistical significance in high-dimensional linear models}.
\newblock \emph{Bernoulli}, 19\penalty0 (4):\penalty0 1212 -- 1242, 2013.

\bibitem[McCullagh and Nelder(1989)]{mccullagh1989generalized}
Peter McCullagh and John~A Nelder.
\newblock \emph{{Generalized Linear Models}}, volume~37 of \emph{{Monographs of Statistics and Applied Probability}}.
\newblock CRC Press, 2 edition, 1989.

\bibitem[Hastie and Pregibon(2017)]{HastiePregibon1992}
Trevor~J Hastie and Daryl Pregibon.
\newblock Generalized linear models.
\newblock In \emph{Statistical models in S}, pages 195--247. Routledge, 2017.

\bibitem[Hegde and McDermott(2004)]{HEGDE20041043}
Shantaram~P. Hegde and John~B. McDermott.
\newblock {The market liquidity of DIAMONDS, Q's, and their underlying stocks}.
\newblock \emph{Journal of Banking \& Finance}, 28\penalty0 (5):\penalty0 1043--1067, 2004.

\bibitem[Fisher and Yates(1953)]{fisher1953statistical}
Ronald~Aylmer Fisher and Frank Yates.
\newblock \emph{Statistical tables for biological, agricultural, and medical research}.
\newblock Hafner Publishing Company, 1953.

\bibitem[Beasley et~al.(2009)Beasley, Erickson, and Allison]{beasley2009rank}
T~Mark Beasley, Stephen Erickson, and David~B Allison.
\newblock Rank-based inverse normal transformations are increasingly used, but are they merited?
\newblock \emph{Behavior genetics}, 39:\penalty0 580--595, 2009.

\bibitem[McCaw et~al.(2020)McCaw, Lane, Saxena, Redline, and Lin]{mccaw2020operating}
Zachary~R McCaw, Jacqueline~M Lane, Richa Saxena, Susan Redline, and Xihong Lin.
\newblock Operating characteristics of the rank-based inverse normal transformation for quantitative trait analysis in genome-wide association studies.
\newblock \emph{Biometrics}, 76\penalty0 (4):\penalty0 1262--1272, 2020.

\bibitem[Friedman et~al.(2010)Friedman, Hastie, and Tibshirani]{friedman2010}
Jerome~H. Friedman, Trevor Hastie, and Robert Tibshirani.
\newblock Regularization paths for generalized linear models via coordinate descent.
\newblock \emph{Journal of Statistical Software}, 33\penalty0 (1):\penalty0 1--22, 2010.

\bibitem[Hastie et~al.(2009)Hastie, Tibshirani, Friedman, and Friedman]{hastie2009elements}
Trevor Hastie, Robert Tibshirani, Jerome~H Friedman, and Jerome~H Friedman.
\newblock \emph{The elements of statistical learning: data mining, inference, and prediction}.
\newblock Springer, 2 edition, 2009.

\end{thebibliography}
\end{document}